\let\epsilon\varepsilon
\newtheorem{defi}{Definition}[section]
\newtheorem{rema}[defi]{Remark}
\newtheorem{lem}[defi]{Lemma}
\newtheorem{theo}[defi]{Theorem}
\newtheorem{ass}[defi]{Assumption}
\newtheorem{exam}[defi]{Example}
\newcommand{\gcal}{\mathcal{G}}
\def\-{\raisebox{.75pt}{-}}
\colorlet{g}{green!70}
\colorlet{p}{red!25}
\colorlet{vp}{red!60}
\colorlet{vg}{green!25}
\colorlet{b}{blue!20}
\colorlet{vb}{blue!50}
\begin{document}


\title{
Multinomial Backtesting of Distortion Risk Measures}

\author{
S\"oren Bettels \hspace{1.3cm} Sojung Kim \hspace{1.3cm} Stefan Weber\\[1.0ex] \textit{Leibniz Universit{\"a}t Hannover} }
\date{\today\thanks{House of Insurance \& Institute of Actuarial and Financial Mathematics, Leibniz Universit\"at Hannover, Welfengarten 1, 30167 Hannover, Germany. e-mail:  {\tt stefan.weber@insurance.uni-hannover.de}}}

\maketitle

\begin{abstract}
We extend the scope of risk measures for which backtesting methods are available by proposing a new approach for general distortion risk measures. The method relies on a stratification and randomization of risk levels. We illustrate the performance of our backtest in numerical case studies. 

\end{abstract}
\vspace{0.2cm}
\textbf{Keywords:} \\ Distortion Risk Measures, Backtesting, Multinomial Tests, Solvency Capital, Internal Models.

\doublespacing
	
	\section{Introduction}
In the face of risk and uncertainty financial institutions need to measure and quantify the risk they are exposed to. These measurements can be used to determine the capital that is needed as a buffer against adverse scenarios. Risk measures also help to compare portfolios or balance sheets to each other and to guide management decisions. In this paper, we extend the scope of risk measures for which backtesting models are available.

A variety of risk measures has been suggested in the literature. Value at Risk (V@R) and Average Value at Risk (AV@R) are the basis of different solvency regimes. An axiomatic investigation of monetary risk measures goes back to \cite{Artzner1999}, \cite{FS02}, and \cite{frittelli2002}, see also \cite{FS} and \cite{FW15}.  We introduce a general methodology for backtesting an important class of risk measures that includes the regulatory benchmarks V@R and AV@R as special cases: distortion risk measures (DRMs).

DRMs are an important building block for distribution-based coherent risk measures. This fact is a direct consequence of a representation theorem by \cite{Kusuoka2001}. DRMs also include many risk measures that are not necessarily convex, e.g.~V@R and Range Value at Risk (RV@R). DRMs are important examples of comonotonic risk measures, i.e., risk measures for which risks simply add up for comonotonic positions. 
 
In practice, risk measures are used to assess the risk of future positions and balance sheets. They are applied to probabilistic models that are estimated from past data. Backtesting refers to methodologies that compare observed values of positions to model-based risk assessments. The methods test the adequacy of the risk measurement models of banks and insurance companies in the face of  uncertainty and help to identify  misspecifications that impair risk assessments. Thereby financial firms can validate their forecasting tools for investments and balance sheets positions on both the asset and liability side. 

It is extremely important that risk measures adequately capture those properties that are needed in the risk management process. Risk measures quantify risk. By numerically representing risk, they facilitate the communication within firms, with customers, investors and regulators and provides a solid basis for decisions. Risk measures reduce the complexity of risk by focussing on specific features of random positions.  These issues are systematically studied in the axiomatic theory of risk measures that also provides a variety of examples with different properties, cf. \cite{FS} and \cite{FW15}. In this paper, we complement this literature by expanding the class of risk measures for which powerful backtesting algorithms are available. 

\noindent Our contributions are the following:
\begin{enumerate}[\quad (i)]
\item We propose a multinomial backtesting method for general DRMs which extends the non-randomized AV@R-backtest of \cite{KLM}. The method relies on a stratification and randomization of risk levels. Our stratified mixture approach captures  important characteristics of the DRM by weighting quantiles according to their contribution.
\item We illustrate the performance of our methods in numerical case studies. First, we consider fixed distributions of loss positions under the null hypothesis and under the alternatives and evaluate the size and the power of our test in this simple setting. Second, we apply our method to asset-liability-management.
\item In the special case of AV@R, our backtesting methodology deviates from previously considered multinomial backtests suggested in  \cite{KLM} due to the randomization of risk levels.  A numerical comparison of both methods shows that our approach improves the power of the backtests in all case studies. 
\end{enumerate}

\subsubsection*{Literature}

For reviews on the theory of monetary risk measures, including distortion risk measures, we refer to \cite{FS} and \cite{FW15}.  References to various seminal papers, including Value at Risk (V@R) and Average Value at Risk (AV@R), also called Expected Shortfall, are given in the \emph{Bibliographical notes} to Chapter 4 in \cite{FS}, and in the \emph{Notes and Comments} to Chapter 2 in \cite{MFE2015}. Range Value at Risk was introduced in \cite{CDS2010}. The class of distortion risk measures and the closely related mathematical notion of Choquet integrals are, e.g., discussed in \cite{Choquet}, \cite{Greco}, \cite{Schmeidler}, \cite{W1},  \cite{W2}, \cite{Denneberg}, \cite{AcerbiCarlo}, \cite{Dhaene06},   \cite{Song2006}, \cite{Song2009},  \cite{SongYan2009},   \cite{embrechtsliuwang2016}, \cite{We}, and \cite{KIM2021}. 
Some of our arguments rely on the representation of distortion risk measures as mixtures of V@R as described in \cite{DKLT}. A specific example that we consider is GlueV@R, a risk measure proposed in \cite{BSGS} and \cite{BSGS2}. Further examples of DRMs have continuously received attention, for example proportional hazard transform in \cite{W1,W2}, min/max V@R transforms in \cite{ChernyMadan2008}, and Range V@R  in \cite{BT2016}. We provide a list of such examples in Appendix \ref{ex_DRM} with the respective references, originally compiled by \cite{MS2017}. The robust representation of coherent distribution-based risk measures is due to \cite{Kusuoka2001}.

The literature on backtesting the V@R is extensive. \cite{kupiec1995} describes an algorithm that considers the time and size of the first V@R exceedance. \cite{C} shows that backtests of V@R can be based on the fact that the sequence of V@R exceedances are independent Bernoulli random variables under the null hypothesis. This observation is a starting point for numerous backtesting schemes, see, e.g.,~\cite{christoffersenpelletier2004}, \cite{wong2010}, \cite{berkowitzChristoffersenPelletier2011}, and \cite{ziggel2014}.

Some strategies for backtesting are associated to the notion of elicitability. The coherent risk measure AV@R is, however, not elicitable, see  \cite{weber2006} and \cite{gneiting2011}. All coherent and convex elicitable risk measures are characterized in  \cite{weber2006}, \cite{BB2015}, and \cite{DBBZ2016}. Discussions on the issue of the possibility of backtesting AV@R can be found in \cite{carver2013}, \cite{carver2014}, \cite{chen2014}, \cite{acerbiSzekely2014}, and \cite{fisslerZiegelGneiting2015}. 

Specific backtests for AV@R are developed in the following papers: \cite{EKt} and \cite{KLM} consider discrete V@R level exceedances to implicitly backtest AV@R. \cite{costanzinocurran2015} and \cite{costanzinocurran2017} develop a traffic light system that is based on a backtest of weighted V@R exceedance indicators. \cite{duEscanciano2016} and \cite{loserWiedZiggel2018} test the martingale property of the cumulated violation process. Another approach relies on testing a forecast of the probability density of the P\&L distribution. As suggested in \cite{dieboldGuntherTay1997}, plugging the observed profits and losses into the forecast cumulative distribution function should lead to a  uniform distribution. This strategy is refined in  \cite{berkowitz2001}, \cite{kerkhofMelenbergSchumacher2003} and \cite{gordyMcneil2017} in the context of risk management. 

Our approach modifies and extends multinomial backtests as considered in \cite{KLM} in the context of AV@R where test statistics are adopted from classical multinomial tests. Pearson's $\chi^2$-test was developed\footnote{Seven alternative proofs for the asymptotic behavior of the test statistic under the null hypothesis can be found in \cite{benmel2018}.} in \cite{P}.  A finite sample correction of Pearson's $\chi^2$-test was developed by \cite{N}. We also consider a likelihood ratio test, cf.~Section 10.3 of \cite{CasellaBergerRoger2002}. For a multinomial null hypothesis and multinomial alternatives these tests are compared in \cite{CK}.

	    \subsubsection*{Outline}
	    The paper is structured as follows: Section~\ref{sec:ch2} and Sections~\ref{app:DRM} \&~\ref{ex_DRM} in the appendix review the definition, properties and examples of DRMs. We pay particular attention to the representation of a distortion function as a sum of right- and left-continuous functions and the corresponding decomposition of the DRM. In Section 3 we develop the backtesting methodology for general DRMs. We begin in Section~\ref{sec:original} with a review of a multinomial backtesting scheme for AV@R that was introduced by \cite{KLM}. In the next two sections we describe our stratified and randomized extension: in Section~\ref{sec:right-conti} for left- and right-continuous distortion functions, in Section~\ref{sec:general} for general distortion functions. Section~\ref{sec:DistStudies} illustrates the tractability and performance of our method in numerical experiments. We show in the special case of AV@R that randomization may improve the power of backtests; the numerical experiments are adopted from \cite{KLM}, and the results are compared to their paper.
  We also consider an example of a DRM with distortion functions with jumps, GlueV@R, and illustrate the application of the proposed backtest.  Section~\ref{sec:ALM} applies the method to a more complex asset-liability-model. Section~\ref{sec:concl} concludes and discusses further research. All proofs and auxiliary material are collected in an appendix.

	\section{Distortion Risk Measures}\label{sec:ch2}

Our backtesting methodology focuses on Distortion Risk Measures\footnote{A risk measure $\rho$ is a mapping which assigns to $X
	 \in \mathcal{X}$ a quantitative measurement of the probability and severity of losses:
	    \begin{defi}\label{def:drm}
	        A function $\rho: \mathcal{X} \rightarrow \mathbb{R}$ is called a monetary risk measure if it satisfies
	        \begin{itemize}
	            \item[(i)] \textit{Monotonicity:} \hspace{3mm}
	            If  $X \leq Y,$ $X, Y \in \mathcal{X}$, then $\rho(X) \leq \rho(Y)$.
	            \item[(ii)] \textit{Cash-Invariance:} If $X \in \mathcal{X}$ and $m \in \mathbb{R}$, then  $\rho(X+m) = \rho(X) + m.$
	        \end{itemize}
	    \end{defi}
	  \noindent A risk measure is normalized if $\rho(0) = 0$. If $\mathcal{X}$ is a space of random variables on some probability space $(\Omega, \mathcal{F}, \mathsf{P})$, the risk measure is called distribution-based if $\rho(X) = \rho(Y)$ whenever the distributions of $X$ and $Y$ under $\mathsf{P}$ are equal, i.e., $\mathsf{P}^X = \mathsf{P}^Y$ for $X, Y \in \mathcal{X}$. An excellent reference on scalar monetary risk measures is the book \cite{FS}. For a brief survey we refer to \cite{FW15}.} (DRM), as described in \cite{W2} and \cite{AcerbiCarlo}, operating on some vector space $\mathcal{X}$ of financial positions or insurance losses. More precisely, $\mathcal{X}$ is a vector space of measurable functions on a measurable space $(\Omega, \mathcal{F})$; we always assume that constant functions are included in $\mathcal{X}$. Our sign convention is that positive values correspond to losses and negative values to gains. 
	  
	  DRMs are a subset of comonotonic risk measures, cf.~\cite{FS},  and the definition of DRMs and their link to comonotonic risk measures is briefly summarized in Appendix \ref{app:DRM}. For our backtesting methodology, we need the following decomposition theorem which is related to the continuity properties of distortion functions, an important issue that is also investigated in \cite{DKLT} :
	    
		\begin{theo}\label{theo:uniqueconvexdecomposition}
		    Let $g$ be a distortion function. Then there exists a unique decomposition
		    \[ g(u) = c_r g_{sr}(u) + c_l g_{sl}(u) + c_c g_c(u)
		    \hspace{10mm} \forall u \in [0,1], \]
		    where $g_{sr}$, $g_{sl}$ are right- resp. left-continuous step distortion functions, $g_c$ is a continuous distortion function, and $c_r, c_l, c_c \in [0,1]$, $c_r + c_l + c_c =1 $. \\
		    In particular, the corresponding distortion risk measures satisfy the following relation:
		    $$ \rho_g  =  c_r \rho_{g_{sr}}  +    c_l \rho_{g_{sl}} +  c_c \rho_{g_c}  .   $$
		\end{theo}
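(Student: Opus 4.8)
The plan is to reduce the statement to the classical decomposition of a bounded nondecreasing function into its jump part and its continuous part, while keeping careful track of the value the function takes \emph{at} each jump. Recall that a distortion function $g$ is nondecreasing with $g(0)=0$ and $g(1)=1$, so it has at most countably many discontinuities, and at each discontinuity $u_0$ one has $g(u_0^-)\le g(u_0)\le g(u_0^+)$. First I would split the total jump at $u_0$ into a left part $l_{u_0}:=g(u_0)-g(u_0^-)$ and a right part $r_{u_0}:=g(u_0^+)-g(u_0)$, using the conventions $l_0=r_1=0$ at the endpoints. The key observation is that the left part is realized by the \emph{right}-continuous indicator $l_{u_0}\mathbf{1}_{[u_0,1]}$, while the right part is realized by the \emph{left}-continuous indicator $r_{u_0}\mathbf{1}_{(u_0,1]}$; this is exactly the correspondence that will match the three target types.

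Based on this, I would define the three unnormalized building blocks
$$ G_r(u)=\sum_{v\le u} l_v, \qquad G_l(u)=\sum_{v<u} r_v, \qquad G_c(u)=g(u)-G_r(u)-G_l(u), $$
where the sums range over the (countable) set of jump points. I would then check that $G_r$ is nondecreasing, right-continuous, and a pure step function (right-continuity follows because the tail of the convergent series $\sum_v l_v$ vanishes), and symmetrically that $G_l$ is nondecreasing, left-continuous, and a step function. The substantive part is to verify that $G_c$ is both continuous and nondecreasing: at every point $u_0$ the left jump $l_{u_0}$ of $g$ is cancelled exactly by the jump of $G_r$ and the right jump $r_{u_0}$ by the jump of $G_l$, so all discontinuities disappear; and for $u_1<u_2$ a direct computation gives $G_c(u_2)-G_c(u_1)=[g(u_2)-g(u_1)]-\sum_{u_1<v\le u_2}l_v-\sum_{u_1\le v<u_2}r_v$, which equals the purely continuous increase of $g$ over $(u_1,u_2)$ and is therefore nonnegative. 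Setting $c_r=G_r(1)$, $c_l=G_l(1)$, $c_c=G_c(1)$ yields coefficients in $[0,1]$ that sum to $g(1)=1$, and dividing each block by its coefficient (when positive) produces normalized distortion functions $g_{sr},g_{sl},g_c$ of the three required types.

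For uniqueness, suppose $g=c_r g_{sr}+c_l g_{sl}+c_c g_c$ is any admissible decomposition. Evaluating left jumps at an arbitrary $u_0$, the continuous term contributes nothing and the left-continuous term $g_{sl}$ contributes nothing, so $g(u_0)-g(u_0^-)=c_r\bigl(g_{sr}(u_0)-g_{sr}(u_0^-)\bigr)$; summing these identities over all jump points forces $c_r g_{sr}=G_r$, hence $c_r=G_r(1)$ and $g_{sr}=G_r/c_r$. Matching right jumps symmetrically forces $c_l g_{sl}=G_l$, and subtraction then forces $c_c g_c=G_c$. Thus the coefficients are uniquely determined, and each function is uniquely determined whenever its coefficient is nonzero. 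Finally, the stated identity $\rho_g=c_r\rho_{g_{sr}}+c_l\rho_{g_{sl}}+c_c\rho_{g_c}$ follows from the linearity of the assignment $g\mapsto\rho_g$ in its distortion function, i.e.\ the linearity of the Choquet/quantile representation of a DRM in the integrator, applied to the decomposition just constructed.

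The hard part will be the bookkeeping around countably many, possibly accumulating, jumps together with the endpoint conventions at $0$ and $1$: I must ensure that the series defining $G_r$ and $G_l$ converge with the claimed one-sided continuity, and that the cancellation identity making $G_c$ continuous and the increment formula making $G_c$ monotone remain valid at accumulation points of the jump set rather than only at isolated jumps. Once this analytic step is secured, both existence and uniqueness reduce to elementary jump-matching, and the risk-measure relation is immediate from linearity.
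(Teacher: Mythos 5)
Your proposal is correct and follows essentially the same route as the paper: you assign the left jump $g(u_0)-g(u_0^-)$ to a right-continuous step part, the right jump $g(u_0^+)-g(u_0)$ to a left-continuous step part, take the continuous remainder, and normalize — exactly the paper's construction of $g_{sr}$, $g_{sl}$, $g_c$. Your added care about uniqueness via jump-matching and about the case of a vanishing coefficient only elaborates on points the paper treats briefly.
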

		\begin{proof} See Appendix \ref{app:ch2}.
		\end{proof}
		\begin{rema}\label{rem:decomp}
		The decomposition of a distortion function $g$ according to Theorem \ref{theo:uniqueconvexdecomposition} can be computed as follows:
		    \[ g_{sr}(u) =  a \sum_{r \leq u} (g(r) - g(r-))
		    , \;  g_{sl}(u) = b \sum_{l < u} 
		    (g(l+) - g(l)) ,  \;  g_{c}(u) = c \left( g(u) - \frac{1}{a} g_{sr}(u) - \frac{1}{b} g_{sl}(u)
		    \right) \]
		    with normalizing constants $a,b,c$. Setting $c_r = a^{-1}$, $c_l = b^{-1}, c_c = c^{-1}$,  we have $$g= c_r g_{sr} + c_l g_{sl} + c_c g_c.$$
		    An example of a convex decomposition 
		    $ g= d_l h_{l} + d_r h_{r} $
		    according to Theorem~\ref{theo:convexcomposition} can then be obtained  by setting
		   \begin{eqnarray*}  
		   d_l &= & \left( c_l + \frac{c_c }{2}\right)  , \quad h_l \;= \; \left( c_l + \frac{c_c}{2} \right)^{-1} \left( c_l  g_{sl}(u) +
		         \frac{c_c}{2} g_c(u) \right), \\
		         d_r &=&  \left( c_r + \frac{c_c }{2}\right), \quad h_r \;= \; \left( c_r + \frac{c_c }{2}\right)^{-1} \left( c_r g_{sr}(u) + \frac{c_c}{2} g_c(u) \right),
		         \end{eqnarray*}
		  where $d_l + d_r = 1$, $d_l, d_r \geq 0$ and $h_l, h_r$ are left- resp.~right-continuous distortion functions.
		\end{rema}
		
Examples of DRMs and their distortion functions are given in \cite{MS2017}, for example, and are also provided in Appendix~\ref{app:DRM}.

	\section{Multinomial Tests for Distortion Risk Measures}\label{sec:MTDRM}
		
		\subsection{Preliminaries}\label{sec:original}
		
	The goal of this paper is to develop backtesting methods for general distortion risk measures and to improve upon existing approaches.\footnote{A multilevel V@R backtest is proposed by \cite{EKt} as an implicit backtesting method for AV@R; \cite{KLM} approximate AV@R by the sum of multiple V@R values at different levels and refine the original algorithm.} 
The true losses are generated according to a stochastic process $L= (L_t)_{t = 1, \dots, n}$ whose law is unknown. The process is adapted to the information filtration of the insurance company or bank and is observable. More specifically, we suppose that the information filtration is generated by $L$. For the purpose of risk measurement the firm uses a stochastic process $M= (M_t)_{t = 1, \dots, n}$ with known distribution, also called model. The conditional cumulative distribution functions $F_{M_t \vert M_{t-1}, \dots, M_1}$ are assumed to be continuous. We are interested in computing risk measures for future losses at time $t$ conditional on the information available at time $t-1$.

We begin by studying $AV@R_\alpha$ for some small $\alpha \in (0,1)$. An approximation of the risk measure can be obtained by considering an equidistant partition $\alpha_j := \frac{j}{m+1} \alpha$, $j = 0, 1, \dots, m+1$, of the interval $[0, \alpha]$. The $AV@R_\alpha$ at time $t-1$ of future losses at time $t$ is approximated by
		\begin{align*}
			AV@R^{t-1}_\alpha(M_t) &= \frac{1}{\alpha} \int_{0}^\alpha 
	    	V@R^{t-1}_\lambda(M_t) d\lambda \\
	    	&\approx \frac{1}{m+1} \left( V@R^{t-1}_{\alpha_1}(M_t)+ \dots+ V@R^{t-1}_{\alpha_{m+1}}(M_t)
	    	\right),
		\end{align*}	
		where the superscripts in $AV@R^{t-1}_\alpha$ and $V@R^{t-1}_\lambda$ indicate that that risk measures are computed from the conditional distribution of the arguments given the past, i.e., in this case from $F_{M_t \vert M_{t-1}, \dots, M_1}$ using the observations $M_{t-1}=L_{t-1}$, $\dots$, $M_1 = L_1$.
		
\cite{C} analyzes backtesting of V@R. Fixing a level $\beta$, we ask if 
$$ \forall t=1,2, \dots, n :  \quad  \mathsf{P} \left( L_t > V@R^{t-1}_{\beta}(M_t)   | L_{t-1}, \dots, L_1 \right) \; = \;  \beta .  $$
This question is equivalent to testing the hypothesis that the exception indicators $\mathds{1}_{\left\{L_t > V@R^{t-1}_{\beta}(M_t) \right\}},$ $t=1,2, \dots, n, $ are a sequence of independent Bernoulli random variables with parameter  $\beta$. This hypothesis is, of course, satisfied if the stochastic processes $L$ and $M$ possess the same law. It is often rephrased in terms of the following two properties:
		\begin{enumerate}[i.]
			\item The \textit{unconditional coverage hypothesis}:
			$\mathsf{E}\left[ \mathds{1}_{\left\{L_t > V@R^{t-1}_{\beta}(M_t) \right\}} \right] = \beta$
			for all $t $.
			\item The \textit{independence hypothesis}: The random variables $\mathds{1}_{\left\{L_t > V@R^{t-1}_{\beta}(M_t) \right\}}$, $t=1,2, \dots, n,$ are independent.
		\end{enumerate}
		
We now return to the equidistant partition 	$(\alpha_j)_{j=0,1, \dots, m+1}$ and define the random number of breached levels at time $t$ by	
\begin{equation}\label{eq:breached}
X_t  \; := \;  \sum_{j=1}^{m+1} \mathds{1}_{\left\{ L_t > V@R^{t-1}_{\alpha_j}(M_t) \right\}},\end{equation}
taking values in $\{0,1, \dots, m, m+1\}$. If the unconditional coverage hypothesis holds, the distribution of $X_t$ is multinomial with one trial, i.e., 
$ X_t \sim MN \left(1, (1-  \alpha_{m+1} , \alpha_{m+1} - \alpha_m, \dots, \alpha_1)\right). $
The observed cell counts $(O_0, O_1, \dots, O_{m+1})$, defined as
	    $O_i = \sum_{t=1}^n \mathds{1}_{\{ X_t = i\}}$, $i  = 0,1, \dots, m +1$, 
	    follow a multinomial distribution with $n$ trials
\begin{equation}\label{classical0}
 (O_0, O_1, \dots, O_{m+1})  \; \sim \;  MN \left(n, (
 1- \alpha_{m+1}, \alpha_{m+1} - \alpha_m, \dots, \alpha_1
 )\right), \end{equation}
if the independence hypothesis holds. This result provides a backtesting methodology for the question whether or not the model computations $AV@R^{t-1}_\alpha(M_t)$, $t=1,2, \dots,n$, are a proper basis for assessing the true risk. \cite{KLM} consider the null hypothesis  \eqref{classical0}.
        
This approach can also be used to backtest the models evaluated with a DRM, but it would neglect the weights introduced by the distortion function. Where the distortion function puts more weight, the corresponding losses are more important. Based on this observation, we propose a novel extension to the multinomial V@R backtests to DRMs that is better adapted to the significance of misspecifications.	
		
\subsection{Left- and Right-Continuous Distortion Functions}
		\label{sec:right-conti}

We begin with the case of left-continuous distortion functions. The general case will be studied in Section~\ref{sec:general}. 

For any distortion function $g$ we denote by $\rho_g^{t-1}(M_t)$ the distortion risk measure $\rho_g$ evaluated for the conditional distribution $F_{M_t \vert M_{t-1}, \dots, M_1}$ using the observations $M_{t-1}=L_{t-1}$, $\dots$, $M_1 = L_1$. The corresponding conditional quantile function is denoted by $q_{M_t}^{t-1}$. If $g$ is left-continuous, then the DRM can -- according to Theorem \ref{theo:ContDRM} -- be expressed as 
		\begin{equation}\label{eq:drm_rc}
		    \rho^{t-1}_g(M_t) = \int_{[0,1]} q^{t-1}_{M_t} (u )  d\bar g(u)
		\end{equation}
for a right-continuous distribution function $\bar g$ on the interval $[0,1]$. Letting $\bar G$ be a real-valued random variable, independent of $L$ and $M$, with distribution function $\bar g$ and $G= 1- \bar G$, we may rewrite the DRM risk measurement as
		$ \rho^{t-1}_g(M_t) = \mathsf{E} \left[ q^{t-1}_{M_t} (\bar G) \right] =  \mathsf{E} \left[ q^{t-1}_{M_t} ( 1-  G) \right] .  $

We now introduce a backtesting methodology that is based on a discrete approximation, generalizing the approach for AV@R. Let $0 = \alpha_0 < \alpha_1 < \dots < \alpha_m < \alpha_{m+1}= 1 $ be a partition of $[0,1]$ with $g(\alpha_j) - 
		g(\alpha_{j-1}) \neq 0$ for all $j =1, \dots, m+1$.
The exceptions indicators are replaced by randomized exception indicators:  For $t, j$, we let $G_{t,j}$ be independent random variables, independent of $L$ and $M$, with distribution $\mathcal{L} (G | G \in  [ \alpha_{j-1}, \alpha_j ) )$, i.e., the distribution of $G$ conditional on $G \in [ \alpha_{j-1}, \alpha_j )$. The randomized exception indicators that can be used for backtesting are
		\[ \mathds{1}_{t,j} = 
		\begin{cases}
				1 \hspace{15mm}	& \text{ if } L_t > q^{t-1}_{M_t} (1- G_{t,j}) \\
				0 & else.
			\end{cases}, \quad  j =1, \dots, m+1.
		\]
Obviously, if $L$ and $M$ possess the same law, then for all $j$:
$$
			\mathsf{P} \left( L_t > q^{t-1}_{M_t} (1-G_{t,j})\right) = \mathsf{P} \left( M_t > q^{t-1}_{M_t} (1-G_{t,j}) 	\right).$$
In this case, versions of the unconditional coverage hypothesis and independence hypothesis hold that can be used as the basis for a backtest.
		\begin{lem}\label{lem:RightContException}
			If $M$ and $L$ possess the same law, then 
			\begin{equation}\label{eq:ucc} \mathsf{E} \left[ \mathds{1}_{t,j} \right]= \frac{\mathsf{E} \left[ G \mathds{1}_{\{ G \in [\alpha_{j-1}, \alpha_j) \}}\right]}{g(\alpha_j ) - g(\alpha_{j-1})} \end{equation}
			for all $t=1, \dots, n$, and the random vectors $(\mathds{1}_{t,j})_{j=1,\dots, m+1}$, $t= 1, \dots, n$, are independent.
		\end{lem}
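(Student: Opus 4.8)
The plan is to reduce each randomized indicator to a probability-integral-transform form and then read off both assertions from the independence and uniformity of the transformed variables. First I would introduce $U_t := F_{M_t \mid M_{t-1}, \dots, M_1}(L_t)$, evaluated at the observations $M_{t-1} = L_{t-1}, \dots, M_1 = L_1$, and set $\mathcal{F}_{t-1} := \sigma(L_1, \dots, L_{t-1})$. Under the null hypothesis $M$ and $L$ have the same law, so the conditional law of $L_t$ given $\mathcal{F}_{t-1}$ is exactly $F_{M_t \mid M_{t-1}, \dots, M_1}$; since this conditional distribution function is continuous, the sequential probability integral transform yields that $U_t$ is uniform on $[0,1]$ and independent of $\mathcal{F}_{t-1}$. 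An induction on $t$ then shows that $U_1, \dots, U_n$ are i.i.d. uniform, and by construction $\sigma(U_1,\dots,U_n) \subseteq \sigma(L)$ is independent of all randomizers $G_{t,j}$. Continuity of the conditional distribution also gives the almost sure identity $\mathds{1}_{\{L_t > q^{t-1}_{M_t}(u)\}} = \mathds{1}_{\{U_t > u\}}$ for each fixed $u \in (0,1)$; applied with the independent random level $u = 1 - G_{t,j}$, this rewrites $\mathds{1}_{t,j} = \mathds{1}_{\{U_t > 1 - G_{t,j}\}}$ almost surely.

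For the unconditional coverage identity \eqref{eq:ucc}, I would condition on $G_{t,j}$. Because $U_t$ is uniform and independent of $G_{t,j}$, one has $\mathsf{E}[\mathds{1}_{t,j} \mid G_{t,j}] = \mathsf{P}(U_t > 1 - G_{t,j} \mid G_{t,j}) = G_{t,j}$, whence $\mathsf{E}[\mathds{1}_{t,j}] = \mathsf{E}[G_{t,j}]$. Since $G_{t,j}$ has law $\mathcal{L}(G \mid G \in [\alpha_{j-1},\alpha_j))$, this expectation equals $\mathsf{E}[G \mid G \in [\alpha_{j-1},\alpha_j)] = \mathsf{E}[G\mathds{1}_{\{G \in [\alpha_{j-1},\alpha_j)\}}]/\mathsf{P}(G \in [\alpha_{j-1},\alpha_j))$. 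It then remains to identify the normalizing probability with the distortion increment: writing $G = 1 - \bar G$ with $\bar G$ distributed according to the right-continuous $\bar g$ and invoking the relation between $g$ and $\bar g$ from Theorem~\ref{theo:ContDRM}, I obtain $\mathsf{P}(G \in [\alpha_{j-1},\alpha_j)) = \mathsf{P}(1-\alpha_j < \bar G \le 1 - \alpha_{j-1}) = g(\alpha_j) - g(\alpha_{j-1})$, which is precisely the denominator in \eqref{eq:ucc}. The standing assumption $g(\alpha_j) - g(\alpha_{j-1}) \neq 0$ guarantees that this conditional law is well defined.

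For independence, I would note that by the reduction above each vector $(\mathds{1}_{t,j})_{j=1,\dots,m+1}$ is a fixed measurable function of the tuple $(U_t, G_{t,1}, \dots, G_{t,m+1})$. The variables $U_1, \dots, U_n$ are i.i.d. and independent of all $G_{t,j}$, and the $G_{t,j}$ are mutually independent; hence the tuples $(U_t, G_{t,1}, \dots, G_{t,m+1})$, $t = 1, \dots, n$, are mutually independent. Independence of the random vectors $(\mathds{1}_{t,j})_{j}$ across $t$ follows immediately, since independence is preserved under applying measurable maps to disjoint blocks of independent coordinates.

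The main obstacle is the first step: carefully justifying that under the null the transformed variables $U_t$ are genuinely i.i.d. uniform and independent of the randomizers, and that the indicator may be rewritten through $U_t$. This is where the continuity assumption on the conditional distribution functions $F_{M_t \mid M_{t-1}, \dots, M_1}$ is essential — it is needed both for the probability integral transform to produce exact uniforms (so that $\mathsf{E}[\mathds{1}_{t,j}\mid G_{t,j}] = G_{t,j}$ holds with no correction term) and for the almost sure identity linking $\{L_t > q^{t-1}_{M_t}(u)\}$ to $\{U_t > u\}$, since the exceptional set where the two indicators disagree forces $U_t = 1 - G_{t,j}$, an event of probability zero as $\mathsf{P}(U_t = 1 - G_{t,j}) = 0$. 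Once this transform is in place, both the coverage identity and the independence reduce to bookkeeping, modulo the short computation identifying $\mathsf{P}(G \in [\alpha_{j-1},\alpha_j))$ with $g(\alpha_j) - g(\alpha_{j-1})$.
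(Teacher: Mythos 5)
Your proof is correct, and it reaches both conclusions by a route that differs in a worthwhile way from the paper's. The paper's own argument works directly with the events $\{L_t > q^{t-1}_{M_t}(1-u)\}$: continuity of the conditional distribution functions gives $\mathsf{P}\bigl(M_t>q_{M_t}^{t-1}(1-u)\,\vert\, M_{t-1},\dots,M_1\bigr)=u$, from which $\mathsf{E}[\mathds{1}_{t,j}\,\vert\, G_{t,j}]=G_{t,j}$ follows exactly as in your computation; for independence the paper then passes to auxiliary indicators $\mathds{\tilde 1}_{t,j}$ built from $M$ and runs an induction over time, factorizing $\mathsf{E}\bigl[\prod_{t\in T}\mathds{1}_{t,j_t}\bigr]$ via the tower property with respect to $\sigma(M_{t-1},\dots,M_1,\gcal_t)$. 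You instead make the probability integral transform explicit: setting $U_t=F_{M_t\vert M_{t-1},\dots,M_1}(L_t)$, establishing that $U_1,\dots,U_n$ are i.i.d.\ uniform and independent of the randomizers, and rewriting $\mathds{1}_{t,j}=\mathds{1}_{\{U_t>1-G_{t,j}\}}$ almost surely. This buys you two things. First, the independence claim becomes a one-line consequence of the fact that the blocks $(U_t,G_{t,1},\dots,G_{t,m+1})$ are mutually independent, and it delivers independence of the \emph{full vectors} $(\mathds{1}_{t,j})_{j=1,\dots,m+1}$ across $t$ directly, whereas the paper's induction as written only factorizes products with a single index $j_t$ per time point. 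Second, you correctly isolate the only delicate point — that $\{L_t>q^{t-1}_{M_t}(u)\}$ and $\{U_t>u\}$ can differ only on $\{U_t=u\}$, a null event once $u$ is replaced by the independent level $1-G_{t,j}$ — which is exactly where the continuity assumption is consumed. Your identification of the normalizing constant $\mathsf{P}(G\in[\alpha_{j-1},\alpha_j))=g(\alpha_j)-g(\alpha_{j-1})$ via $G=1-\bar G$ and $\bar g(q)=1-g(1-q)$ is also correct and is a step the paper leaves implicit.
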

		\begin{proof}
		    See Appendix \ref{app:right-conti}.
		\end{proof}
The randomized numbers of breached levels are
		$X_t := \sum_{j=1}^{m+1} \mathds{1}_{t,j},$  $t \in \{1, \dots, n\}  . $
		\begin{lem}\label{lem:RightContBreachedLevels}
		Suppose that the unconditional coverage hypothesis \eqref{eq:ucc} holds and that the random vectors $(\mathds{1}_{t,j})_{j=1,\dots, m+1}$, $t= 1, \dots, n$, are independent.
		Then the number of breached levels $X_t$ satisfies
			\begin{eqnarray*}
			 \mathsf{P}( X_t \leq k) &= &1 - \frac{\mathsf{E} \left[ G \mathds{1}_{\{ G \in [
				\alpha_{m-k}, \alpha_{m-k+1}) \}} \right] }{g(
				\alpha_{m-k+1}) - g(\alpha_{m-k})}, \; \;  0 \le k \le m,  \\
			\mathsf{P}(X_t \leq m+1) &= & 1, 
			\end{eqnarray*}
			and the random variables $(X_t)_{t=1,\dots, n}$ are independent. 
		\end{lem}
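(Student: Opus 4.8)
The plan is to exploit a structural fact that makes the distribution of $X_t$ collapse to a single indicator: although the randomizers $G_{t,j}$ are \emph{independent} across $j$, they are automatically \emph{ordered}. Indeed, $G_{t,j}$ is supported on $[\alpha_{j-1},\alpha_j)$ and $G_{t,j+1}$ on $[\alpha_j,\alpha_{j+1})$, and since these cells are consecutive and disjoint we have $G_{t,j} < \alpha_j \le G_{t,j+1}$ almost surely, whence
$$ G_{t,1} < G_{t,2} < \dots < G_{t,m+1} \qquad \text{a.s.} $$
Applying the non-decreasing conditional quantile function $q^{t-1}_{M_t}$ to the reversed chain $1-G_{t,1} > \dots > 1-G_{t,m+1}$ shows that the randomized thresholds $q^{t-1}_{M_t}(1-G_{t,j})$ are non-increasing in $j$. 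Consequently, once $L_t$ exceeds one threshold it exceeds all later (smaller) ones, so the indicators are nested:
$$ \mathds{1}_{t,1} \le \mathds{1}_{t,2} \le \dots \le \mathds{1}_{t,m+1} \qquad \text{a.s.} $$

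Given this nesting, I would compute the law of $X_t$ directly. Because $(\mathds{1}_{t,j})_j$ is a non-decreasing $\{0,1\}$-sequence, its ones form a terminal block, so $X_t = \sum_{j=1}^{m+1}\mathds{1}_{t,j} \ge k$ holds if and only if the indicator in position $m+2-k$ is already $1$. Passing to complements gives, for $0 \le k \le m$,
$$ \{X_t \le k\} = \{\mathds{1}_{t,\,m+1-k} = 0\}, $$
hence $\mathsf{P}(X_t \le k) = 1 - \mathsf{E}[\mathds{1}_{t,\,m+1-k}]$. Substituting the unconditional-coverage value from Lemma~\ref{lem:RightContException} with $j = m+1-k = m-k+1$ produces exactly the claimed expression, since its numerator is $\mathsf{E}[G\mathds{1}_{\{G\in[\alpha_{m-k},\alpha_{m-k+1})\}}]$ and its denominator is $g(\alpha_{m-k+1})-g(\alpha_{m-k})$. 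The boundary case is trivial: $X_t \le m+1$ always holds by construction, so $\mathsf{P}(X_t \le m+1)=1$.

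For the independence assertion I would note that each $X_t$ is a fixed measurable function of the vector $(\mathds{1}_{t,j})_{j=1,\dots,m+1}$, and these vectors are independent across $t$ under the hypothesis of the lemma (as established in Lemma~\ref{lem:RightContException}); independence of the $(X_t)_{t=1,\dots,n}$ follows immediately from the fact that independence is preserved under componentwise measurable maps.

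The one genuinely nontrivial step is the nesting claim: the natural worry is that independence of the $G_{t,j}$ across $j$ might destroy the monotone ordering of the thresholds, but this is exactly ruled out by confining each $G_{t,j}$ to its own cell $[\alpha_{j-1},\alpha_j)$. Once that ordering is secured, everything else is bookkeeping with the monotonicity of $q^{t-1}_{M_t}$ and the index shift $k \mapsto m+1-k$; the only care point is to track the direction of monotonicity so that $\{X_t\le k\}$ matches the correct single indicator.
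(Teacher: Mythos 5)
Your proof is correct and follows essentially the same route as the paper: both arguments rest on the observation that the cells $[\alpha_{j-1},\alpha_j)$ force the ordering $G_{t,1}<\dots<G_{t,m+1}$, hence nested indicators, so that $\{X_t\le k\}$ reduces to a single indicator event whose probability is given by the unconditional coverage hypothesis, and independence of $(X_t)_t$ follows since each $X_t$ is a measurable function of the corresponding independent vector. The only (cosmetic) difference is that the paper computes $\mathsf{P}(X_t=k)$ and telescopes the sum, whereas you identify $\{X_t\le k\}=\{\mathds{1}_{t,m+1-k}=0\}$ directly and also make the nesting step explicit, which the paper leaves implicit.
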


		\begin{proof}
			See Appendix \ref{app:right-conti}.
		\end{proof}
    	The number of breached levels $X_t$ follows a multinomial distribution,
		$ X_t \sim MN (1, (p_0, \dots, p_{m+1})) , $
		where $MN(n, (p_0, \dots, p_{m+1}))$ denotes a multinomial distribution
		with $n$ trials and $m+2$ possible outcomes. The probabilities $p_k = \mathsf{P}(X_t = k)$ can be computed from Lemma~\ref{lem:RightContBreachedLevels}; they are stated explicitly in Theorem \ref{theo:RightContCellCount}.

The observed cell counts 
		\begin{equation}\label{def:cellcounts_right}
		    O_k := \sum_{t=1}^n \mathds{1}_{\{X_t = k\}}, \quad k = 0, 1, \dots, m+1
		\end{equation}
		are the key statistics of the multinomial backtest.
		\begin{theo}\label{theo:RightContCellCount} 
		   Suppose that the unconditional coverage hypothesis \eqref{eq:ucc} holds and that the random vectors $(\mathds{1}_{t,j})_{j=1,\dots, m+1}$, $t= 1, \dots, n$, are independent. Then the observed cell counts possess the multinomial distribution
		   \begin{equation}\label{eq:h0_l}
		   (O_0, O_1, \dots, O_{m+1})  \sim  MN (n, (p_0, p_1, p_2, \dots, p_{m+1} )), \quad \quad \quad\quad\quad\quad \quad \quad
		   \end{equation}
			\begin{eqnarray*}
				p_0 &=  &\mathsf{P}(X_t = 0) =1 - \frac{ \mathsf{E} \left[ G \mathds{1}_{\{ G \in 
				[\alpha_m, 1) \}} \right]}{g(1) - g(\alpha_m)} \\
				p_k &= & \mathsf{P}(X_t = k) = \frac{\mathsf{E} \left[ G
				\mathds{1}_{\{G \in [\alpha_{m+1-k},
				\alpha_{m+2-k} )\}} \right]}{g(\alpha_{m+2-k}) - g(\alpha_{m+1-k})}
				- \frac{\mathsf{E}\left[ G \mathds{1}_{\{G \in [\alpha_{m-k},
				\alpha_{m+1-k} )\}}\right]}{g(\alpha_{m+1-k}) - g(\alpha_{m-k})}, \quad 1 \le k \le m, \\
				p_{m+1} &= & \mathsf{P}(X_t = m+1) = \frac{ \mathsf{E} \left[ G \mathds{1}_{\{ G \in 
				[0, \alpha_1) \}} \right]}{g(\alpha_1)}. 
			\end{eqnarray*}
		\end{theo}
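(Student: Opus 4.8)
The plan is to obtain the theorem as a direct consequence of Lemma~\ref{lem:RightContBreachedLevels}, which already supplies both the independence of the $X_t$ and a closed form for their distribution function. The argument splits into two parts: first, reading off the point masses $p_k = \mathsf{P}(X_t = k)$ from the distribution function by differencing; second, passing from the single-trial law of each $X_t$ to the $n$-trial multinomial law of the cell counts.

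For the first part I would write $p_k = \mathsf{P}(X_t \le k) - \mathsf{P}(X_t \le k-1)$ for $1 \le k \le m$ and insert the two expressions from Lemma~\ref{lem:RightContBreachedLevels}. The constant $1$'s cancel and, after rewriting $m-k+1 = m+1-k$ and $m-k+2 = m+2-k$, one recovers exactly the stated formula for $p_k$, with the increment from $\mathsf{P}(X_t \le k-1)$ furnishing the positive term and that from $\mathsf{P}(X_t \le k)$ the negative term. The two boundary masses are treated separately. Here $p_0 = \mathsf{P}(X_t \le 0)$ is the $k=0$ instance of the lemma, simplified using $\alpha_{m+1}=1$; and $p_{m+1} = 1 - \mathsf{P}(X_t \le m)$ is obtained from the $k=m$ instance, where I would use $\alpha_0 = 0$ together with $g(0)=0$, a defining property of a distortion function, to reduce the denominator $g(\alpha_1)-g(\alpha_0)$ to $g(\alpha_1)$ and the indicator set $[\alpha_0,\alpha_1)$ to $[0,\alpha_1)$. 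As a consistency check, the $p_k$ are automatically nonnegative, being increments of a distribution function, and they telescope to $\mathsf{P}(X_t \le m+1) = 1$, so $(p_0,\dots,p_{m+1})$ is a genuine probability vector.

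For the second part I would observe that the expressions for the $p_k$ involve only $g$ and the law of $G$, hence are independent of $t$; together with the independence of $(X_t)_{t=1,\dots,n}$ from Lemma~\ref{lem:RightContBreachedLevels}, this shows that the $X_t$ are i.i.d., each a single multinomial trial $MN(1,(p_0,\dots,p_{m+1}))$ over the $m+2$ categories $\{0,1,\dots,m+1\}$. The cell counts $O_k = \sum_{t=1}^n \mathds{1}_{\{X_t=k\}}$ then record the category frequencies across $n$ independent identical trials, and the standard aggregation property of the multinomial distribution yields $(O_0,\dots,O_{m+1}) \sim MN(n,(p_0,\dots,p_{m+1}))$, which is \eqref{eq:h0_l}.

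I do not anticipate a genuine obstacle here, since the analytic heart of the matter, namely the link between the randomized exceedance probabilities and the increments of $g$, has already been absorbed into Lemmas~\ref{lem:RightContException} and~\ref{lem:RightContBreachedLevels}. The only points that demand attention are the correct alignment of the shifted indices when differencing the distribution function and the boundary simplifications using $g(0)=0$ and $\alpha_{m+1}=1$.
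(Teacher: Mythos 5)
Your proposal is correct and follows essentially the same route as the paper: both rest on Lemma~\ref{lem:RightContBreachedLevels} for the law of the $X_t$ and their independence, your differencing of the distribution function (with the boundary simplifications via $\alpha_0=0$, $g(0)=0$, $\alpha_{m+1}=1$) reproduces exactly the stated $p_k$. The only cosmetic difference is that the paper writes out the permutation-counting derivation of the joint multinomial probability mass function of $(O_0,\dots,O_{m+1})$ explicitly, where you invoke it as the standard aggregation property of i.i.d.\ single-trial multinomials.
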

		\begin{proof}
			See Appendix \ref{app:right-conti}.
		\end{proof}
		
		As stated in Lemma \ref{lem:RightContException}, if $M$ and $L$ possess the same law, the conditions of Theorem \ref{theo:RightContCellCount} are satisfied. In generalization of previous results, we thus suggest to use the multinomial distribution $(O_0, O_1, \dots, O_{m+1})$ in \eqref{eq:h0_l} as a starting point for the analysis of the null hypothesis in a backtest for a DRM. The corresponding results for general distortion functions are stated in the next section.
		
		\begin{rema}
			In contrast to the approach of \cite{KLM} that we reviewed in Section~\ref{sec:original}, our approach includes an additional randomization. When applied to AV@R, the levels of the V@R-thresholds in the computation of the breached levels in \eqref{eq:breached} are randomized. This leads to alternative tests that are more powerful according to our case studies in Section \ref{sec:DistStudies}. The corresponding multinomial distribution of the observed cell counts $(O_0, O_1,
		\dots, O_{m+1})$ under the null hypothesis is characterized in Section \ref{sec:avarstudy} and in Appendix \ref{app:avarstudy} for
	 arbitrary partitions of $[0,1]$. 
		\end{rema}
		
\begin{rema}
In the case of right-continuous distortion functions the results need to be adjusted as follows: 
\begin{enumerate}[i.]
\item The distortion function $g$ is the distribution function of a probability measure on $[0,1]$. We denote by $G$ a random variable, independent of $L$ and $M$, with this distribution, and, for all $t, j$, by $G_{t,j}$  independent random variables, independent of $L$ and $M$, with distribution $\mathcal{L} (G | G \in  ( \alpha_{j-1}, \alpha_j ] )$, i.e., the distribution of $G$ conditional on $G \in (\alpha_{j-1}, \alpha_j ]$.
\item The randomized exception indicators are 
		\[ \mathds{1}_{t,j} = 
		\begin{cases}
				1 \hspace{15mm}	& \text{ if } L_t > q_{M_t}^{+, t-1} (1- G_{t,j}) \\
				0 & else
			\end{cases}, \quad  j =1, \dots, m+1.
		\]
\item If $M$ and $L$ possess the same law, then 
			$ \mathsf{E} \left[ \mathds{1}_{t,j} \right]= \frac{\mathsf{E} \left[ G \mathds{1}_{\{ G \in (\alpha_{j-1}, \alpha_j] \}}\right]}{g(\alpha_j ) - g(\alpha_{j-1})} $
			for all $t=1, \dots, n$. 
\item In Theorem~\ref{theo:RightContCellCount} all left-closed and right-open intervals must be replaced by left-open and right-closed intervals.
\end{enumerate}
With these technical modifications, all results, stated before for left-continuous distortion functions, also hold in the right-continuous case.
\end{rema}

		\subsection{General Distortion Functions }\label{sec:general}
		
		General DRMs are slight more challenging. In this case, we split the distortion function $g$ of the DRM into three components, a left-continuous, a right-continuous and a continuous part. This admits to work with a mixture of three distributions and to use a  similar approach as described in Section~\ref{sec:right-conti}. \\		
		
We denote by $\rho_g^{t-1}(M_t)$ the distortion risk measure $\rho_g$ evaluated for the conditional distribution $F_{M_t \vert M_{t-1}, \dots, M_1}$ using the observations $M_{t-1}=L_{t-1}$, $\dots$, $M_1 = L_1 $. According to Theorem \ref{theo:uniqueconvexdecomposition} this risk measurement may be rewritten as
	   \[ \rho_g^{t-1}( M_t) = c_r \rho_{g_{sr}}^{t-1} (M_t) + c_l \rho_{g_{sl}}^{t-1}(M_t) + c_c \rho_{g_c}^{t-1}
	    ( M_t)	. \]
We simplify the notation by writing  $ g_{l}, g_{r}$ instead of $g_{sl}, g_{sr}$. We denote by $G^l, G^r, G^c$ random variables, independent of $L$ and $M$, distributed according to $g_l, g_r, g_c$, respectively, and independent of the random variable $C$ that takes the values $l, r, c$ with probabilities $c_l, c_r, c_c$, respectively. We also choose $C$ independently of $L$ and $M$. Setting 
$$ G \; = \;  \mathds{1}_{\{ C = r\}} G^r  \; + \;   \mathds{1}_{\{ C = l\}} G^l  \; +  \;  \mathds{1}_{\{ C = c\}} G^c, $$
the random variable $G$ has the mixture distribution function $g = c_r g_r + c_l g_l + c_c g_c$. With this notation, the risk measurement can be expressed as 
	   \begin{align*}
	   \rho_g^{t-1} (M_t) & = c_r \mathsf{E} \left[ q^{+, t-1}_{M_t}( 1 - G^r) \right] + c_l \mathsf{E} \left[ q^{t-1}
	   _{M_t} ( 1 - G^l) \right] + c_c \mathsf{E} \left[ q^{t-1}_{M_t} ( 1 - G^c) \right] \\
	   		& = \mathsf{E} \left[ \mathds{1}_{\{ C = r\}} q^{+, t-1}_{M_t}(1 - G^r) + 
	   		\mathds{1}_{\{ C = l \}} q^{t-1}_{M_t}(1 - G^l) + \mathds{1}_{\{C = c\}} q^{t-1}_{M_t}( 1- G^c) \right] \\
	   		&= \mathsf{E} \left[ \mathds{1}_{\{ C = r\}} q^{+, t-1}_{M_t}(1 - G) + 
	   		\mathds{1}_{\{ C = l \}} q^{t-1}_{M_t}(1 - G) + \mathds{1}_{\{C = c\}} q^{t-1}_{M_t}( 1- G) \right].
		\end{align*}	  
This equation can be used as a basis for the construction of a backtesting procedure. 

Again we consider a partition $0 = \alpha_0 < \alpha_1 < \dots < \alpha_m < \alpha_{m+1} =1$ of $[0,1]$, but this time and deviating from Section~\ref{sec:right-conti} we impose the requirement that $g$ does not jump at  $\alpha_j$, ${j=1,2,\dots, m+1}$. This will be unproblematic, since for a given normative choice of a distortion function $g$ the selection of a corresponding partition $(\alpha_j)_{j=1,2,\dots, m+1}$ is quite flexible. This is contrast to the data-generating mechanism $L$ and the descriptive model $M$ whose performance and adequacy is tested.
		
		\begin{ass}\label{assu:alphagen} 
		~\\
			The function $g$ is continuous in $\alpha_j$ and  $g( \alpha_j ) - g( \alpha_{j-1}) \neq 0$ for  all $j =1, 
			\dots, m+1$.   
		\end{ass}
		\begin{rema} 
		\begin{itemize}
		\item[i.] Since $g$ is increasing, it
			posses only countable many discontinuities. Hence, Assumption~\ref{assu:alphagen}  does not substantially restrict the generality of the method.
		\item[ii.] The assumption ensures that the procedure will not generate different results for the intervals $(\alpha_{j-1}, \alpha_j]$,  $(\alpha_{j-1}, \alpha_j)$, $[\alpha_{j-1}, \alpha_j]$, or $[\alpha_{j-1}, \alpha_j)$, since  
		the set $\{\alpha_j: \; j =1,2, \dots, m+1\}$ has probability measure zero for the distribution function $g$.
		\end{itemize}	
		\end{rema}

	  	We are now in the position to define the randomized exception indicators that are used for backtesting. Letting $G_{t, j}$ be independent random variables, independent of $L$ and $M$, 
	  	with distribution $\mathcal{L}( G \vert G \in [ \alpha_{j-1}, \alpha_j))$, $G_{t,j}$ follows the corresponding mixture of the conditional distributions of $G^l, G^r, G^c$. With $C_{t,j}$, $t=1,2, \dots, n$, $j=1, \dots, m+1$, being independent replications of $C$, independent of $L$ and $M$, we define
	  	\begin{align*}
	  		\mathds{1}_{t,j} := \begin{cases}
	  			1 \hspace{10mm} & L_t > \mathds{1}_{\{ C_{t,j} = r\}} q^{+, t-1}_{M_t}(1 - G_{t,j})  
	  			+ \mathds{1}_{\{C_{t,j} = l \}} q^{t-1}_{M_t} ( 1 - G_{t,j})	 + \mathds{1}_{\{ C_{t,j}= c\}} q^{t-1}_{M_t} 
	  			(1 - G_{t,j}) \\
	  			0 & \text{else.}		
	  		\end{cases}
	  	\end{align*}
	  When the processes $L$ and $M$ possess the same law, suitable versions of the unconditional coverage hypothesis and 
	  independence hypothesis hold that can be used for backtesting. 
	  \begin{lem} \label{lem:GenExProb}
	  	If $M$ and $L$ possess the same law, then
	  	\begin{equation}\label{eq:uccgen}
	  		\mathsf{E}[\mathds{1}_{t,j}] = \frac{\mathsf{E}[ G \mathds{1}_{\{ G \in [ \alpha_{j-1}, \alpha_j) \} }]}
	  		{g(\alpha_j) - g(\alpha_{j-1})}
	  	\end{equation}  
	  	for all $t = 1 , \dots, n$,  and the random vectors $(\mathds{1}_{t,j})_{j=1,\dots, m+1}$, $t= 1, \dots, n$, are independent. 
	  \end{lem}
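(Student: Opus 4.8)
The plan is to exploit the standing assumption that the conditional distribution functions $F_{M_t \vert M_{t-1}, \dots, M_1}$ are continuous, which collapses the three branches of the randomized indicator into a single probability-integral transform. I first record the reduction implicit in the phrase ``$M$ and $L$ possess the same law''. Each indicator can be written as $\mathds{1}_{t,j} = \mathds{1}_{\{ L_t > h_{t,j}(L_1, \dots, L_{t-1}; G_{t,j}, C_{t,j}) \}}$, where $h_{t,j}$ is the fixed measurable map that forms the conditional quantile $q^{t-1}_{M_t}$ (resp.\ $q^{+,t-1}_{M_t}$) of $M_t$ from the plugged-in past values and evaluates it at $1 - G_{t,j}$ on the branch selected by $C_{t,j}$. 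Hence the whole family $(\mathds{1}_{t,j})_{t,j}$ is one measurable functional $\Phi$ of $(L_1, \dots, L_n)$ and the randomization $\mathcal R = (G_{t,j}, C_{t,j})_{t,j}$. Since $\mathcal R$ is independent of both processes and $(L_1,\dots,L_n) \stackrel{d}{=} (M_1,\dots,M_n)$ under the null, $\Phi(L,\mathcal R) \stackrel{d}{=} \Phi(M,\mathcal R)$; it therefore suffices to prove both assertions when the data are generated by the model itself, i.e.\ with $L_t$ replaced by $M_t$.

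Working inside the model, I would introduce the conditional probability-integral transform $U_t := F_{M_t \vert M_{t-1}, \dots, M_1}(M_t)$. By continuity of the conditional law, $U_t$ is, conditionally on $\mathcal H_{t-1} := \sigma(M_1, \dots, M_{t-1})$, uniformly distributed on $[0,1]$, so $U_t$ is independent of $\mathcal H_{t-1}$; an induction then yields that $U_1, \dots, U_n$ are i.i.d.\ uniform and, being functions of $M$ alone, jointly independent of $\mathcal R$.

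The crucial step — and the one I expect to require the most care — is to show that all three branches of the randomized indicator coincide almost surely with a single expression in $U_t$. Since the conditional cdf $F := F_{M_t \vert M_{t-1},\dots,M_1}$ is continuous, one has $F(q^{t-1}_{M_t}(v)) = F(q^{+,t-1}_{M_t}(v)) = v$ for every level $v$, from which a short monotonicity argument shows that each of the events $\{ M_t > q^{t-1}_{M_t}(1-G_{t,j})\}$ and $\{ M_t > q^{+,t-1}_{M_t}(1-G_{t,j})\}$ coincides with $\{ U_t > 1 - G_{t,j}\}$ outside a $\mathsf P$-null set. The exceptional set is contained in $\{ U_t = 1 - G_{t,j}\}$ (respectively $\{ M_t = q^{+,t-1}_{M_t}(1-G_{t,j})\}$), which is null because $U_t$ is uniform and independent of $G_{t,j}$ — this is precisely where atomlessness of the conditional law is used, and it matters here since the mixing variable $G$ may itself carry atoms inherited from the step components $g_{sl}, g_{sr}$. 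Consequently, on the model, $\mathds{1}_{t,j} = \mathds{1}_{\{ U_t > 1 - G_{t,j}\}}$ almost surely; in particular the selector $C_{t,j}$ disappears.

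With this representation both claims follow quickly. For the coverage identity I would condition on $G_{t,j}$ and use uniformity, $\mathsf{E}[\mathds{1}_{t,j}] = \mathsf{E}[ \mathsf{P}(U_t > 1 - G_{t,j} \mid G_{t,j})] = \mathsf{E}[G_{t,j}] = \mathsf{E}[G \mid G \in [\alpha_{j-1}, \alpha_j)]$, and then rewrite the conditional mean as the quotient $\mathsf{E}[G \mathds{1}_{\{G \in [\alpha_{j-1},\alpha_j)\}}] / \mathsf{P}(G \in [\alpha_{j-1},\alpha_j))$; Assumption~\ref{assu:alphagen}, which makes $g$ continuous at the breakpoints, identifies the denominator with $g(\alpha_j) - g(\alpha_{j-1})$ and gives \eqref{eq:uccgen}. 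For independence, observe that $V_t := (\mathds{1}_{t,j})_{j=1,\dots,m+1}$ is a measurable function of the block $(U_t, (G_{t,j})_j)$ alone; since the $U_t$ are mutually independent, the randomization blocks are independent across $t$, and the $U$'s are independent of the $G$'s, these blocks are independent across $t$, and hence so are the vectors $V_1, \dots, V_n$. This mirrors the argument behind Lemma~\ref{lem:RightContException}, the only genuine addition being the branch-collapsing step forced by the mixture structure of a general distortion function.
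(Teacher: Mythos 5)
Your proof is correct, and it reaches the same two conclusions by a route that differs from the paper's in two useful ways. The paper proves the general lemma only by declaring it ``analogous'' to Lemma~\ref{lem:RightContException}, whose argument runs: continuity of $F_{M_t\vert M_{t-1},\dots,M_1}$ gives $\mathsf P(M_t>q^{t-1}_{M_t}(1-u)\mid M_{t-1},\dots,M_1)=u$, hence $\mathsf E[\mathds 1_{t,j}\mid G_{t,j}]=G_{t,j}$ and the coverage identity; independence is then obtained by an induction on product expectations, conditioning on $\sigma(M_1,\dots,M_{t-1},\gcal_t)$ and peeling off one factor at a time via the tower property. You instead push everything through the conditional probability-integral transform $U_t$, which turns the data into i.i.d.\ uniforms independent of the randomization, and you make explicit the branch-collapsing step $\{M_t>q^{t-1}_{M_t}(1-G_{t,j})\}=\{M_t>q^{+,t-1}_{M_t}(1-G_{t,j})\}=\{U_t>1-G_{t,j}\}$ a.s.\ --- precisely the point where the general case genuinely differs from the left-continuous one (the selector $C_{t,j}$ and the $q$ versus $q^{+}$ distinction), and which the paper's ``analogous'' hides; your observation that the exceptional set must be killed using independence of $G_{t,j}$ from $U_t$, because $G$ may have atoms, is exactly the right care. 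What each approach buys: the paper's induction avoids introducing $U_t$ and works directly with conditional expectations, but your PIT representation makes the independence of the vectors $(\mathds 1_{t,j})_j$ across $t$ immediate from the block structure $(U_t,(G_{t,j})_j)$ rather than requiring an induction, and it documents why the three branches give identical indicators. One small point to state explicitly rather than leave implicit: the identification $\mathsf P(G\in[\alpha_{j-1},\alpha_j))=g(\alpha_j)-g(\alpha_{j-1})$ uses Assumption~\ref{assu:alphagen} (no mass of $g$ at the partition points), which you do invoke; without it the half-open-interval convention would matter.
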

	  
	  \begin{proof}
	  	The proof is analogous to the proof of  Lemma~\ref{lem:RightContException}.	 
		 \end{proof}
We define the number of breached levels,
		$ X_t = \sum_{j=1}^{m+1} \mathds{1}_{t,j},$  $t =1, 2, \dots, n,$
		  and the observed cell counts,
		\begin{equation}\label{def:cellcounts_general}
		    O_k = \sum_{t=1}^n \mathds{1}_{\{ X_t = k \}}  , \quad\quad k = 0, 1, \dots, m+1.
		\end{equation}

		\begin{theo}\label{theo:genProbs}
		    Suppose that the unconditional coverage hypothesis (\ref{eq:uccgen}) holds and that the random vectors $(\mathds{1}_{t,j})_{j=1,\dots, m+1}$, $t= 1, \dots, n$, are independent. Then the following statements hold:
		    \begin{itemize}
		        \item[i)] The number of breached level has a multinomial distribution, i.e,  $X_t \sim \mathrm{MN}(1, (p_0,\dots, p_{m+1})$, $t = 1, \dots, n $, with 
		      
		        \begin{eqnarray*}
					p_0 &=  &\mathsf{P}(X_t = 0) =1 - \frac{ \mathsf{E} \left[ G \mathds{1}_{\{ G \in 
					[\alpha_m, 1) \}} \right]}{g(1) - g(\alpha_m)}  ,\\
					p_k &= & \mathsf{P}(X_t = k) = \frac{\mathsf{E} \left[ G
					\mathds{1}_{\{G \in [\alpha_{m+1-k},
					\alpha_{m+2-k} )\}} \right]}{g(\alpha_{m+2-k}) - g(\alpha_{m+1-k})}
					- \frac{\mathsf{E}\left[ G \mathds{1}_{\{G \in [\alpha_{m-k},
					\alpha_{m+1-k} )\}}\right]}{g(\alpha_{m+1-k}) - g(\alpha_{m-k})}, \\
					&& \quad\quad 1 \le k \le m, \\
					p_{m+1} &= & \mathsf{P}(X_t = m+1) = \frac{ \mathsf{E} \left[ G \mathds{1}_{\{ G \in 
					[0, \alpha_1) \}} \right]}{g(\alpha_1)} .
				\end{eqnarray*}
		        Moreover, the random variables $(X_t)_{t=1,\dots, n}$ are independent.  
		        \item[ii)] The observed cell counts possess the following multinomial distribution:
		        \begin{equation}\label{eq:nullgen}
		        	 (O_0, O_1, \dots, O_{m+1}) \sim \mathrm{MN} (n, (p_0, p_1, \dots, p_{m+1})).
				\end{equation}		         
		    \end{itemize}
		\end{theo}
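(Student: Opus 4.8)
The plan is to reduce the general three-component situation to the purely left-continuous computation already carried out in Lemma~\ref{lem:RightContBreachedLevels} and Theorem~\ref{theo:RightContCellCount}, by showing that under the null each randomized exception indicator collapses to a single clean form that no longer sees the mixture component $C_{t,j}$ or the distinction between $q^{t-1}_{M_t}$ and $q^{+,t-1}_{M_t}$. Once that collapse is established, parts (i) and (ii) follow by repeating the earlier arguments almost verbatim.

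First I would introduce the probability--integral transform $U_t := F_{M_t \mid M_{t-1},\dots,M_1}(L_t)$ and set $V_t := 1 - U_t$. Since the conditional distribution functions are assumed continuous and $L$ has the same law as $M$ under the null, $U_t$ --- and hence $V_t$ --- is uniform on $[0,1]$, and $V_t$ is independent of the randomization variables $(G_{t,j}, C_{t,j})_j$, which are independent of $L$ and $M$. The key reduction is the almost-sure identity $\mathds{1}_{t,j} = \mathds{1}_{\{G_{t,j} > V_t\}}$, valid for every value of $C_{t,j}$: by the quantile--transform relation $\{L_t > q^{t-1}_{M_t}(u)\} = \{U_t > u\}$ (up to a null set, since $F_{M_t\mid\cdots}$ is continuous) and the analogous statement for $q^{+,t-1}_{M_t}$, the two branches of the definition agree almost surely, so the component indicator $C_{t,j}$ drops out and only the conditional mixture law of $G_{t,j}$ survives. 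This is where the continuity of the conditional laws and Assumption~\ref{assu:alphagen} do the real work.

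With this representation the crucial structural observation is a nesting property: conditionally on $V_t \in [\alpha_{i-1}, \alpha_i)$, and because $G_{t,j}$ is supported on $[\alpha_{j-1},\alpha_j)$, one has $\mathds{1}_{t,j}=0$ for every $j<i$, $\mathds{1}_{t,j}=1$ for every $j>i$, and only $\mathds{1}_{t,i}=\mathds{1}_{\{G_{t,i}>V_t\}}$ stays random, so $X_t = (m+1-i) + \mathds{1}_{t,i} \in \{m+1-i,\, m+2-i\}$. Hence $\{X_t\le k\}$ splits into $\{V_t \ge \alpha_{m+1-k}\}$ together with $\{V_t \in [\alpha_{m-k},\alpha_{m+1-k}),\ \mathds{1}_{t,m+1-k}=0\}$. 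Integrating over the uniform law of $V_t$ and using Fubini to rewrite $\int_{\alpha_{i-1}}^{\alpha_i}\mathsf{P}(G \in [\alpha_{i-1},v])\,dv = \mathsf{E}[(\alpha_i - G)\mathds{1}_{\{G\in[\alpha_{i-1},\alpha_i)\}}]$, the additive $\alpha$-terms cancel and I obtain
$$\mathsf{P}(X_t \le k) = 1 - \frac{\mathsf{E}\big[G\,\mathds{1}_{\{G\in[\alpha_{m-k},\alpha_{m+1-k})\}}\big]}{g(\alpha_{m+1-k})-g(\alpha_{m-k})}, \qquad 0\le k\le m,$$
exactly as in Lemma~\ref{lem:RightContBreachedLevels}, together with $\mathsf{P}(X_t\le m+1)=1$. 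Taking successive differences $p_k = \mathsf{P}(X_t\le k) - \mathsf{P}(X_t \le k-1)$ yields the stated $p_0,\dots,p_{m+1}$, so $X_t \sim \mathrm{MN}(1,(p_0,\dots,p_{m+1}))$. These probabilities depend only on $g$ and the partition, not on $t$, hence all $X_t$ are identically distributed, and their independence follows because each $X_t$ is a measurable function of $(\mathds{1}_{t,j})_{j}$ and these vectors are independent across $t$ by Lemma~\ref{lem:GenExProb}.

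Part (ii) is then immediate: $(O_0,\dots,O_{m+1})$ records the category counts of the independent, identically distributed single-trial variables $X_1,\dots,X_n$, so by additivity of independent single-trial multinomials the count vector is $\mathrm{MN}(n,(p_0,\dots,p_{m+1}))$. I expect the main obstacle to be the first step --- rigorously justifying that the mixture and the $q$ versus $q^{+}$ branches both collapse to $\mathds{1}_{\{G_{t,j}>V_t\}}$ almost surely --- since everything downstream merely repeats the left-continuous argument; handling the quantile inequalities at the measure-zero boundaries is the only genuinely delicate point.
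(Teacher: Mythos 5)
Your computation of the cell probabilities is correct, and your nesting observation (conditionally on $V_t\in[\alpha_{i-1},\alpha_i)$ only the $i$-th indicator remains random, so $X_t=(m+1-i)+\mathds{1}_{t,i}$) is exactly the structural fact that drives the paper's argument. The one genuine problem is that your proof establishes the theorem under the hypothesis $L\overset{d}{=}M$, not under the hypotheses actually stated. The theorem assumes only the unconditional coverage hypothesis \eqref{eq:uccgen} together with independence of the vectors $(\mathds{1}_{t,j})_{j=1,\dots,m+1}$ across $t$; it does not assume that $L$ and $M$ possess the same law (that stronger assumption is what Lemma \ref{lem:GenExProb} uses to \emph{verify} the hypotheses, and Remark 3.11 stresses that the multinomial null is strictly weaker). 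Your entire reduction --- the uniformity of $U_t=F_{M_t\mid M_{t-1},\dots,M_1}(L_t)$, the almost-sure collapse of the $q$ versus $q^{+}$ branches, and the integration of $\mathsf{P}(G_{t,j}\le v)$ against the uniform law of $V_t$ --- is unavailable under the stated hypotheses, because without $L\overset{d}{=}M$ the conditional law of $L_t$ need not be $F_{M_t\mid\cdots}$ and $V_t$ need not be uniform. As written you have therefore proved a corollary of the theorem rather than the theorem itself.

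The paper's route, to which its proof is declared analogous (Lemma \ref{lem:RightContBreachedLevels} and Theorem \ref{theo:RightContCellCount}), avoids this: the nesting $\mathds{1}_{t,j}\le\mathds{1}_{t,j'}$ for $j<j'$ is a \emph{pathwise} fact requiring no distributional input, since whatever the values of $C_{t,j}$ and $C_{t,j'}$ the threshold at level $j'$ is at most $q^{+,t-1}_{M_t}(1-\alpha_{j'-1})\le q^{t-1}_{M_t}(1-G_{t,j})$, which is at most the threshold at level $j$. Consequently $\mathsf{P}(X_t=k)=\mathsf{P}(\mathds{1}_{t,m+2-k}=1)-\mathsf{P}(\mathds{1}_{t,m+1-k}=1)$ for $1\le k\le m$, and the two marginals are supplied directly by \eqref{eq:uccgen} rather than by a probability-integral-transform computation. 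Everything downstream of this point --- taking differences to obtain the $p_k$, independence of the $X_t$ as measurable functions of independent vectors, and the permutation-counting argument giving the multinomial law of $(O_0,\dots,O_{m+1})$ --- is the same in both treatments and is correct in yours. The repair is local: keep your nesting argument but derive it pathwise from the monotonicity of $q$ and $q^{+}$ and the supports of the $G_{t,j}$, and replace the uniform-$V_t$ integration by a direct appeal to \eqref{eq:uccgen}.
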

		\begin{proof}
		    The proof is analogous to the proofs of  Lemma~\ref{lem:RightContBreachedLevels} and Theorem~\ref{theo:RightContCellCount}.
		\end{proof}
		
As in Section \ref{sec:right-conti}, we suggest to use the cell counts $(O_0, O_1, \dots,
		O_{m+1})$ for backtesting. If $M$ and $L$ possess the same law, its distribution is multinomial. 

	    \begin{rema}
	    The statement
	    \begin{equation}\label{eq:cond}
	    (O_0, O_1, \dots, O_{m+1})  \; \sim \;  \mathrm{MN} (n, (p_0, p_1, \dots, p_{m+1}))
	    \end{equation}
	 derived from Theorem~\ref{theo:genProbs} is typically not equivalent to $L$ and $M$ possessing the same distribution. The cell counts capture only certain features of the distribution of a process such that a null hypothesis \eqref{eq:cond}  of a multinomial distribution with these parameters includes a larger class of models. This condition approximates for a specific partition the statement that $M$ provides a reasonable model for measuring the considered DRM of the true losses $L$.
	 
	 The fact that \eqref{eq:cond}  is weaker than $L\overset{d}{=}M$ can be illustrated by a simple example.
	        Setting $m=0$ and 
	        $g(u) = \mathds{1}_{\{ 0.5 < u \leq 1\}}$, condition \eqref{eq:cond}  is equivalent to 
	        \[ \mathsf{E} [ \mathds{1}_{t,1} ]  \; = \;  \mathsf{P} ( L_t > q_{M_t}^{t-1}(0.5) )
	        \; = \;  0.5 \]
	        This is already true, if the conditional distributions of $L_t$ and $M_t$ given the past have the same median.	
	        \end{rema}

	\section{Distributional Simulations} \label{sec:DistStudies}
         We provide a numerical illustration of the DRM backtesting procedure. We consider the setting described in Section~\ref{sec:MTDRM}, extending the methodology of \cite{KLM}. Tests for the multinomial distribution are reviewed in Appendix~\ref{sec:tests},  compiling the relevant results from \cite{CK} and \cite{KLM}.  
         
         The null hypothesis $H_0$ states that the components of  $ L = (L_t)_{t=1,2, \dots, n}$ are independent random variables with standard normal distribution $\mathcal{N}$. This law is also used for the model $M=(M_t)_{t=1,2, \dots, n}$ on which risk computations are based. 
         
The test statistics are computed from the observed cell counts $(O_0, O_1, \dots, O_{m+1} )$ as defined in \eqref{def:cellcounts_right} and \eqref{def:cellcounts_general}. Under the null hypothesis the cell counts possess a multinomial distribution with $n$ trials and $m+2$ possible outcomes; the parameters $p_0, p_1, \dots, p_{m+1}$ will be computed in the case studies and depend on the chosen distortion function $g$ and the partition $(\alpha_j)_{j=0,1, \dots, m+1 }$. All tests are derived from asymptotic distributions;  finite sample distributions of the test statistics are not explicitly considered. This implies that the parameter $\kappa$ that specifies the level of the tests is typically not identically to their sizes. Instead we will compute the size of each test on the basis of simulations. 
         
We consider three alternatives $H_1$, labelled by T3, T5 and ST. In contrast to the normal distribution these 
include heavy tails and possibly skew. To specify the alternatives, consider an auxiliary process $\tilde L = (\tilde L_t)_{t=1,2, \dots, n}$ with independent components. In all cases, we assume that the true losses under the alternative $H_1$ are scaled and shifted such that they possess expectation $0$ and unit variance as in the standard normal model, i.e., 
$$ L_t \; = \; (\tilde L_t - E[\tilde L_t]) / \sqrt{\mathrm{Var}(\tilde L_t)}, \quad t = 1,2, \dots, n . $$
The alternatives T3 and T5 choose $\tilde L_t$, $t=1,2, \dots, n$, as student-t with three and five degrees of freedom, respectively. ST considers the skewed-t-distribution of \cite{FS1} that we recall in Appendix~\ref{app:ST}. 

        We test the null hypothesis $H_0$ versus the three alternative $H_1$ using the procedures introduced in Section \ref{sec:tests}. The size of the test, the probability of falsely rejecting the hypothesis if it is true, can be estimated from simulations as
    \begin{equation}\label{eq:simu} 
    \frac{1}{N} \sum_{i=1}^N \mathds{1}_{\{H_0 \text{ is rejected} \}} (L_{1,i}, \dots, L_{n,i}),
   \end{equation}
	    where the observed losses $L_{t,i}$,  $t = 1, \dots, n$, $i = 1, \dots, N $, are sampled from independent standard normal distributions.
        The power, the probability of correctly rejecting the hypothesis if the alternative is true, can again be estimated by \eqref{eq:simu}, but with  the observed losses $L_{t,i}$,  $t = 1, \dots, n$, $i = 1, \dots, N $, sampled under the distributions specified by the alternatives T3, T5, and ST, respectively. 
        
        In the numerical experiments we focus on three different DRMs: AV@R is described in Section \ref{sec:avarstudy}, GlueV@R in Section \ref{sec:gvsim}, and a DRM with a neither right- nor left-continuous distortion function in Section \ref{sec:genstudy}. Section \ref{sec:dsresults} analyzes and compares the results.  An additional case study with the risk measure Range Value at Risk (RV@R) is provided in Appendix~\ref{sec:rv@r}. We also apply the methodology to data from the S\&P 500 in Appendix~\ref{sec:s&p500}.

\subsection{Distortion Risk Measures}    

We consider three different distortion risk measures to illustrate our backtesting methodology.

    \subsubsection{AV@R}\label{sec:avarstudy}
        
 We consider AV@R at level $\alpha = 0.025$ as in Basel III corresponding to the distortion function
	    $$g(u) := \frac{u}{0.025} \mathds{1}_{\{ 0 \leq u \leq 0.025 \}} + \mathds{1}_{\{u > 0.025 \}}.$$
	    Since $g(u)$ is a continuous, we apply the framework of Section \ref{sec:right-conti}.  In contrast to the approach of \cite{KLM}, our multinomial backtest of AV@R includes an additional randomization. Our numerical analysis will indicate that this is beneficial for backtests. We set\footnote{A discussion on how to find a good partition is beyond the scope of the current paper. We choose for simplicity an equidistant partition to illustrate the potential of the backtesting methods.} $\alpha_0 = 0$, $\alpha_{m+1} = 1$ and
    	$ \alpha_j = \frac{j}{m+1} \alpha, $
    	for $j  = 1, \dots, m $.
    	The computation of the probabilities $(p_0, p_1, \dots, p_{m+1})$ can be found 
        in Appendix \ref{app:avarstudy}.
	        
    \subsubsection{GlueV@R}\label{sec:gvsim} 
    
The second risk measure that we consider is GlueV@R introduced in Example \ref{ex:drm} in Section~\ref{app:DRM} in the appendix. Specifically, we choose
        \[ \mathrm{GlueV@R}^{h_1, h_2}_{\beta, \alpha}(X) = \frac{1}{3} \mathrm{AV@R}_{0.05}(X) + \frac{1}{3}\mathrm{AV@R}_{0.01}(X) + \frac{1}{3} \mathrm{V@R}_{0.05}(X), \]
corresponding to the parameters $\alpha = 0.05 $, $\beta = 0.01$, $h_1 = 2/5$ and $h_2 = 2/3$.  Since a risk measure is a normative instrument, we fix these arbitrary parameters simply to illustrate the properties of the backtesting methodology. The  distortion function is left-continuous which allows us to apply the backtesting procedure described in Section \ref{sec:right-conti}. The partition is again set such that $(\alpha_1, \dots, \alpha_m)$ are equidistant in $[0, \alpha]$ and $\alpha_0 = 0$, $\alpha_{m+1}= 1$. The calculations of the probabilities $(p_0, p_1, \dots, p_{m+1})$
and a method to sample from the corresponding distortion function are described in Appendix \ref{app:gvsim}. 
        
	\subsubsection{A Distortion Function that Is Neither Right- Nor Left-Continuous}\label{sec:genstudy}
	
	 As a third example, we consider a DRM corresponding to a distortion function that is neither right- nor left-continuous. In this case, we apply the framework proposed in Section~\ref{sec:general}. 
	    Modifying the distortion function of the Glue-V@R, we study a distortion function 
        \begin{equation*}\label{eq:g4}
            g(u) = 
            \left\{\begin{array}{ll}
            \frac{h_1}{\beta} u &\, 0 \le u \le \beta \\
            h_2  + \frac{h_3-h_2}{\alpha-\beta}(u-\beta)  &\, \beta < u < \alpha \\
            1 &\, \alpha \le u \le 1,
            \end{array}\right.
        \end{equation*}
        with  $0 \le h_1 < h_2 < h_3 < 1$, $h_3 - h_2 + h_1 \leq 1$ and $0 \leq \beta < \alpha \leq 1$.
        By Theorem~\ref{theo:uniqueconvexdecomposition} the distortion function can be decomposed into a continuous, right-continuous and left-continuous part, $g(u) = c_r g_r(u) + c_l g_l(u) + c_c g_c(u)$,
        as shown in Figure~\ref{fig:DFDecomp}.  We use an equidistant partition $(\alpha_1, \dots, \alpha_m)$ of $[0, \alpha]$ with $\alpha_0 = 0$ and $\alpha_{m+1} = 1$. For the simulation study we set the parameters as $\alpha = 0.1$, $\beta = 0.01$, $h_1 = 1/5$, $h_2= 2/5$ and $h_3 = 2/3$. Appendix~\ref{app:4.3} provides additional information, i.e., the decomposition of $g$, the computation of the probabilities $(p_0, p_1, \dots, p_{m+1})$ according to Theorem \ref{theo:genProbs}, and a method for sampling from the mixture distribution of the distortion functions $g_r, g_l, g_c$.
        
        \begin{figure}
            \centering
            \includegraphics[scale = 0.5]{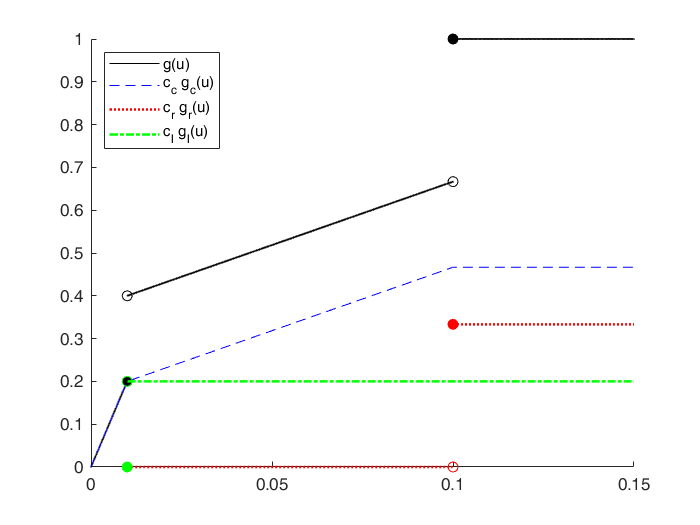}
            \caption{The decomposition of the distortion function $g(u)$ in right- and left-continuous parts scaled with $c_r$, $c_l$.}
            \label{fig:DFDecomp}
        \end{figure}

    \subsection{Results}\label{sec:dsresults}
    
 The results of the numerical experiments are displayed in Tables~\ref{tab:avarres}~--~\ref{tab:genres}. Throughout the experiment, we use $N = 20000$  samples to determine the size and power of the tests. The parameter $n$ signifies the length of the considered time series, the parameter $m$ determines the number of considered cells. The level $\kappa$ in the construction of the approximate tests\footnote{We will see in the numerical analysis that for large $m$ the desired level $\kappa$ might significantly deviate from the actual size for some of the approximate tests. The Nass' test, however, performs uniformly well.} is set to $5\%$.  Table~\ref{tab:avarres} shows the results for AV@R, Table~\ref{tab:avardiff} compares these to \cite{KLM}. Table~\ref{tab:gvres} shows the result of the backtest for GlueV@R and Table~\ref{tab:genres} for the distortion function that is neither right- nor left-continuous. We also provide a comparison of AV@R with the risk measure range value at risk (RV@R) in Appendix~\ref{sec:rv@r}.

	   \begin{table}[h!]
    	\captionsetup{font=scriptsize}
    	\begin{minipage}[t]{0.47\textwidth}
    	        \centering
    	        \tiny
    	        \begin{tabularx}{1.0\textwidth}{p{2mm} | X | X X X X X X X}
    	        \multicolumn{9}{c}{\textbf{Pearson}}  \\
    	        \hline \\
    	        $L_t$ & $n | m$ & 1 & 2 & 4 & 8 & 16 & 32 & 64 \\
    	        \hline \hline 
    	        $\mathcal{N}$ 
    	        & 250 &\cellcolor{g} 0.97   &  1.26    & 1.39  &  \cellcolor{p} 1.67  & \cellcolor{vp}   2.26 & \cellcolor{vp}   3.01 &  \cellcolor{vp}  4.35\\	        
    	        & 500 &\cellcolor{vg} 0.84 &\cellcolor{vg} 1.11  & \cellcolor{vg} 1.17  &  1.40 &  \cellcolor{p} 1.82 &  \cellcolor{vp}   2.42  &  \cellcolor{vp} 3.28 \\
    	         & 1000 &\cellcolor{g}   0.93 & \cellcolor{g}   0.97  &\cellcolor{g}   1.03 &   1.29  &  1.48 &   \cellcolor{p} 1.87  & \cellcolor{vp} 2.55\\    	        
    	        & 2000 &\cellcolor{g}  0.92   &\cellcolor{g}  1.02  & \cellcolor{g}  1.06   &\cellcolor{vg} 1.13 &   1.22 &  \cellcolor{p} 1.58   &\cellcolor{p} 1.95\\
    	        \hline 
    	        T3  &   250  & \cellcolor{p} 16.31 & \cellcolor{p} 24.17 & \cellcolor{p} 28.97 & \cellcolor{p} 27.41 & 38.34 & 38.00 & 36.83 \\
    	        & 500 & \cellcolor{p}  27.97 &  38.09 & 44.51 & 47.01 & 52.54 & 49.53 & 48.89 \\
    	        &1000&  52.75 & 66.15 & \cellcolor{g} 74.09 & \cellcolor{g}77.18 & \cellcolor{g}78.97 &\cellcolor{g} 76.79 & \cellcolor{g} 74.30\\
    	        &2000& \cellcolor{g} 83.33 &\cellcolor{g}  93.66 &\cellcolor{g}  97.01 &\cellcolor{g}  97.87 &\cellcolor{g}  97.63 & \cellcolor{g} 96.74 &\cellcolor{g}  95.75 \\
    	        \hline
    	        T5 & 250 & \cellcolor{p} 17.72 &\cellcolor{p} 23.61 &\cellcolor{p} 26.16 &\cellcolor{p} 26.12 &\cellcolor{p} 29.53 & 36.35 & 37.66\\
    	        & 500 &  \cellcolor{p} 25.56 & 32.20 & 36.42 & 40.58 & 43.95 & 43.84 & 47.80\\
    	        &  1000 & 42.80 & 51.59 & 57.49 & 61.40 & 64.09 & 64.29 & 64.52\\
    	        & 2000 & 68.10 & \cellcolor{g} 78.47 & \cellcolor{g} 85.17 &  \cellcolor{g}87.93 &  \cellcolor{g}88.62 &  \cellcolor{g}88.83 & \cellcolor{g} 87.27\\
    	        \hline
    	        ST & 250 & 35.58 & 44.04 & 49.95 & 49.21 & 52.42 & 60.90 & 58.81\\
    	        & 500 & 54.65 & 64.81 & \cellcolor{g}71.29 &\cellcolor{g}75.52 &\cellcolor{g} 78.99 & \cellcolor{g}77.45 & \cellcolor{g}78.26\\
    	        & 1000& \cellcolor{g}82.30 & \cellcolor{g}89.97 &\cellcolor{g} 93.13 & \cellcolor{g}95.13 &\cellcolor{g} 95.89 & \cellcolor{g}95.43 &\cellcolor{g} 95.61 \\
    	        &2000&  \cellcolor{g}97.87 & \cellcolor{g} 99.49 &\cellcolor{g} 99.77 &\cellcolor{g}99.90 & \cellcolor{g}99.95 & \cellcolor{g}99.90 &\cellcolor{g}99.88\\
    	        \end{tabularx} \\
    	        \vspace{3mm}
    	        \begin{tabularx}{1.0\textwidth}{p{2mm} | X | X X X X X X X}
    	        \multicolumn{9}{c}{\textbf{Nass}}  \\
    	        \hline \\
    	        $L_t$ & $n | m$ & 1 & 2 & 4 & 8 & 16 & 32 & 64 \\
    	        \hline \hline 
    	        $\mathcal{N}$ & 250 & 0.79  & \cellcolor{g} 0.95 & \cellcolor{g}   1.05 &   \cellcolor{vg} 1.12  &\cellcolor{g}  1.01  & \cellcolor{g}1.08 & \cellcolor{g}  1.05 \\
    	        & 500 & 0.77 &\cellcolor{g} 0.97   & \cellcolor{g} 0.94  &\cellcolor{g}   1.03  & \cellcolor{g}  1.08   &  \cellcolor{g}1.03  & \cellcolor{g} 1.10  \\
    	        & 1000&\cellcolor{vg} 0.88 &   \cellcolor{vg} 1.11  & \cellcolor{g}  0.93  &  \cellcolor{g} 1.03  &  \cellcolor{g} 1.08   &\cellcolor{vg}  1.12  & \cellcolor{g}  1.10\\
    	        & 2000  & \cellcolor{vg} 0.83 & \cellcolor{g}  0.97  & \cellcolor{g}  0.99  &  \cellcolor{g}1.02  &\cellcolor{g}  1.01  &  \cellcolor{g}1.08  &  \cellcolor{vg}1.11  \\        
    	        \hline
    	         T3 &  250   &  \cellcolor{p} 16.26 &\cellcolor{p} 21.61 & \cellcolor{p}21.97 &\cellcolor{p} 25.54 &\cellcolor{p} 20.41 &\cellcolor{p} 20.89 & \cellcolor{p}22.73\\
    	        & 500  &\cellcolor{p} 27.82 & 37.00 & 41.95 & 44.80 & 41.51 & 45.66 & 38.55  \\
    	        &1000&  51.06 & 65.32 & \cellcolor{g} 73.06 &  \cellcolor{g}75.30 & \cellcolor{g} 74.13 & \cellcolor{g} 74.07 & 67.04\\
    	        &2000&\cellcolor{g} 83.04 &\cellcolor{g} 93.36 &\cellcolor{g}96.78 &\cellcolor{g} 97.64 & \cellcolor{g}97.27 \cellcolor{g}& 95.98 & \cellcolor{g}94.62 \\
    	        \hline
    	         T5 & 250  & \cellcolor{p}17.31 &\cellcolor{p} 20.83 &\cellcolor{p} 20.21 &\cellcolor{p} 22.53 &\cellcolor{p} 19.60 & \cellcolor{p}20.44 &\cellcolor{p} 19.46\\
    	        & 500 &   \cellcolor{p}25.00 & 30.43 & 34.07 & 37.00 & 34.69 & 34.52 & 30.85 \\
    	        &  1000  & 41.93 & 50.69 & 56.35 & 58.89 & 59.13 & 58.52 & 53.65    \\
    	        &2000 &  67.67 & \cellcolor{g}77.70 & \cellcolor{g}84.69 & \cellcolor{g}87.26 &\cellcolor{g} 87.48 &\cellcolor{g} 86.43 & \cellcolor{g}83.57\\
    	        \hline
    	        ST & 250 & 35.43 & 41.40 & 41.73 & 46.53 & 42.03 & 43.10 & 43.64
    	        \\
    	        & 500 & 54.33 & 63.24 & 69.56 & \cellcolor{g}73.59 & \cellcolor{g}71.25 &\cellcolor{g} 73.45 & 67.83\\
    	        & 1000 & \cellcolor{g} 81.55 & \cellcolor{g} 89.66 &\cellcolor{g}  92.81 & \cellcolor{g} 94.49 & \cellcolor{g} 94.74 & \cellcolor{g} 94.41 & \cellcolor{g} 93.00\\
    	        &2000 & \cellcolor{g}97.84 & \cellcolor{g}99.45 &\cellcolor{g} 99.76 & \cellcolor{g}99.88 & \cellcolor{g}99.94 & \cellcolor{g}99.87 & \cellcolor{g}99.82
    	        \end{tabularx}\\
    	        \vspace{3mm}
    	        \begin{tabularx}{1.0\textwidth}{p{2mm} | X | X X X X X X X}
    	        \multicolumn{9}{c}{\textbf{LRT}}  \\
    	        \hline \\
    	        $L_t$ & $n | m$ & 1 & 2 & 4 & 8 & 16 & 32 & 64 \\
    	        \hline \hline 
    	        $\mathcal{N}$ & 250 & 1.36  &\cellcolor{g} 0.94 &    0.79 &   0.42 &  0.09 &   0.01   &   0 \\
    	        & 500 & \cellcolor{p}1.65  &  1.42  &  1.39 &  1.24   & 0.53   & 0.07     &    0 \\
    	        & 1000& \cellcolor{g} 1.08  &   1.23  &   1.36  & \cellcolor{p} 1.60  & \cellcolor{p} 1.68 &   0.76  &  0.04 \\
    	        & 2000& \cellcolor{g} 0.97   &\cellcolor{g} 1.08  &  \cellcolor{vg}1.20 &   1.30  &\cellcolor{p}  1.79 & \cellcolor{vp}  2.47  &  1.34  \\
    	        \hline 
    	        T3  &  250  &\cellcolor{p} 22.83 &\cellcolor{p} 22.89 &\cellcolor{p} 22.32 & \cellcolor{p}14.74 &\cellcolor{vp} 6.35 &\cellcolor{vp} 0.52 & \cellcolor{vp}  0.01  \\
    	        &500  &  31.26 & 39.86 & 48.78 & 48.14 & 34.79 & \cellcolor{vp}9.32 &\cellcolor{vp} 0.43 \\
    	        & 1000 & 53.79 & 65.89 & \cellcolor{g} 75.52 & \cellcolor{g}79.68 & \cellcolor{g}81.43 & 67.68 & \cellcolor{p}21.52  \\
    	        &2000 & \cellcolor{g}83.76 &\cellcolor{g} 94.01 &\cellcolor{g} 97.08 & \cellcolor{g}97.95 & \cellcolor{g}98.29 & \cellcolor{g}98.37 & \cellcolor{g}95.09\\
    	        \hline
    	        T5 & 250 &\cellcolor{p} 15.85 &\cellcolor{p} 16.90 & \cellcolor{p}16.65 & \cellcolor{p}11.09 & \cellcolor{vp}5.07 &\cellcolor{vp} 0.56 & \cellcolor{vp}    0 \\ 
    	        & 500 &\cellcolor{p}  21.04 &\cellcolor{p} 25.70 &\cellcolor{p} 29.79 & 31.01 & 2\cellcolor{p}3.49 & \cellcolor{vp} 6.82 & \cellcolor{vp} 0.27  \\
    	        &  1000   &  37.55 & 43.61 & 49.17 & 52.78 & 55.75 & 45.74 & \cellcolor{p} 12.20\\
    	        & 2000 & 63.24  &  \cellcolor{g}73.31 &\cellcolor{g} 79.98 & \cellcolor{g}81.65 & \cellcolor{g}82.34 & \cellcolor{g}84.45 & 77.24 \\
    	        \hline 
                ST & 250 & 30.84 & 35.43 & 37.67 & 30.90 & \cellcolor{p}19.61 & \cellcolor{vp} 03.91 & \cellcolor{vp} 0.07 \\
                & 500 &  48.46 & 57.35 & 64.61 & 67.94 & 61.45 & 32.82 & \cellcolor{vp}4.62 \\
    	         & 1000 & \cellcolor{g} 78.40 & \cellcolor{g} 86.22 &\cellcolor{g}  90.19 &\cellcolor{g}  92.63 & \cellcolor{g} 93.52 & \cellcolor{g} 90.09 & 62.56\\
    	        &2000 & \cellcolor{g}97.11 & \cellcolor{g}99.18 &\cellcolor{g} 99.66 & \cellcolor{g}99.82 &\cellcolor{g} 99.86 & \cellcolor{g}99.88 &\cellcolor{g} 99.63 \\
    	        \end{tabularx}
    	       \caption{Backtesting AV@R: Estimated size (for the hypothesis $H_0$ with distribution $\mathcal{N}$) and power in $\%$ (for the alternatives $H_1$ with distributions T3, T5, ST, respectively) 
    	        for the Pearson test, Nass test and LRT. The size is represented as the fraction of the true size according to our simulation divided by the desired level $\kappa=5\%$.  The colouring scheme for the size is as follows: Values between 0.8 - 1.2 are green, values between 0.9 - 1.1 are dark green; values above 1.5 are red, above 2 dark red.
	       	        The colouring scheme for the power is adopted from \cite{KLM}: Green refers to a power $\geq 70\%$; light red indicates a power $\leq 30\%$; dark red
    	        indicates poor results with a power $\leq 10\%$.}
    	        \label{tab:avarres}
    	\end{minipage}
    	\hfill
    	\begin{minipage}[t]{0.47\textwidth}
    	    \centering
    	    \tiny
    	    \begin{tabularx}{1.0\textwidth}{p{2mm} | X | X X X X X X X}
    	        \multicolumn{9}{c}{\textbf{Pearson}}  \\
    	        \hline \\
    	        $L_t$ & $n | m$ & 1 & 2 & 4 & 8 & 16 & 32 & 64 \\
    	        \hline \hline 
    	         $\mathcal{N}$ & 250 & 0.78  &  \cellcolor{g}0.94  &  \cellcolor{vg}1.12 &  \cellcolor{p}  1.70 &\cellcolor{vp}  2.10  &\cellcolor{vp}  2.82  & \cellcolor{vp} 4.30 \\
    	         & 500 & 0.78   &   \cellcolor{vg}0.88 &  \cellcolor{g}  1.04   & 1.32 &   \cellcolor{p} 1.72   &\cellcolor{vp}  2.46 & \cellcolor{vp}   3.24 \\
    	         & 1000&   \cellcolor{g}1.00  &   \cellcolor{g}1.04 &   \cellcolor{g} 1.00 &   \cellcolor{vg}  1.12   & 1.44   &\cellcolor{p} 1.80   &\cellcolor{vp} 2.40 \\
    	         & 2000& \cellcolor{g}1.00  & \cellcolor{g} 0.90   &\cellcolor{g} 0.96  & \cellcolor{g} 1.00 &   1.26  &  1.44  & \cellcolor{p}1.76 \\
    	        \hline 
    	        T3 & 250& \cellcolor{vg}12.21 &	\cellcolor{vg}13.97 &\cellcolor{vg}	18.77	& 6.61 & 	5.94 &	\cellcolor{vg}11 & 2.63 
\\
    	         & 500 &\cellcolor{g}  22.77 &\cellcolor{g}	22.39 &	\cellcolor{g} 28.81 &	\cellcolor{vg}18.61 &\cellcolor{g} 	20.34	& \cellcolor{vg} 13.33 & 9.09
 \\
    	        & 1000 &  \cellcolor{g}45.85 &	\cellcolor{g}39.45 &	\cellcolor{g}47.39 &\cellcolor{g}	28.98 &\cellcolor{g}	25.97 &	\cellcolor{g}21.99 &	\cellcolor{vg}18.5\\
    	       & 2000 &\cellcolor{g} 76.03 &	\cellcolor{g}46.46 &\cellcolor{g}	49.81 &	\cellcolor{vg}18.57 &	\cellcolor{vg}15.13 &	\cellcolor{vg}13.94 &	\cellcolor{vg}13.75 \\
    	        \hline
    	        T5 & 250 & \cellcolor{vg}13.62 &\cellcolor{vg}	13.41 &	\cellcolor{vg}15.96 &	5.32 &	7.13 &	9.35 &	3.46  \\
    	        & 500 &  \cellcolor{g}20.36 &	 \cellcolor{vg}16.5 &  \cellcolor{g}	20.72 & \cellcolor{vg}	12.18 &	 \cellcolor{vg}11.75 &	7.64 &	8\\
    	       	& 1000 & \cellcolor{g} 35.9 & \cellcolor{g}	24.89 &	\cellcolor{g}30.79 &	\cellcolor{vg} 13.2 &	\cellcolor{vg}11.09 &	9.49 &	8.72 \\
    	        & 2000 & \cellcolor{g}60.8 &	\cellcolor{g}31.27 &	\cellcolor{g} 37.97 &	8.63 &	6.12 &	6.03 &	5.27\\
    	        \hline 
    	        ST & 250 &\cellcolor{g} 30.18 &	\cellcolor{g}25.14 &\cellcolor{g}	31.05 &\cellcolor{vg}	10.51 &\cellcolor{vg}	13.72 &	\cellcolor{vg}14.6 &	8.31 \\
    	        &500&\cellcolor{g} 47.75 &\cellcolor{g}	29.91 &	\cellcolor{g}36.39 &\cellcolor{vg}	10.92 &	\cellcolor{vg}14.39 &	7.95 &	8.06\\
    	        &1000&\cellcolor{g}72.8 &\cellcolor{g}	27.67 &	\cellcolor{g}30.83 &	3.83 &	4.59 &	3.33 &	3.61 \\
    	        &2000& \cellcolor{g} 85.67 &	8.79 &	9.07 &	0.2 &	0.15 &	0.1	 & 0.18 \\
    	    \end{tabularx} \\ \vspace{3mm} 
    	     \begin{tabularx}{1.0\textwidth}{p{2mm} | X | X X X X X X X}
    	        \multicolumn{9}{c}{\textbf{Nass}}  \\
    	        \hline \\
    	        $L_t$ & $n | m$ & 1 & 2 & 4 & 8 & 16 & 32 & 64 \\
    	        \hline \hline     
    	        $\mathcal{N}$ & 250 &0.78   &  0.70   & \cellcolor{g}1.00 & \cellcolor{g}  0.94  & \cellcolor{g} 1.02  & \cellcolor{g} 1.00   & \cellcolor{g}0.96 \\
    	        & 500 & 0.78 &   0.78  & \cellcolor{g} 0.94  &\cellcolor{g}  0.94 & \cellcolor{g} 1.10 & \cellcolor{g}   1.10  & \cellcolor{g} 1.06 	 \\
    	        & 1000& \cellcolor{g}1.00 & \cellcolor{g} 0.96 &  \cellcolor{g}0.94  & \cellcolor{g} 0.98   &\cellcolor{g} 1.02   &\cellcolor{g} 1.06   &\cellcolor{g} 1.02 \\      
    	        & 2000 & \cellcolor{g}1.00  &  \cellcolor{vg}0.86  &\cellcolor{g}  0.90  &\cellcolor{g}  0.90  & \cellcolor{g} 1.06  & \cellcolor{g} 1.02 &   \cellcolor{g} 0.98 \\      
    	         \\
    	        \hline
    	        T3 & 250 &\cellcolor{vg} 12.66 &\cellcolor{vg}	16.01 &	9.87 &	\cellcolor{vg}10.74 &	7.01 &	7.69 &	9.13\\
    	        &500& \cellcolor{g}23.02 &	\cellcolor{g}21.5 & 	\cellcolor{vg}19.55 &	\cellcolor{vg}16.1 &	9.21 &\cellcolor{vg}	16.26 &	\cellcolor{vg}12.15 \\
    	  		&1000& \cellcolor{g}41.16 &\cellcolor{g}	30.12 &\cellcolor{vg}	18.96 &\cellcolor{vg}	15  &\cellcolor{vg}	12.73 &	\cellcolor{vg}19.37 &\cellcolor{vg}	12.34\\
    	        &2000& \cellcolor{g}66.44 &\cellcolor{g}	20.66 &	6.28 &	3.44 &	2.97 &	6.38 &	5.02 \\
    	        \hline 
    	        T5 & 250 &\cellcolor{vg}	 13.21 &\cellcolor{vg}	 13.13 &	7.41 &	8.43 &	6.2&	6.04 &	6.46\\
    	        &500&\cellcolor{vg} 19.8 &\cellcolor{vg}	16.13 &	\cellcolor{vg}13.57 &\cellcolor{vg}	12.5 &	8.09 &	8.52 &	8.15\\
    	        & 1000&  \cellcolor{g}35.03 & \cellcolor{g}	25.19 &	\cellcolor{vg}16.85 &	\cellcolor{vg}12.69 &	\cellcolor{vg}10.53 &	\cellcolor{vg}10.82 &	9.85 \\
    	        &2000& \cellcolor{g}60.37 &	\cellcolor{g}30.7 &	\cellcolor{vg} 15.09 &	9.06 &	6.68 &	6.23 &	6.57\\
    	        \hline
    	        ST & 250 &\cellcolor{g}30.03 &\cellcolor{g}	26.1 &	\cellcolor{vg}15.43 & \cellcolor{vg}	16.03 &	\cellcolor{vg} 11.83 &\cellcolor{vg} 	12.4 &\cellcolor{vg} 	12.94 \\
    	        & 500 & \cellcolor{g}	47.43 &	\cellcolor{g}	30.04 &\cellcolor{g}	21.96 &\cellcolor{vg}		17.39 &	9.85 &	\cellcolor{vg}	16.65 &	11.03 \\
    	        &1000& \cellcolor{g}72.05 &\cellcolor{g}	28.26 &	\cellcolor{vg}10.51 &	6.39 &	4.74  &	6.51 &	5.1 \\
    	        &2000&\cellcolor{g} 85.64 &	 8.75 &	1.16 &	0.18 &	0.24 &	0.37 &	0.32\\ 
    	   	\end{tabularx}  \\ \vspace{3mm} 
			\begin{tabularx}{1.0\textwidth}{p{2mm} | X | X X X X X X X}   
    	    	\multicolumn{9}{c}{\textbf{LRT}}  \\
    	        \hline \\
    	        $L_t$ & $n | m$ & 1 & 2 & 4 & 8 & 16 & 32 & 64 \\
    	        \hline \hline 
    	        $\mathcal{N}$ & 250 & \cellcolor{p}1.50  & \cellcolor{vp} 2.00  & 1.34  &  0.46  &  0.14  &  0.02   &      0 \\
    	        & 500 & \cellcolor{vg}1.18   & \cellcolor{vg}1.16  &  1.34 &   1.38  &  0.64  &  0.06    &     0 \\
    	        & 1000& 0.82  \cellcolor{vg}	 & \cellcolor{g}	 1.10  &  \cellcolor{g}	1.08 & \cellcolor{p}	  1.60 & \cellcolor{p}	  1.80 & \cellcolor{vg}	  0.88   & 0.04 \\
    	        &2000&\cellcolor{vg} 0.84 &  \cellcolor{g}  0.98 & \cellcolor{g}  1.04  & \cellcolor{vg} 1.20 &  \cellcolor{p} 1.78  & \cellcolor{vp} 2.60 &   1.40\\
    	        \hline
    	         T3 & 250 & \cellcolor{vg} 12.53 &	\text{-1.51} &	2.82 &	2.14 &	1.85 &	0.12 &	0.01\\
    	        &500& \cellcolor{g}21.76 &	\cellcolor{vg}13.66 &	9.88 &	5.74 &	5.69 &	1.52 &	0.13\\
    	        &1000&\cellcolor{g}44.09 &\cellcolor{vg}	18.69 &\cellcolor{vg}	11.22 &	5.58 &	4.23 &	5.18 &	6.52\\
    	    	&2000& \cellcolor{g}67.26 &	\cellcolor{vg}14.51 &	3.88 &	1.85 & 	1.29 &	0.87 &	1.69 \\
    	        \hline
    	        T5 & 250 & 8.95 &	 2.5 &	3.65 &	1.09 &	1.27 &	0.16 &	0\\
    	        &500& \cellcolor{vg} 14.54 & \cellcolor{vg}10.2 &	6.19 &	4.81 &	3.59 &	1.42 &	0.07\\
    	        &1000& \cellcolor{g}32.35 &\cellcolor{vg}	17.51 &	\cellcolor{vg}11.27 &	6.88 &	5.35 &	3.84 &	2.7 \\
    	        &2000& \cellcolor{g}57.44 &\cellcolor{g} 25.31 &\cellcolor{vg}	12.28 &	6.85 &	3.94 &	2.45 &	2.64 \\
    	        \hline
    	        ST & 250 &\cellcolor{g}22.84 &	\cellcolor{vg}10.83  &\cellcolor{vg} 10.47 &	5.2 &	5.11 &	1.21 &	0.07\\
    	        &500 & \cellcolor{g}40.56 &\cellcolor{g}	21.45 &\cellcolor{vg}11.71 & 8.94 &	7.55 &	5.62 &	1.42\\
    	        & 1000&\cellcolor{g} 71.5 &	\cellcolor{g} 23.92 &	8.69 &	4.63 &	 2.42 &	2.59 &	8.46 \\
    	        & 2000& \cellcolor{g}87.31 &	7.58 &	1.16 &	0.32 &	0.26 &	0.08 &	0.03 \\
    	   \end{tabularx}
    	       \caption{Backtesting AV@R: Comparison of the results in Table \ref{tab:avarres} to the results of \cite{KLM}. We display size and differences in power. The size is represented as the fraction of the true size according to \cite{KLM} divided by the desired level $\kappa=5\%$.  The colouring scheme for the size is as follows: Values between 0.8 - 1.2 are green, values between 0.9 - 1.1 are dark green; values above 1.5 are red, above 2 dark red. For the alternative T3, T5, and ST, the table shows the difference of the power of our method and the method of \cite{KLM}.  The colouring scheme for the power is as follows: Dark
    	   green are notable improvements of the power $\geq 20\%$; light green are improvements $\geq 10 \%$.} 
    	   \label{tab:avardiff}
    	\end{minipage}
    	\end{table}
	    
	    \begin{table}[h!]
    	\captionsetup{font=scriptsize}
    	\begin{minipage}[t]{0.47\textwidth}
    	        \centering
    	        \tiny
    	        \begin{tabularx}{1.0\textwidth}{p{2mm} | X | X X X X X X X}
    	        \multicolumn{9}{c}{\textbf{Pearson}}  \\
    	        \hline \\
    	        $L_t$ & $n | m$ & 1 & 2 & 4 & 8 & 16 & 32 & 64 \\
    	        \hline \hline 
    	        $\mathcal{N}$ & 250 & \cellcolor{vg}  0.88  & \cellcolor{g}  1.05  &  \cellcolor{vg}  1.18  &  1.44  & \cellcolor{p}   1.78 &  \cellcolor{vp}   2.46  &  \cellcolor{vp}  3.24 \\ 
    	        & 500 &\cellcolor{vg} 0.87  & \cellcolor{g} 1.04 & \cellcolor{g}  1.06   & \cellcolor{vg}1.20 &  1.46 &  \cellcolor{p} 1.87  & \cellcolor{vp}  2.38 \\
    	        & 1000& \cellcolor{g} 0.93  &   \cellcolor{g}0.96   & \cellcolor{g} 1.03  & \cellcolor{vg}  1.15   & 1.26  &  1.44  & \cellcolor{p}  1.94 \\
    	        & 2000&\cellcolor{g} 0.98 &  \cellcolor{g} 0.97 &\cellcolor{g}   1.00 &\cellcolor{g}   1.03 &\cellcolor{vg}  1.17  & 1.28  &\cellcolor{p}  1.51 \\
    	        \hline 
    	         T3  &   250  &  \cellcolor{p} 	29.3 & \cellcolor{p} 		26.04 &	 \cellcolor{p} 	19.25 & \cellcolor{p} 		18.96 &	 \cellcolor{p} 	23.03 &	 \cellcolor{p} 	19.91 & \cellcolor{p} 		18.44\\
    	        & 500 & 60.01 &	60.38 &	49.2 &	43.65 &	40.1 &	35.18 &	31.48\\
    	        &1000&\cellcolor{g} 93.08 &\cellcolor{g}94.11 &	\cellcolor{g}90.53 &\cellcolor{g}	87.19 &	\cellcolor{g}79.04 &	69.84 &	60.94 \\
    	        &2000&\cellcolor{g} 99.93 &	\cellcolor{g}99.97 &\cellcolor{g}	99.89 &\cellcolor{g}	99.87 &\cellcolor{g}	99.67 &\cellcolor{g}	98.15 &\cellcolor{g}	94.65 \\
    	        \hline
    	        T5 & 250 & \cellcolor{p} 16.7 &	\cellcolor{p} 18.4 &	\cellcolor{p} 15.65 &	\cellcolor{p} 16.89 &	\cellcolor{p} 19.75 &	\cellcolor{p} 19.51 &\cellcolor{p} 	21.94 \\
    	        & 500 &\cellcolor{p}  29.37 &	30.71 &	\cellcolor{p} 26.66 &	\cellcolor{p} 29.03 &	\cellcolor{p} 29.07 &	\cellcolor{p} 28.42 &\cellcolor{p} 	28.76 \\
    	        & 1000& 52.85 &	56.72 &	52.3 &	53.93 &	52.75 &	50.77 &	46.67 \\
    	        & 2000 &\cellcolor{g} 83.76 &\cellcolor{g}	87.16 &\cellcolor{g}	84.84 &	\cellcolor{g}87.35 &	\cellcolor{g}86.29 &\cellcolor{g}	83.06 &\cellcolor{g}	78  \\
    	        \hline
    	        ST & 250 & 39.9 &	41.34 &	35.68 &	37.65 &	43.33 &	39.7 &	39.29 \\
    	        & 500 & 68.42 &	\cellcolor{g}71.82 &	66.81 &	67.35 &	67.27 &	65.3 &	62.91\\
    	        & 1000&\cellcolor{g} 94.6 &\cellcolor{g}	96.18 &\cellcolor{g}95.25 &\cellcolor{g}	95.33 &\cellcolor{g}	94.45 &	\cellcolor{g}92.6 &	\cellcolor{g}90\\
    	        &2000&99.92 &	\cellcolor{g}99.97 &	\cellcolor{g}99.98 &\cellcolor{g}	100 &\cellcolor{g}	99.97 &\cellcolor{g}	99.93 &\cellcolor{g}	99.73\\
    	        \end{tabularx} \\
    	        \vspace{3mm}
    	        \begin{tabularx}{1.0\textwidth}{p{2mm} | X | X X X X X X X}
    	        \multicolumn{9}{c}{\textbf{Nass}}  \\
    	        \hline \\
    	        $L_t$ & $n | m$ & 1 & 2 & 4 & 8 & 16 & 32 & 64 \\
    	        \hline \hline 
    	        $\mathcal{N}$ & 250 &\cellcolor{vg}0.81  &\cellcolor{g}  0.93 & \cellcolor{g}0.97  & \cellcolor{g} 1.09   &\cellcolor{vg} 1.12   & \cellcolor{vg}1.12  &  \cellcolor{g}1.08  \\
    	        & 500 &\cellcolor{vg} 0.85  &\cellcolor{g}  0.97  &\cellcolor{g}   0.95 & \cellcolor{g}  1.00 & \cellcolor{g}   1.07   &\cellcolor{g}  1.10 &   \cellcolor{g} 1.06\\
    	        & 1000& \cellcolor{g} 0.91  &  \cellcolor{g}0.93  &  \cellcolor{g}0.97  &  \cellcolor{g}1.04  &  \cellcolor{g}1.02  & \cellcolor{g} 1.05  &  \cellcolor{vg}1.15 \\
    	        & 2000&\cellcolor{g} 0.97 &  \cellcolor{g} 0.93   &\cellcolor{g} 0.96  &\cellcolor{g}  0.96  &  \cellcolor{g}1.04 & \cellcolor{g} 1.08  &  \cellcolor{g} 1.04\\
    	        \hline
    	         T3 &  250   &\cellcolor{p} 27.12 &\cellcolor{p}	23.78 &	\cellcolor{p}16.51 &	\cellcolor{p}16.89 &	\cellcolor{p}14.36 &	\cellcolor{p}17.77 &\cellcolor{p}	13.68 \\
    	        & 500  &60.01 &	59.04 &	47.24 &	40.16 &	35.45 &	32.56 &	\cellcolor{p}25.81 \\
    	        &1000&\cellcolor{g}93.08 &\cellcolor{g}	93.91 &\cellcolor{g}	89.94 &\cellcolor{g}	85.86 &\cellcolor{g}	76.88 &	66.17 &	57.1 \\
    	        &2000&\cellcolor{g} 99.93 &\cellcolor{g} 99.97 &\cellcolor{g}	99.89 &	\cellcolor{g}99.86 &	\cellcolor{g}99.6 &\cellcolor{g}	97.84 &	\cellcolor{g}93.49\\
    	       \hline
    	        T5 & 250 &\cellcolor{p}16.23 &\cellcolor{p}	17.46 &\cellcolor{p}	13.96 &\cellcolor{p}	14.84 &\cellcolor{p}	13.26 &	\cellcolor{p}13.96 &\cellcolor{p}	10.8 \\
    	        & 500 &\cellcolor{p} 29.37 &\cellcolor{p}	29.97 &\cellcolor{p}	25.48 &\cellcolor{p}	26.71&	\cellcolor{p}25 &\cellcolor{p}	23.76 &\cellcolor{p}	19.86 \\
    	        & 1000 & 52.84 &	56.19 &	51.25 &	52.25 &	50.18 &	46.19 &	40.33\\
    	        &2000 &\cellcolor{g} 83.56 &\cellcolor{g}	87.03 &	\cellcolor{g}84.49 &	\cellcolor{g}86.9 &	\cellcolor{g}85.42 &	\cellcolor{g}81.41 &\cellcolor{g}	74.99\\
    	        \hline
    	        ST & 250 &50.38 &	44.84 &	35.28 &	35.65 &	32.98 &	36.06 &	30.22
    	        \\
    	        & 500 &\cellcolor{g}75.36 &\cellcolor{g}	76.41 &	68.63 &	66.25 &	63.83 &	62.02 &  55.15\\
    	        & 1000 &\cellcolor{g}97 &	\cellcolor{g}97.42 &	\cellcolor{g}96.04 &	\cellcolor{g}95.42 &	\cellcolor{g}93.96 &\cellcolor{g}	91.27 &	\cellcolor{g}88.14 \\
    	        &2000 &\cellcolor{g}99.96 &	\cellcolor{g}100 &	\cellcolor{g}99.98 &\cellcolor{g}	100 &	\cellcolor{g}99.97 &\cellcolor{g}	99.92 &\cellcolor{g} 99.68
    	        \end{tabularx}\\
    	        \vspace{3mm}
    	         \begin{tabularx}{1.0\textwidth}{p{2mm} | X | X X X X X X X}
    	        \multicolumn{9}{c}{\textbf{LRT}}  \\
    	        \hline \\
    	        $L_t$ & $n | m$ & 1 & 2 & 4 & 8 & 16 & 32 & 64 \\
    	        \hline \hline 
    	        $\mathcal{N}$ & 250 &\cellcolor{g} 1.08 & \cellcolor{g} 1.05  & 1.42  &  1.23  &  0.58  &  0.08  &       0  \\
    	        & 500 &  \cellcolor{vg} 1.13 & \cellcolor{vg}   1.17 &   1.26  & \cellcolor{p}  1.60 &  \cellcolor{p}   1.72  &  \cellcolor{vg}  0.80   & 0.05 \\
    	        & 1000  &\cellcolor{g} 1.01 & \cellcolor{g} 1.03  &  \cellcolor{vg}1.14  &  1.30 &  \cellcolor{p}1.81 &   \cellcolor{vp}2.46  &   1.40 \\
    	        & 2000 & \cellcolor{g}1.01  &\cellcolor{g} 1.02 & \cellcolor{g}  1.06 &\cellcolor{g} 1.10 &  1.38  &\cellcolor{vp} 2.14   &\cellcolor{vp} 3.90  \\
    	        \hline 
    	        T3  &  250  & 42.19 &	39.52 &	40 &	35.42 &	\cellcolor{p}12.83 &	\cellcolor{vp}1.44 &\cellcolor{vp}	0.01 \\
    	        & 500 & 67.25 &	69.66 &	66.42 &	69.22 &	66.22 &	34.17 &	\cellcolor{vp}2.08 \\
    	        & 1000 & \cellcolor{g}94.55 &	\cellcolor{g}95.83 &\cellcolor{g}	94.49 &\cellcolor{g}	94.24 &	\cellcolor{g}93.8 &	\cellcolor{g}93.45 &\cellcolor{g}	71.55 \\
    	        &2000 &\cellcolor{g} 99.93 &\cellcolor{g}	99.97 &\cellcolor{g}	99.93 &\cellcolor{g}	99.97 &\cellcolor{g}99.91 &\cellcolor{g}	99.83 &\cellcolor{g}	99.86 \\
    	        \hline
    	        T5 & 250 &\cellcolor{p}18.98 &	\cellcolor{p}17.92 &	\cellcolor{p}19.3 &\cellcolor{p}	17.99 &	\cellcolor{vp}8.38 &\cellcolor{vp}	1.18 &	\cellcolor{vp}0.01 \\ 
                & 500 &\cellcolor{p}27.46 &	\cellcolor{p}29.94 &	\cellcolor{p}29.22 &	33.61 &	33.92 &	\cellcolor{p}18.53 &\cellcolor{vp}	1.32 \\
                & 1000& 51.34 &	55.83 &	53.42 &	54.66 &	57.35 &	60.63&	40.72 \\
    	        & 2000&\cellcolor{g} 83.32 &\cellcolor{g}	86.92 &\cellcolor{g}	85.3 &\cellcolor{g}	87.03 &\cellcolor{g}85.81 &\cellcolor{g}	85.69&	\cellcolor{g}88.4 \\
    	        \hline 
    	        ST & 250 &44.8 &	43.72 &	43.95 &	43.5 &	\cellcolor{p}	26.86 &	\cellcolor{vp}6.19 &\cellcolor{vp}	0.16\\
    	        & 500 &68.62 &	\cellcolor{g}	73.04 &\cellcolor{g}		71.3 &\cellcolor{g}		75.11 &\cellcolor{g}		75.69 &	57.18 &	\cellcolor{p}	12.13 \\
    	        & 1000 &\cellcolor{g}	94.7 &	\cellcolor{g}	96.5 &	\cellcolor{g}	95.91 &\cellcolor{g}		96.43 &\cellcolor{g}		96.53 &\cellcolor{g}		96.99 &\cellcolor{g}		90.08 \\
    	        &2000 &\cellcolor{g}	99.92 &\cellcolor{g}		 99.97 &\cellcolor{g}	 99.98 & \cellcolor{g}	100 &\cellcolor{g}		99.96 &	\cellcolor{g}	99.97 &\cellcolor{g}		99.97
    	  \end{tabularx}
    	       \caption{Backtesting GlueV@R: Estimated size (for the hypothesis $H_0$ with distribution $\mathcal{N}$) and power in $\%$ (for the alternatives $H_1$ with distributions T3, T5, ST, respectively)
    	        for the Pearson test, Nass test and LRT. The size is represented as the fraction of the true size according to our simulation divided by the desired level $\kappa=5\%$.  The colouring scheme for the size is as follows: Values between 0.8 - 1.2 are green, values between 0.9 - 1.1 are dark green; values above 1.5 are red, above 2 dark red.
	       	        The colouring scheme for the power is adopted from \cite{KLM}: Green refers to a power $\geq 70\%$; light red indicates a power $\leq 30\%$; dark red
    	        indicates poor results with a power $\leq 10\%$.}
    	  \label{tab:gvres}
    	\end{minipage}
    	\hfill
    	\begin{minipage}[t]{0.47\textwidth}
    	    \centering
    	    \tiny
    	   \begin{tabularx}{1.0\textwidth}{p{2mm} | X | X X X X X X X}
    	        \multicolumn{9}{c}{\textbf{Pearson}}  \\
    	        \hline \\
    	        $L_t$ & $n | m$ & 1 & 2 & 4 & 8 & 16 & 32 & 64 \\
    	        \hline \hline 
    	      	$\mathcal{N}$ & 250 & \cellcolor{g} 0.91   &  \cellcolor{g}0.97& \cellcolor{g} 1.00  &  1.22  &  1.43 & \cellcolor{p} 1.93  &\cellcolor{vp}  2.30 
    	      \\
    	        & 500 &  \cellcolor{g}1.02 &   \cellcolor{g} 0.92 & \cellcolor{g}  1.07  & \cellcolor{vg} 1.11  &  1.23  &  1.44   &\cellcolor{p} 1.88
    	       \\
    	        & 1000 & \cellcolor{g}  1.06 & \cellcolor{g}   1.02 &  \cellcolor{g}  1.06  & \cellcolor{g}  1.06  & \cellcolor{g} 1.09  &  1.30  & \cellcolor{p}  1.52\\
    	        & 2000 &\cellcolor{g}0.96 &   \cellcolor{g} 1.02  &\cellcolor{g}  0.94  & \cellcolor{g}1.03 &  \cellcolor{g} 1.09 &  \cellcolor{g}1.06   & 1.22 \\
    	        \hline 
    	        T3  &   250  & 62.31 &	56,63 &	40.22 &	\cellcolor{p}15.42 &	\cellcolor{p}13.66 &	\cellcolor{p}10.98 &\cellcolor{vp}	8.92 \\
    	        & 500 &\cellcolor{g}94.67 &	\cellcolor{g}93.06 &\cellcolor{g}	87.89 &68.13 &	47.84 &	30.08 &\cellcolor{p}	19.8 \\
    	        &1000&\cellcolor{g} 99.94 &\cellcolor{g}	99.95 &\cellcolor{g}	99.89 &\cellcolor{g}	99.41 &\cellcolor{g}	97.72 &\cellcolor{g}	85.58 &	58.99 \\
    	        &2000& \cellcolor{g}100	 &\cellcolor{g}	100	 &\cellcolor{g}	100	 &	\cellcolor{g}100	 &	\cellcolor{g}100	 &\cellcolor{g}	100	 &\cellcolor{g}	99.27 \\
    	     	\hline
    	        T5 & 250 &\cellcolor{p}20.95 &\cellcolor{p}	20.06 &\cellcolor{p}	14.29 &\cellcolor{vp}	8.54 &\cellcolor{p}	10.98 &	\cellcolor{p}10.7 &	\cellcolor{p}11.11\\
    	        & 500 & 41.06 &	40.21 &	33.11 &\cellcolor{p}	20.04 &\cellcolor{p}	21.55 &\cellcolor{p}	19.13 &	\cellcolor{p}16.5\\
    	        & 1000& \cellcolor{g}72.12 &\cellcolor{g}	73.24 &	67.47 &	52.55 &	51.34 &	42.45 &	35.06 \\
    	        & 2000 &\cellcolor{g} 96.58 &\cellcolor{g}	97.15 &	\cellcolor{g}96.4 &	\cellcolor{g}91.7 &	\cellcolor{g}91.38 &	\cellcolor{g}85.29 &\cellcolor{g}	74.42 \\
    	        \hline
    	        ST & 250 &60.06 &	58.77 &	47.03 &\cellcolor{p}	26.51 &	\cellcolor{p}29.97 &	\cellcolor{p}26.42 &\cellcolor{p}	23.73 \\
    	        & 500 &\cellcolor{g} 92.67 &\cellcolor{g}	92.54 &	\cellcolor{g}88.77 &\cellcolor{g}	74.05 &	67.87 &	56.88 &	47.83 \\
    	        & 1000&\cellcolor{g}99.86 &	\cellcolor{g}99.94 &	\cellcolor{g}99.82 &\cellcolor{g}	99.4 &	\cellcolor{g}98.7 &\cellcolor{g}	95.3 &	\cellcolor{g}87.76\\
    	        &2000&\cellcolor{g}100	 &\cellcolor{g}	100	 &\cellcolor{g}	100	 &\cellcolor{g}	100	 &	\cellcolor{g}100	 &	\cellcolor{g}100	 &\cellcolor{g}	99.96\\
    	        \end{tabularx} \\
    	        \vspace{3mm}
    	        \begin{tabularx}{1.0\textwidth}{p{2mm} | X | X X X X X X X}
    	        \multicolumn{9}{c}{\textbf{Nass}}  \\
    	        \hline \\
    	        $L_t$ & $n | m$ & 1 & 2 & 4 & 8 & 16 & 32 & 64 \\
    	        \hline \hline 
    	        $\mathcal{N}$ & 250 &\cellcolor{vg}0.87  & \cellcolor{g} 0.91 &\cellcolor{g}   0.90  &\cellcolor{g} 1.06   &\cellcolor{g} 1.08 & \cellcolor{vg}  1.13  &\cellcolor{g}  1.08\\
    	        & 500 & \cellcolor{g}0.97 & \cellcolor{vg}  0.88 & \cellcolor{g} 1.01   & \cellcolor{g}1.00 &  \cellcolor{g} 1.00   &\cellcolor{g} 1.05  &  \cellcolor{vg}1.11\\
    	        & 1000& \cellcolor{g}  1.04  & \cellcolor{g}   0.99  & \cellcolor{g}   1.02 & \cellcolor{g}   0.99 &    \cellcolor{g} 0.98  &  \cellcolor{g}  1.09  &  \cellcolor{g}  1.09\\
    	        & 2000& \cellcolor{g}0.95 & \cellcolor{g}  1.00 & \cellcolor{g}  0.92  &\cellcolor{g}  1.00 & \cellcolor{g}  1.04 &  \cellcolor{g} 0.97 &   \cellcolor{g} 1.00\\
    	        \hline
    	        T3 &  250   &  61.07 &	54.45 &	37.19 &\cellcolor{p}		12.86 &	\cellcolor{p}	11.13 &\cellcolor{vp}		9.88 &\cellcolor{vp}		6.62 \\
    	        & 500  &\cellcolor{g}	 93.72 &	\cellcolor{g}	92.41 &	\cellcolor{g}	87.04 &	65.86 &	43.77 &\cellcolor{p}		26.51 &\cellcolor{p}		17.62\\
    	        &1000& \cellcolor{g}	99.94 &\cellcolor{g}	 99.95 &\cellcolor{g}		99.88 &\cellcolor{g}		99.36 &\cellcolor{g}		97.37 & \cellcolor{g}	83.04 &	54.73 \\
    	        &2000&100\cellcolor{g}		 &\cellcolor{g}		100	 &\cellcolor{g}		100	 &	\cellcolor{g}	100 &	\cellcolor{g}	100 &	\cellcolor{g}	100	 &\cellcolor{g}		99.14\\
    	        \hline
    	        T5 & 250 &\cellcolor{p}	20.39 &\cellcolor{p}		19.1 &\cellcolor{p}		13.08 &\cellcolor{vp}		7.37 &	\cellcolor{vp}	8.66 &\cellcolor{vp}		8.23 &\cellcolor{vp}		6.35 \\
    	        & 500 & 39.68 &	39.11 &	31.98 &	\cellcolor{p}	18.67 &	\cellcolor{p}	19.61 &\cellcolor{p}		16.18 &	\cellcolor{p}	12.67\\
    	        & 1000 & \cellcolor{g} 72.03 & \cellcolor{g}	72.81 &	67  &	51.48 &	49.75 &	39.6 &	31.21 \\
    	        &2000 & \cellcolor{g} 96.54 & \cellcolor{g}	97.11 & \cellcolor{g}	96.33 & \cellcolor{g}	91.47 & \cellcolor{g}	90.94 & \cellcolor{g}	84.33 & \cellcolor{g}	72.28 \\
    	        \hline
    	        ST & 250 & 59.51 &	57.23 &	44.3 &\cellcolor{p}	23.68 &\cellcolor{p}	25.75 &\cellcolor{p}	24.42 &\cellcolor{p}	18.5\\
    	        & 500 &\cellcolor{g} 91.72 &	\cellcolor{g}92.01 &\cellcolor{g}	88.11 &\cellcolor{g}	72.21 &	64.9 &	53.16 &	43.88\\
    	        & 1000 &\cellcolor{g}99.86 &\cellcolor{g}	99.93 &\cellcolor{g}	99.82 &\cellcolor{g}	99.36 &\cellcolor{g}	98.58 &\cellcolor{g}	94.48 &\cellcolor{g}	85.74 \\
    	        &2000&\cellcolor{g}100	 &\cellcolor{g}	100 &\cellcolor{g}	100	&	\cellcolor{g}100	 &\cellcolor{g}	100	 &\cellcolor{g}100	&\cellcolor{g}	99.96
    	        \end{tabularx}\\
    	        \vspace{3mm}
    	        \begin{tabularx}{1.0\textwidth}{p{2mm} | X | X X X X X X X}
    	        \multicolumn{9}{c}{\textbf{LRT}}  \\
    	        \hline \\
    	        $L_t$ & $n | m$ & 1 & 2 & 4 & 8 & 16 & 32 & 64 \\
    	        \hline \hline 
    	        $\mathcal{N}$ & 250 & \cellcolor{p}1.54 &\cellcolor{g}  1.10 & \cellcolor{vg} 1.17  & \cellcolor{p} 1.65 & \cellcolor{p} 1.71 &\cellcolor{vg}  0.80 &   0.04 \\
    	        & 500 & \cellcolor{g} 1.07 &  \cellcolor{g}1.05 &  \cellcolor{vg} 1.17 &  1.38  &  \cellcolor{p} 1.83 &    \cellcolor{vp}2.49  & 1.37\\
    	        & 1000  &   \cellcolor{vg}1.12 &   \cellcolor{g}  1.07   &   \cellcolor{g}1.09  &  \cellcolor{g} 1.10 &    1.32  &  \cellcolor{vp} 2.16   &   \cellcolor{vp}3.79 \\
    	        & 2000 &\cellcolor{g} 0.98 &  \cellcolor{g} 1.02  & \cellcolor{g} 0.97  &\cellcolor{g}  1.05  & \cellcolor{vg}1.15 &  1.36 & \cellcolor{vp}  2.69\\
    	        \hline 
    	        T3  &  250 &\cellcolor{g} 70.37 &	68.39 &	64.67 &	60.66 &	55.03 &	\cellcolor{p}13.94 &\cellcolor{vp}	0.1 \\
    	        &500  & \cellcolor{g}95.55 &	\cellcolor{g}95.47 &\cellcolor{g}	93.66 &	\cellcolor{g}88.75 &\cellcolor{g}	88.31 &\cellcolor{g}	86.59 &	41.03 \\
    	        & 1000 &\cellcolor{g}99.98 &\cellcolor{g}	99.97 &	\cellcolor{g}99.95 &\cellcolor{g}	99.79 &	\cellcolor{g}99.64 &\cellcolor{g}	99.34 &	\cellcolor{g}99.26 \\
    	        &2000 &\cellcolor{g}100	 &\cellcolor{g}	100	 &	\cellcolor{g}100	 &	\cellcolor{g}100	 &	\cellcolor{g}100	 &	\cellcolor{g}100	 &\cellcolor{g}	100 \\
    	        \hline
    	        T5 & 250 &\cellcolor{p}23.86 &\cellcolor{p}	23.66 &	\cellcolor{p}22.31 &\cellcolor{p}	22.71 &	\cellcolor{p}22.12 &\cellcolor{vp}	7.46 &\cellcolor{vp}	0.17 \\ 
                & 500 & 42.74 &	43.38 &	40.33 &	34.53 &	39.87 &	42.01 &	\cellcolor{p}17.36 \\
                & 1000&\cellcolor{g} 74.04 &\cellcolor{g}	75.52 &\cellcolor{g}	71.97 &	63.42 &	64.29 &	65.74 &\cellcolor{g}	71.6 \\
    	        & 2000&\cellcolor{g}96.97 &\cellcolor{g}	97.4 &\cellcolor{g}	97.04 &\cellcolor{g}	94.11 &\cellcolor{g}	94.09 &\cellcolor{g}	91.54 &\cellcolor{g}	91.32 \\
    	        \hline 
    	        ST & 250 &65.23 &	66.43 &	63.11 &	59.75 &	59.07 &	\cellcolor{p}25.49 &	\cellcolor{vp}0.82\\
    	        & 500 &\cellcolor{g}93.35 &	\cellcolor{g}94.42 &\cellcolor{g}	92.84 &\cellcolor{g}	88.36 &	\cellcolor{g}89.5 &\cellcolor{g}	89.27 &	60.48\\
    	        & 1000 &\cellcolor{g}99.89 &\cellcolor{g}	99.95 &\cellcolor{g}	99.89 &	\cellcolor{g}99.81 &	\cellcolor{g}99.69 &\cellcolor{g}	99.43 &\cellcolor{g}	99.56 \\
    	        &2000 &\cellcolor{g} 100	 &	\cellcolor{g}100	 &	\cellcolor{g}100	 &\cellcolor{g}	100	 &	\cellcolor{g}100	 &\cellcolor{g}	100	 &	\cellcolor{g}100
    	  \end{tabularx}
    	       \caption{Backtesting the DRM corresponding to a distortion function that is neither left- nor right-continuous: Estimated size (for the hypothesis $H_0$ with distribution $\mathcal{N}$) and power in $\%$ (for the alternatives $H_1$ with distributions T3, T5, ST, respectively) 
    	        for the Pearson test, Nass test and LRT. The size is represented as the fraction of the true size according to our simulation divided by the desired level $\kappa=5\%$.  The colouring scheme for the size is as follows: Values between 0.8 - 1.2 are green, values between 0.9 - 1.1 are dark green; values above 1.5 are red, above 2 dark red.
	       	        The colouring scheme for the power is adopted from \cite{KLM}: Green refers to a power $\geq 70\%$; light red indicates a power $\leq 30\%$; dark red
    	        indicates poor results with a power $\leq 10\%$.}
    	  \label{tab:genres}
    	\end{minipage}
    	\end{table}
    	
        \textit{Backtesting AV@R.} 
        Table~\ref{tab:avarres} shows the size and power of the tests when backtesting AV@R. The size is represented as the fraction of the true size according to our simulation divided by the desired level $\kappa=5\%$. The size of the Pearson test is only close to the desired level for sufficiently small $m$. In the case of the LRT, the size of the test is not always very close to $\kappa$, sometimes smaller, sometimes larger. Table~\ref{tab:avardiff} shows  the results of the approach\footnote{Their results are displayed in Table 3 of their paper, \cite{KLM}.} of \cite{KLM} for comparison. The qualitative behavior of the size of the Pearson and the LRT test is not very different for both methodologies. Both tests are not ideal for backtesting AV@R. In contrast, the size of Nass' test is very close to desired level of $\kappa=5\%$ in all cases,  as displayed in Table~\ref{tab:avarres}. This is qualitatively not very different from \cite{KLM}, see Table~\ref{tab:avardiff}. This indicates that the Nass' test is the preferred choice for backtesting AV@R, while the Pearson and the LRT test should not be chosen.
        
Table~\ref{tab:avarres} also shows the power of our method for the different alternatives and various choices of the parameters $m$ and $n$. As expected, the power is good if $n$ is large. Most interesting is the Nass' test. For $m$ smaller than $8$, the power exceeds for $n= 500$ a value of $40\%$ for T3, $25\%$ for T5, and $66\%$ for ST, respectively. For $m$ smaller than $64$ and $n=2000$, the power exceeds $93\%$ for T3, $74\%$ for T5, and $99.6\%$ for ST, respectively.

In comparison to \cite{KLM}, our method improves the power of all tests for most case studies. Table~\ref{tab:avardiff} shows for the alternatives the difference between the power of our method and the results of \cite{KLM}. Positive numbers indicate that our approach improves the power. Except for one entry, all numbers are positive. In particular, for the Nass' test our method uniformly improves the power substantially. 

Our randomized method removes the discretization error. Here, $m$ can be interpreted as a stratification parameter. If $m$ becomes larger, stratification moves our method closer to the method of \cite{KLM} with the same $m$. At the same time, the discretization error becomes smaller in their method. This is the reason why the improvement of our method in comparison to  \cite{KLM} becomes smaller, when $m$ increases.  However, as the experiments show, the improvements are still not negligible and large $m$ does not necessarily lead to the best power.

        \textit{Comparing different risk measures.} We now study the size across the three considered risk measures. The estimated size of the tests is shown for different $n$ and $m$ in Tables~\ref{tab:avarres}, \ref{tab:gvres} and \ref{tab:genres}. As before, the size is represented as the fraction of the true size according to our simulation divided by the desired level $\kappa=5\%$. The data clearly demonstrate that Nass' test performs well in terms of approximating the desired level and clearly outperformes the Pearson and the LRT test in this respect. The latter two work well if $m$ is not too large and $n$ is not too small. 
        
         In all case studies, as expected, the power is best for $n$ large. Good backtests require $n$ to be larger than 500 to 1000 for the chosen hypothesis and alternatives. The best results are obtained for $2 \leq m \leq 16$. There is no indication that Nass' test performs worse than the other tests in correctly identifying the alternatives if they are true. 
        
        \textit{Conclusion.} The Nass' test outperforms the Pearson and the LRT test, since its size is generally close to the desired level and its overall power is not worse than or at least rather close to the power of the alternative tests across most case studies.

    \section{An Application to Asset-Liability Management}\label{sec:ALM}
   
Financial institutions need to manage their risks arising from the evolution of their assets and liabilities. Asset-liability management (ALM) requires probabilistic models that enable a stochastic projection of the arising risks into the future. In this section, we apply our backtesting method to a company's net asset value in order to validate risk measurements in an ALM model.
    
    \subsection{The Model}
    Inspired by  \cite{WeberHamm2014} and 
    \cite{HammKnispelWeber2019}, we consider the assets and liabilities of a non-life insurance firm.  
    Time is discrete and enumerated by $t = 1, \dots, n$. Each time period could be interpreted as years, months, weeks or even days.     
    Denote with $A_t$, $L_t$, $E_t$ the book value of assets, liabilities resp. the net asset value. 
	At every point in time the value of the assets is equal to the liabilities and the net asset value, i.e. $A_t = L_t + E_t$.     
    
\paragraph{Asset Model.}
    The market consists of two primary products, a riskless bond and a risky stock with price processes $B= (B_t)_t$ and $S= (S_t)_t$, respectively. We assume that $B_t = \exp (rt)$ and
    \[ S_t = \exp \left(  \left( \mu - \frac{\sigma^2}{2} 
    \right) t + \sigma W_t \right), \]
    where $W_t$ is a Wiener process. 
  At each point in time $t$, $\eta_t^B$ and $\eta_t^S$ denote the number of shares held in the bond and the stock, respectively. The resulting value of the asset portfolio is     \[ A_t = \tilde{S}_t + \tilde{B}_t  \quad \mbox{with } \; \tilde{S}_t = \eta_t^S S_t, \;  \tilde{B}_t = \eta_t^B B_t.  \]  For simplicity, we assume that $r=0$.
  
\paragraph{Investment Strategy.} We assume that at the beginning of each period a fraction $b \in [0,1]$ of the book value of the liabilities and equity is invested into the stock, while the remaining fraction $1-b$ is invested into the bond. This implies that
$$  \eta_t^S  = b \cdot \frac{A_t}{S_t}, \hspace{10mm} \eta_t^B = (1-b) \cdot A_t   $$    

\paragraph{Liability Model.}
	The insurer has a constant claims reserve $v$, such that $L_t = v$ at every point in time $t$. At the beginning of every time period the insurer takes in
	constant premiums $\pi$. 
	Insurance claims at the end of every period $t$ are assumed to follow a collective model
    \[ C_t = \sum_{k=1}^{N_t} \xi_{t,k} , \quad t=1,\dots, n,\]
    where the frequencies $N_t \in \mathbb{N}$ and the severities $\xi_{t,k} \geq 0$, $k \in \mathbb{N}$, $t=1, \dots, n$ are independent.

\paragraph{Evolution of the Net Asset Value.} At time $t=1, \dots, n$ the insurer must pay $C_{t}$ for the claims incurred in the previous period and receives premium payments $\pi$ for the next period, i.e., the amount to be reinvested into the assets equals 
    $$
   A_t  =  \eta_{t-1}^S S_{t} + \eta_{t-1}^B  B_t - C_{t} + \pi =  A_{t-1} \cdot \left( b \cdot \frac{ S_t}{S_{t-1}} +  (1-b)   \right)  - C_{t} + \pi  .
    $$

This can be rewritten in terms of the net asset value:
$$ E_t =   E_{t-1} +  (E_{t-1} +v ) \cdot b \cdot \left( \frac{ S_t}{S_{t-1}} -  1  \right)  - C_{t} + \pi  . $$
      
    \subsection{Simulation Design}
     
    To illustrate our backtesting methodology in the context of the model, we consider as in Section~\ref{sec:gvsim} the GlueV@R risk measure. We choose the parameters 
    $h_1 = 2/5$, $h_2 = 2/3$, $\alpha  = 0.05$ and $\beta = 0.01$.  Further case studies for $AV@R$ show similar results and are presented in Appendix~\ref{app:ALM}.
    Individual time periods are interpreted as days. 
    The parameters defining the evolution of the assets are set to  $\mu = \log(1.1) / 360$, $\sigma = 0.2 / \sqrt{360}$ and $b = 0.05$. 
    In order to specify the liability model, we assume that $N_t$ are iid Poisson distributed random variables with parameter $\lambda > 0$. Letting $\lambda = 7$, expectation and variance are equal to $7$. The claims $\xi_{t,k}$ are iid exponentially distributed with parameter $1 / \theta > 0$. With $\theta = 1000$, we obtain an expectation of $1000$ and a variance of $10^6$.
    We set $E_0 = 20000$. 
    The premiums per day $\pi$  equal expected claims plus a $3\%$ safety margin, i.e., $\pi = 1.03 \lambda \theta$. 
    The reserve is calculated as the expectation of the annual claims plus a $3\%$ margin, i.e., $v = 360 \cdot 1.03 \lambda \theta$. 
    
  In this simple experimental ALM case study, we consider only the Nass test which showed the best performance in the case studies in Section \ref{sec:DistStudies} (just as in \cite{KLM}) and constitutes the most promising methodology.
    The size of the Nass test is estimated as in Section~\ref{sec:DistStudies}. We consider the following alternatives labeled as NB, PAR and LOGN: 
	\begin{itemize}
		\item[(NB)]
		We replace the Poisson distributed frequencies $N_t$ by frequencies $N^\prime_t$ with a negative binomial distribution with a number of failures $r \in \mathbb{N}$  and a success probability $q \in [0,1]$.  Setting $q = 1 / ( 1 +  \lambda / r)$,  $N^\prime_t$ and $N_t$ possess the same expectation.
Letting $r = 7$, the variance of the negative binomial distribution equals $\mathsf{Var}[N^\prime_t] = \lambda ( 1 + \lambda / r) = 14$. 
		
		\item[(PAR)] Claim sizes are given by $\xi^1_{t,k} - 1$ with $\xi^1_{t,k}$ being Pareto distributed with scale $x_0 = 1$ and shape $a > 0$.
		Setting $a  = (\theta + 1) / \theta$ guarantees that the expectation of the claim sizes $\xi^1_{t,k} - 1$ of this alternative equals the expectation of the claim sizes $\xi_{t,k}$ of the null hypothesis.
		
		\item[(LOGN)] Under this alternative the claim sizes $\xi^2_{t,k}$ are log normally distributed with parameters $\mu \in \mathbb{R}$ and $\sigma > 0$. 
	 	We set $\mu = \log ( \theta) - \sigma^2 / 2$ such that $\xi^2_{t,k}$ has the same expectation as $\xi_{t,k}$, and we choose $\sigma = 1$ such that $\mathsf{Var}[\xi^2_{t,k}] = (\exp(\sigma^2) -1 ) \theta^2 \approx 1.7183 \cdot 10^6$. 
	\end{itemize}

	\begin{table}[h!]
    	\captionsetup{font=scriptsize}
    	\begin{minipage}[t]{1\textwidth}
    	        \centering
    	        \begin{tabularx}{1.0\textwidth}{p{10mm} | p{8mm} | X X X X X X X}
    	        \multicolumn{9}{c}{\textbf{Nass}}  \\
    	        \hline \\
    	        & $n | m$ & 1 & 2 & 4 & 8 & 16 & 32 & 64 \\
    	        \hline
    	        Size & 250 &\cellcolor{p}1.53 &   1.34 &   1.35 &  \cellcolor{g} 1.07 &  \cellcolor{g} 1.03  &  1.24  &  \cellcolor{g}1.06 \\
    	        &  500 & \cellcolor{p}1.57 &   1.36 &\cellcolor{vg}    1.19  &   1.21 &  \cellcolor{vg}    1.19  & \cellcolor{vg}    1.14   &\cellcolor{vg}    1.19 \\ 
    	         &  1000 &\cellcolor{p} 1.77  &  1.46  &  1.30 &   1.22  & \cellcolor{vg}   1.18 & \cellcolor{vg}    1.16  &  1.25 \\
    	         & 2000 &\cellcolor{p}1.76  &\cellcolor{p}  1.50  &  1.34  &  1.25  &  1.27  & \cellcolor{vg}   1.20   & 1.25   \\
    	         \hline
    	         NB & 250  &  69.59  & 65.27 &   64.09  &  61.46 &   59.20 &   59.08  &  60.20\\
    	         & 500 & \cellcolor{g}  89.44  &  \cellcolor{g}  89.56  &  \cellcolor{g}  84.74  &   \cellcolor{g} 84.72  &  \cellcolor{g}  82.61 &   \cellcolor{g}  79.70 &  \cellcolor{g}   82.08 \\
    	         & 1000& \cellcolor{g}   99.02  & \cellcolor{g}  98.88  &  \cellcolor{g}  98.45  &  \cellcolor{g}  98.28  &  \cellcolor{g}  96.74 &  \cellcolor{g}   96.49   &  \cellcolor{g} 95.97\\
    	         & 2000 &  \cellcolor{g} 99.99  &  \cellcolor{g} 100.00  &   \cellcolor{g} 99.99   &  \cellcolor{g} 100 &   \cellcolor{g} 99.98 &   \cellcolor{g} 99.93  &  \cellcolor{g} 99.90 \\
    	         \hline
    	         PAR & 250 & \cellcolor{g} 97.60  &\cellcolor{g} 98.31&\cellcolor{g}  98.27 & \cellcolor{g} 97.97 &\cellcolor{g}  97.75 & \cellcolor{g} 97.72 & \cellcolor{g} 97.29\\
    	         & 500 & \cellcolor{g}99.96 & \cellcolor{g} 100&\cellcolor{g} 99.93 & \cellcolor{g} 99.97  & \cellcolor{g}99.96 &  \cellcolor{g}99.97  & \cellcolor{g}99.95 \\
    	         & 1000 &\cellcolor{g} 100 &\cellcolor{g} 100 & \cellcolor{g}100 &\cellcolor{g} 100 &\cellcolor{g} 100 & \cellcolor{g}100& \cellcolor{g} 100\\
			   	& 2000 &  \cellcolor{g}100 &\cellcolor{g} 100 &\cellcolor{g}100 & \cellcolor{g}100 & \cellcolor{g}100 &\cellcolor{g} 100& \cellcolor{g} 100\\  
			   	\hline     	         
			   	LOGN & 250 & 53.11 &   51.63 &   50.85  &  47.94  &  47.83   & 46.28   & 44.05 \\
			   	& 500 &\cellcolor{g} 75.63  &\cellcolor{g} 72.63   &\cellcolor{g} 70.84  &\cellcolor{g} 70.22 &  69.41 &  69.09 & 67.51 \\
			   	& 1000 & \cellcolor{g}93.46  &\cellcolor{g} 94.03 &\cellcolor{g}  92.31 &\cellcolor{g}  92.33 & \cellcolor{g} 91.68 &\cellcolor{g}  90.70  
			   	&\cellcolor{g} 89.85\\
			   	& 2000 &\cellcolor{g} 99.67 &\cellcolor{g} 99.68 &\cellcolor{g} 99.53 &\cellcolor{g} 99.60  & \cellcolor{g}99.62 &\cellcolor{g}  99.41
			   	 &  \cellcolor{g}99.22
			   	\end{tabularx}
    	 \end{minipage}  	 
    	 \caption{Backtesting the ALM model: Estimated size and power in \% for the alternatives NB, PAR and LOGN of the Nass test. The size is 
    	 represented as the fraction of estimated size divided by the desired level $\kappa = 5\%$. Values of the size between $0.9 - 1.1$ are dark green, 
    	 between $0.8 - 1.2$ are light green, above $1.5$ are red and above $2$ dark red. For the power green refers to a power $\geq 70\%$; light red to a 
    	 power $\leq 30 \%$ and dark red indicates a power $\leq 10\%$.  }
    	 \label{tab:ALMres}
    \end{table}

      \subsection{Size and Power}
    
    We backtest the quantitative risk measurement of the net asset value process $(E_t)_{t=1, \dots, n}$, applying GlueV@R.\footnote{Formally, the argument of the risk measure is  $-E_t$, $t=1, \dots, n$, due to our sign convention.}  The numerical results are summarized in Table \ref{tab:ALMres}. 
	We used $N = 20000$ samples to estimate the size and the power of the Nass tests. 
	The parameters $n$ and $m$ determine the number of observed data and the number of cells considered. 
	As before, the desired level of the approximate test is set to $\kappa = 5\%$. 
  
	Overall the size of the Nass test in the ALM model is slightly too large as displayed in the first panel of Table~\ref{tab:ALMres} where the quotient of the estimated size and the desired level $\kappa$ is shown. We see that the size is close to the desired level $\kappa$ for large $m$.
	
	The overall power of the test is quite good. This is due to the fact that the null hypothesis and the considered alternatives are sufficiently different in all cases. The power is slightly higher for lower cell counts $m$ and also grows in the number of observed data $n$.
	For $n = 500$, the power exceeds $89 \%$ for NB, $99\%$ for PAR and $75 \%$ for LOGN.  The biggest power is estimated at $n = 2000$ exceeding $99\%$ for all alternatives. 

    \section{Conclusion}\label{sec:concl}
    
    This paper proposes a multinomial backtesting methodology for distortion risk measures that is based on a stratification and randomization of risk levels, extending the non-randomized AV@R-backtest of \cite{KLM}. The method is applicable to a wide range of risk measures -- being at the same time highly tractable. The best results are obtained for the Nass test. Numerical experiments based on artificial data demonstrate the good performance of the method if the null hypothesis and the considered alternatives are sufficiently different from each other. 
For AV@R, our randomized backtesting method improves upon the multinomial backtest of \cite{KLM}. 

Future research should study the performance of DRM backtesting methods on the basis of real statistical data. Another interesting, but challenging question would be to compute lower bounds for the power of the method in terms of the number of data points and a measure of the distance between the null hypothesis and the alternative. 
 
\onehalfspacing

\bibliographystyle{plainnat}
\bibliography{BTDRM}

\newpage    
\appendix
\doublespacing    
    
    \section{Online Appendix}\label{sec:app}
    
    \subsection{Comonotonic Risk Measures and DRMs} \label{app:DRM}

        \begin{defi}    
	        \begin{itemize}
	            \item[(i)] A non decreasing function $g: [0,1] \rightarrow [0,1]$, with $g(0) = 0$ and $g(1) = 1$, is called a distortion function.
	            \item[(ii)] Let $\mathsf{P}$ be a probability measure on $(\Omega, \mathcal{F})$ and $g$ be a distortion function. The monetary risk measure $\rho_g : \mathcal{X} \rightarrow \mathbb{R}$ defined as 
	            \[ \rho_g (X) := \int_{-\infty}^0 \left[ g( \mathsf{P}( \{ X > x \} ) ) - 1 \right] dx + 
	            \int_0^\infty g( \mathsf{P}( \{ X > x \})) dx, \]
	            is called a DRM with respect to $g$.
	        \end{itemize}
	            
        \end{defi}

   DRMs can be expressed as mixtures of the quantile functions, if the distortion function is either left- or right-continuous. This is described in \cite{DKLT}.
	    \begin{theo}\label{theo:ContDRM}
		(i)	If the distortion function $g$ is right-continuous, the DRM $\rho_g(X)$ is represented by a
			Lebesgue-Stieltje integral:
			\[ \rho_g(X) = \int_{[0,1]} q_X^{+} (1-u) \ dg(u) \]
			where $q_X^{+}(u) := \sup \{ x \vert F_X(x) \leq u \}$. \\
		(ii) If the distortion function $g$ is left-continuous, the DRM $\rho_g(X)$ can be written as:
			\[ \rho_g(X) = \int_{[0,1]} q_X (1-u) \ dg(u)    =   \int_{[0,1]} q_X (u) \ d \bar g(u)\]
			where $q_X  (u) = \inf\{x \vert F_X(x) \geq u \}$ and $\bar g(q) = 1- g(1-q)$, $0 \leq q \leq 1$.
		\end{theo}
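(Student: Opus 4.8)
The plan is to reduce both representations to one layer-cake computation combined with Fubini's theorem and the standard duality between quantile and distribution functions. Throughout I write $S_X(x) := \mathsf{P}(X > x) = 1 - F_X(x)$ for the survival function, so the definition of the DRM reads $\rho_g(X) = \int_0^\infty g(S_X(x))\,dx + \int_{-\infty}^0 [g(S_X(x)) - 1]\,dx$, and I record that the Lebesgue--Stieltjes measure $dg$ is a probability measure on $[0,1]$ since $g$ is nondecreasing with $g(0)=0$, $g(1)=1$. For part (i) I would start from the right-hand side and decompose the integrand by the elementary identity $q = \int_0^\infty \mathds{1}_{\{q > x\}}\,dx - \int_{-\infty}^0 \mathds{1}_{\{q \le x\}}\,dx$, valid for every real $q$, applied to $q = q_X^+(1-u)$. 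Inserting this into $\int_{[0,1]} q_X^+(1-u)\,dg(u)$ and exchanging the order of integration by Tonelli (each piece has constant sign) reduces everything to evaluating the inner $dg$-integrals $\int_{[0,1]} \mathds{1}_{\{q_X^+(1-u) > x\}}\,dg(u)$ and $\int_{[0,1]} \mathds{1}_{\{q_X^+(1-u) \le x\}}\,dg(u)$ for each fixed $x$.

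The heart of the matter is the duality between $q_X^+$ and $F_X$. Using $q_X^+(v) = \sup\{y : F_X(y) \le v\}$ and monotonicity of $F_X$, I would show that $q_X^+(1-u) > x$ holds for every $u < S_X(x)$ and for no $u > S_X(x)$, with the endpoint $u = S_X(x)$ belonging to the level set exactly when $F_X$ is constant immediately to the right of $x$. Consequently the $dg$-measure of $\{u : q_X^+(1-u) > x\}$ equals $g(S_X(x))$ for $dx$-almost every $x$: right-continuity of $g$ yields $dg([0,s]) = g(s)$, the endpoint contributes the jump $g(S_X(x)) - g(S_X(x)^-)$ only on the no-mass intervals of $X$ where $S_X$ is flat (and where the endpoint is indeed included), and it is $dx$-negligible elsewhere. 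Substituting this value together with its complement $1 - g(S_X(x))$ into the two pieces reproduces $\rho_g(X)$ exactly, which proves (i).

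Part (ii) runs along the same lines with the lower quantile $q_X(v) = \inf\{y : F_X(y) \ge v\}$, where the duality $q_X(1-u) > x \iff F_X(x) < 1-u \iff u < S_X(x)$ produces the half-open level set $[0, S_X(x))$ with $dg$-measure $g(S_X(x)^-) = g(S_X(x))$, now by left-continuity of $g$; this is precisely why the left-continuous case pairs with $q_X$ and a strict inequality, with no boundary ambiguity. For the second equality $\int_{[0,1]} q_X(1-u)\,dg(u) = \int_{[0,1]} q_X(u)\,d\bar g(u)$ I would invoke the reflection $u \mapsto 1-u$: from $\bar g(q) = 1 - g(1-q)$ one gets $\bar g(q_2) - \bar g(q_1) = g(1-q_1) - g(1-q_2)$, so $d\bar g$ is the image of $dg$ under $u \mapsto 1-u$, and $\bar g$ is the right-continuous distribution function with $\bar g(0)=0$, $\bar g(1)=1$ appearing in the statement.

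The step I expect to be the main obstacle is the boundary bookkeeping in the duality: ensuring that the $dg$-measure of the level set equals $g(S_X(x))$ rather than $g(S_X(x)^-)$. This is exactly where the continuity hypothesis on $g$ and the choice of upper versus lower quantile are forced, and where the jumps of $g$ must be matched against the flat parts of $S_X$ (equivalently the no-mass regions of $X$). Away from these the two candidate values agree on a $dx$-null set so the identity is insensitive to the convention, but on the flat parts the correct version is essential, and verifying this matching carefully in both the right- and left-continuous cases is the delicate part of the argument.
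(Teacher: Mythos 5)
The paper does not actually prove Theorem~\ref{theo:ContDRM}; it states the result in Appendix~\ref{app:DRM} and attributes it to \cite{DKLT}, so there is no in-paper argument to compare against. Your proposal is a correct, self-contained proof. The layer-cake decomposition $q=\int_0^\infty \mathds{1}_{\{q>x\}}\,dx-\int_{-\infty}^0\mathds{1}_{\{q\le x\}}\,dx$ combined with Tonelli reduces everything to the $dg$-measure of the level sets, and your duality analysis is the right one: for the lower quantile, $q_X(1-u)>x\iff u<S_X(x)$ holds exactly (using right-continuity of $F_X$), so the level set is $[0,S_X(x))$ with $dg$-measure $g(S_X(x))$ precisely because $g$ is left-continuous; for the upper quantile the endpoint $u=S_X(x)$ is ambiguous, and you correctly observe that it matters only where $g$ jumps at $S_X(x)$, which forces either $S_X$ to be flat near $x$ (where the endpoint is included and right-continuity of $g$ gives the needed value $g(S_X(x))$) or $x$ to range over a Lebesgue-null set. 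The only implicit assumptions are the standard ones — $dg(\{0\})=0$ since $g(0)=0$, and enough integrability that the positive and negative pieces do not both diverge — which are harmless here. The reflection argument for $\int q_X(1-u)\,dg(u)=\int q_X(u)\,d\bar g(u)$ via the pushforward of $dg$ under $u\mapsto 1-u$ is also correct.
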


Several well known risk measures can be expressed as DRMs, see for example \cite{ChernyMadan2008}, \cite{BGM}, \cite{FS}, and \cite{We}.  We consider some important examples in our applications.
		
		\begin{exam}\label{ex:drm}
		\begin{itemize}
		\item[(i)] Choosing the distortion function  
		$g(u) = \mathds{1}_{\{ \alpha < u \leq 1 \}}$
		yields the Value at Risk at level $\alpha \in (0,1)$:
		\[ \rho_g(X) = V@R_\alpha(X) := \inf \{x \vert F_X(x) 
			\geq 1- \alpha\} .\]
	    \item[(ii)] The Average Value at Risk at level $\alpha$ corresponds to  a DRM with 
		distortion function $g(u) = \frac{u}{\alpha} \mathds{1}_{\{ 0 \leq u \leq \alpha\}}
		+ \mathds{1}_{\{ u > \alpha \}}$:
		\[ \rho_g(X) = AV@R_\alpha(X) = \frac{1}{\alpha} \int_{0}^\alpha V@R_\lambda(X) d\lambda.\]
		\item[(iii)] The GlueV@R is a DRM\footnote{This example was suggested by \cite{BSGS} and \cite{BSGS2}.} with distortion function 
		\[ g(u) = \begin{cases} \frac{h_1}{ \beta} u \hspace{30mm} & 
		\text{ if } 0 \leq u \leq \beta \\
		h_1 + \frac{h_2 - h_1}{ \alpha - \beta} ( u-  \beta) 
		& \text{ if }  \beta < u \leq \alpha \\
		1 & \text{ if }  \alpha < u \leq 1 
		\end{cases}, \]
		where $0  \leq \beta < \alpha\leq 1$ are levels and $0 \leq  h_1 \leq h_2 \leq 1$ describe the corresponding distorted probabilities. 
		The distortion function of the GlueV@R is a piecewise combination of the distortion functions of $V@R$ and $AV@R$.  The GlueV@R can be expressed as a linear combination of these risk measures, i.e.,
		\[ GlueV@R_{\beta, \alpha}^{h_1, h_2} (X) = \rho_g(X)
		= w_1 AV@R_\beta(X) + w_2 AV@R_\alpha(X) + w_3 V@R_\alpha(X), \]
		with $w_1 = h_1 - \frac{(h_2 - h_1)\beta}{ \alpha - \beta}$,
		$w_2 = \frac{h_2 - h_1 }{\alpha-\beta } \alpha$ and
		$w_3 = 1 - w_1 - w_2 = 1- h_2$.
		\end{itemize}
		\end{exam}

		\cite{DKLT} show that every distortion function 
		can be written as convex combination of a left- and right-continuous
		function. This is an important observation: If a distortion function $g$  is a convex combination of distortion functions $g_1$ and $g_2$, the distortion risk $\rho_g$  is a convex combination of the distortion risk measures $\rho_{g_1}$  and $\rho_{g_2}$, i.e., if $c_1, c_2 \geq 0$, $c_1 + c_2 = 1$, then
		$g = c_1 g_1 + c_2 g_2$  implies that $\rho_{g} = c_1 \rho_{g_1} + c_2 \rho_{g_2} . $

		\begin{theo}\label{theo:convexcomposition}
			Let $g$ be a distortion function. Then there exist right- and left-continuous distortion functions $h_r$, $h_l$ such that
			$g(u)= d_r h_r(u) + d_l h_l(u) \;
			\forall u \in [0,1]$
			with $d_r$, $d_l \in [0,1]$, $d_r + d_l = 1$. 
		\end{theo}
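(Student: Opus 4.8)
The plan is to deduce this two-part convex decomposition from the finer three-part decomposition already supplied by Theorem~\ref{theo:uniqueconvexdecomposition}. First I would apply that theorem to write
\[ g = c_r g_{sr} + c_l g_{sl} + c_c g_c, \]
with $g_{sr}$ right-continuous, $g_{sl}$ left-continuous, $g_c$ continuous, and $c_r,c_l,c_c \ge 0$ summing to one. The only gap between this and the claimed statement is the continuous term $g_c$, which belongs to neither the purely right- nor the purely left-continuous bucket. The key observation is that a continuous function is simultaneously left- and right-continuous, so $g_c$ may be freely distributed between the two remaining pieces.

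Concretely, I would set $d_r := c_r + \tfrac{c_c}{2}$ and $d_l := c_l + \tfrac{c_c}{2}$, so that $d_r, d_l \ge 0$ and $d_r + d_l = c_r + c_l + c_c = 1$. Assuming for the moment $d_r, d_l > 0$, define
\[ h_r := \frac{1}{d_r}\Bigl( c_r g_{sr} + \tfrac{c_c}{2} g_c \Bigr), \qquad h_l := \frac{1}{d_l}\Bigl( c_l g_{sl} + \tfrac{c_c}{2} g_c \Bigr). \]
Each is a convex combination of non-decreasing functions vanishing at $0$ and equal to one at $1$, hence itself a distortion function; and since $g_{sr}, g_c$ are right-continuous while $g_{sl}, g_c$ are left-continuous, $h_r$ is right-continuous and $h_l$ is left-continuous. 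A direct substitution then gives $d_r h_r + d_l h_l = c_r g_{sr} + c_l g_{sl} + c_c g_c = g$, which is the desired representation. This is exactly the construction recorded in Remark~\ref{rem:decomp}.

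The remaining point, and the main (though minor) obstacle, is the degenerate case where one weight vanishes. If $d_r = 0$ then $c_r = c_c = 0$, so $g = g_{sl}$ is already left-continuous; I would then take $d_l = 1$, $h_l = g$, and choose any right-continuous distortion function for $h_r$ (its weight being zero), and symmetrically if $d_l = 0$. This closes the argument. I expect essentially all the genuine difficulty to sit inside Theorem~\ref{theo:uniqueconvexdecomposition}; given that result, the present statement is a short repackaging. If instead one wanted a self-contained proof, the substantive work would be to build $h_r$ and $h_l$ directly from the jump structure of $g$: separate the jumps at which $g$ fails to be right-continuous from those at which it fails to be left-continuous, assemble the corresponding step functions together with the continuous remainder, and renormalize each to land in $[0,1]$ — the renormalization, and tracking which jumps contribute to which side, being the only delicate bookkeeping.
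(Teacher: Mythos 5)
Your construction is correct and coincides with the paper's own: the paper derives Theorem~\ref{theo:convexcomposition} from the three-part decomposition of Theorem~\ref{theo:uniqueconvexdecomposition} by splitting the continuous component evenly between the two pieces, exactly as recorded in Remark~\ref{rem:decomp}. Your additional handling of the degenerate case $d_r=0$ or $d_l=0$ is a sensible small completion that the paper leaves implicit.
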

	    The decomposition of the distortion function is not unique unless $g$ is a step function. A unique decomposition is provided in Theorem \ref{theo:uniqueconvexdecomposition}.

The link between comonotonic risk measures, the Choquet integral and DRMs is discussed in detail in  Chapter 4 of \cite{FS} and \cite{SongYan2009}.  Comonotonic risk measures with an absolutely continuous capacity with respect to the underlying probability measure take the form of a DRM.    
        
    \begin{table}
 	\small
 	\renewcommand{\arraystretch}{1.4}
	\begin{tabularx}{1.0\textwidth}{|l|c|X|l|}
				\hline 
				Name & Distortion & Closed form &  Reference \\
				\hline 
				\hline
				$\mathrm{MINV@R}$ & $1 - (1 - u)^n$ & \scriptsize $- \mathsf{E}[\min \{ - X_1, \dots, - X_n \}]$ &  \scriptsize \cite{ChernyMadan2008} \\
				&& \scriptsize $= \mathsf{E}[ \max \{ X_1, \dots, X_n\}]$ & \scriptsize\cite{FS} \\
				& & & \scriptsize \cite{BS2014} \\
				\hline
				$\mathrm{MAXV@R} $ & $u^{ 1 / n}$ & \scriptsize  $- \mathsf{E}[ Y_1 ]$ &\scriptsize \cite{ChernyMadan2008} \\
				&&\scriptsize \mbox{such that~}  & \scriptsize \cite{FS} \\
				& & \scriptsize $\max \{Y_1, \dots, Y_n\} \sim - X$ & \scriptsize \cite{BS2014} \\
				\hline 
				$\mathrm{MINMAXV@R}$ & $1 - ( 1- u^{1 / n})^n$ 
				& \scriptsize $- \mathsf{E}[ \min \{ Y_1, \dots, Y_n\}]
				$  & \scriptsize \cite{ChernyMadan2008} \\
				& & \scriptsize \mbox{such that}& \scriptsize \cite{FS} \\
				& & \scriptsize$\max\{Y_1, \dots, Y_n \} \sim - X$& \scriptsize \cite{BS2014} \\
				\hline
				$\mathrm{MAXMINV@R}$ &  $(1 - (1 - u)^n)^{1 / n}$ &\scriptsize $- \mathsf{E} [Y_1]$ & 
				\scriptsize \cite{ChernyMadan2008} \\
				& & \scriptsize\mbox{such that } & \scriptsize \cite{FS} \\
				& & \scriptsize $\max\{Y_1, \dots, Y_n\} $ & \scriptsize \cite{BS2014} \\
				& & \scriptsize $\sim \min \{ - X_1, \dots, -X_n\} $ & \\
				\hline
				$RV@R$ & $\frac{u - \beta}{\alpha - \beta} \mathds{1}_{\{\beta < u \leq \alpha \}} 
				+ \mathds{1}_{\{ u > \alpha \}}$ & $ \frac{1}{\alpha - \beta} \int_\beta^{\alpha} V@R_\lambda(X) d\lambda$ & \scriptsize
				\cite{BT2016} \\
				(Range $V@R$) &$0 < \beta < \alpha < 1$  & &  \scriptsize \cite{We}, \cite{LHRJ2018} \\
				\hline
				\footnotesize Proportional   &  $u^{1 / \gamma}$ & $\int_0^\infty (1 - F_X(x))^{1 / \gamma} dx,$ &
				\scriptsize \cite{W1,W2}  \\
				\footnotesize hazard transform &  $\gamma > 1$ &if $X \geq 0$ a.s. &  \scriptsize \cite{GSBP2018}\\
				\hline
				Dual power & $1 - (1 - u)^\gamma$ & $\int_0^\infty 1 - F_X(x)^\gamma dx$, & \scriptsize
				\cite{WirchHardy1999} \\
				transform & $\gamma > 1$ &if $X \geq 0$ a.s. & \scriptsize \cite{GSBP2018}\\
				\hline 
				Gini's principle & $(1-\theta) u + \theta u^2$ & $\mathsf{E}[X] + \frac{\theta}{2} \mathsf{E}[|X-X_1|]$ & \scriptsize \cite{Yit1982},\cite{David2014}\\
				& $0 < \theta < 1$ & & \scriptsize \cite{GSBP2018}\\
				\hline
				Exponential & $\frac{1 - \exp(-ru)}{1-\exp(-r)}$ if $r>0$ &\centering - & \scriptsize \cite{MS2017}\\
				transform & $u$ if $r=0$ & & \scriptsize \cite{Dowd2008}\\
				\hline
				Inverse S-shaped & $a \left[ \frac{u^3}{6} - \frac{\delta u^2}{2} + \left(\frac{\delta^2}{2}+\beta \right) u\right]$& & \scriptsize \cite{GH2015}\\
				polynomial & $a = \left( \frac{1}{6} - \frac{\delta}{2} + \frac{\delta^2}{2} + \beta \right)^{-1}$& \centering - & \scriptsize \cite{MS2017}\\
				of degree 3 & $0 < \delta < 1, \beta \in \mathbb{R}$& & \\
				\hline
				Beta family &  $\int_0^u  \frac{t^{a-1}(1-t)^{b-1}}{B(a,b)}  dt$ & \centering - & \scriptsize \cite{SS2019} \\
				& $a,b >0$ & &\scriptsize \cite{WirchHardy1999}  \\
				\hline
				Wang transform & \scriptsize $\Phi(\Phi^{-1}(u) - \Phi^{-1}(q) )$ & \centering - & \scriptsize \cite{Wang2000, Wang2001}\\
				& $0 < q < 1$ & & \scriptsize \cite{David2014}\\
				\hline
	\end{tabularx}
	\caption{Further examples of distortion risk measures of a random variable $X$.  Table 1 of the online appendix of \cite{MS2017} also provides these examples of distortion functions; we also include this table of examples as a convenient reference for the reader. In the third column, $X_1, \dots, X_n$ denote independent copies of $X$, $n \in \mathbb{N}$;  $Y_1, \dots, Y_n$ are suitable i.i.d. random variables that satisfy the conditions that are specified in the third column of the table. $B$ denotes the beta function, $\Phi, \Phi^{-1}$ the distribution and quantile function of the standard normal distribution respectively. Observe that the Wang transform is closely related to Esscher and Girsanov transforms, as shown in \cite{GOOVAERTS2008} and \cite{LABUSCHAGNE2010}; this observation is particularly relevant in the context of arbitrage-free pricing.}
	\label{tab:DRMExam}
\end{table}

        \subsection{Examples of DRMs}\label{ex_DRM}
        
Examples of DRMs are provided in Table~\ref{tab:DRMExam} which is taken from \cite{MS2017}.

    \subsection{Proofs}
    \subsubsection{Appendix to Section \ref{sec:ch2}}\label{app:ch2}
    \begin{proof}[Proof of Theorem \ref{theo:uniqueconvexdecomposition}]
        As any distortion function $g$ is non decreasing, $g$ has at most countably many discontinuities. We define the sets of jumps by setting
		    \[ R := \{u \in [0,1] \vert g(u) - g(u-) > 0 \},
		    \hspace{5mm} L := \{u \in [0,1] \vert g(u+) - g(u) > 0 \} \]
	    with corresponding jump heights
	        \[
	             g_+ (u) = g(u) - g(u-), \hspace{10mm} g_-(u) = g(u+) - g(u), 
	        \]
respectively. Observe that $R$ and $L$ may possess a non empty intersection of points for which $g$ is neither right- nor left-continuous. We set
	     $$
	            g_{sr}(u) := a \sum_{r \in R} g_+(r) \mathds{1}_{\{r \leq u \}}, \hspace{10mm}
	            g_{sl}(u) := b \sum_{l \in L } g_-(l)\mathds{1}_{\{ l < u \}}, 
	        $$
	    where $a,b$ are chosen such that $g_{sr}, g_{sl}$ become distortion functions;
	    for this purpose, we scale $g_{sr}$ and $g_{sl}$ such that $g_{sr}(1) = g_{sl}(1) = 1$ by setting 
	        \[ a = \frac{1}{\sum_{r \in R} g_+(r)}, \hspace{10mm}
	        b = \frac{1}{\sum_{l \in L} g_-(l)}. \]
	    If $R = \emptyset$ and/or $L = \emptyset$ the original distortion function $g$ is right- or left-continuous, or continuous; in this case, we choose $a = 0$ resp. $b = 0$, and consider only the remaining parts. The functions $g_{sr}$ and $g_{sl}$ are right- resp. left-continuous step distortion functions.  The continuous part of the decomposition is obtained by setting
	        \[ g_c (u) := c \left(g(u) - \sum_{r \in R } g_+ (r) \mathds{1}_{\{ r \leq
	        u \}} - \sum_{l \in L } g_-(l) \mathds{1}_{\{ l < u \}} \right) , \hspace{3mm} c:= \frac{1}{1- \sum_{r \in R} g_+(r) - \sum_{l \in L } g_-(l)}. \]
	    Finally, we obtain that
	        $$
                g(u) \; =  \; \sum_{r \in R} g_+(r) \mathds{1}_{\{
                r \leq u \}} + \sum_{l \in L} g_-(l) \mathds{1}_{\{ l < u\}}
                +\frac{1}{c} g_c(u) 
                \;  = \; \frac{1}{a} g_{sr}(u) + \frac{1}{b} g_{sl} (u) + \frac{1}{c} g_c(u),
	      $$
	    where 
	        \[ \frac{1}{a} + \frac{1}{b} + \frac{1}{c} \; = \;
	        1- \sum_{r \in R} g_+(r) - \sum_{l \in L } g_-(l) +\sum_{r \in R} g_+(r) + \sum_{l \in L } g_-(l)  \; = \; 1. \]
        This decomposition is unique, since the functions $g_{rl}$ and $g_{sl}$ are unique for every distortion function.
    \end{proof}
    
    \subsubsection{Appendix to  Section \ref{sec:right-conti}}\label{app:right-conti}
    
    \begin{proof}[Proof of Lemma \ref{lem:RightContException}]
         Since the conditional cumulative distribution functions $F_{M_t | M_{t-1}, \dots, M_1 }$ are continuous, we have for any $t$ that
   \begin{equation}\label{eq:u=}
   \mathsf{P} (M_t>q_{M_t}^{t-1}   (1-u) | M_{t-1}, \dots , M_1  ) = u .
   \end{equation}
      Hence, 
      $$   \mathsf{P}   (L_t>q_{M_t}^{t-1}   (1-u) )  \; \overset{L_t \overset{d}{=} M_t }{=}    \;  \mathsf{P}   (M_t>q_{M_t}^{t-1}   (1-u) )   \; =   \;\mathsf{E} \left[  {\mathsf{P} (M_t>q_{M_t}^{t-1}   (1-u) | M_{t-1}, \dots , M_1  )}   \right]   \; \overset{\eqref{eq:u=}}{=}  \; u .
        $$
     Thus, $  \mathsf{E} [\mathds{1}_{t,j} | G_{t,j}]      = G_{t,j}  $ which implies
     $  \mathsf{E} [\mathds{1}_{t,j}]    \; =  \;     \mathsf{E}  [G_{t,j} ] \; =  \;    \frac{ \mathsf{E}  \left[G  \mathds{1}_{\{    G\in[\alpha_{j-1}, \alpha_j )   \} }   \right]}{ g(\alpha_j)   - g(\alpha_{j-1})  }   . $
We define     
  	\[ \mathds{\tilde 1}_{t,j} = 
		\begin{cases}
				1 \hspace{15mm}	& \text{ if } M_t > q^{t-1}_{M_t} (1- G_{t,j}) \\
				0 & else.
			\end{cases}, 
		\]
and note that the processes $(\mathds{ 1}_{t,j})_t$ and $(\mathds{\tilde 1}_{t,j})_t$ possess the same law, since, first, $M$ and $L$ do and, second, $(G_{t,j})_t$ is independent of $M$ and $L$. We set $\gcal_t = \sigma\{  G_{s, j}: s\leq t, j = 1, \dots, m+1\}$. We observe that
\begin{equation}\label{eq:compuxy}
\mathsf{E}  [   \mathds{\tilde 1}_{t,j} |    M_{t-1}, \dots, M_1, \gcal_t ]     \; =   \;     \mathsf{E}    [ \mathds{\tilde 1}_{t,j}    |    M_{t-1}, \dots, M_1, G_{t,j}     ]\; \overset{\eqref{eq:u=}}{=}   \;      G_{t,j}    , 
\end{equation}
where in the second step we use that $\mathds{\tilde 1}_{t,j}$ is independent of $G_{t-1, j} , \dots, G_{1,j} $ and $G_{t, k} , \dots, G_{1,k}$ for $k \neq j$. 

Next, for fixed $j =1,2, \dots, m+1$, we prove that the indicators $(\mathds{\tilde 1}_{t,j} )_t$ are independent. It suffices to show that for any $T \subseteq \{1,2, \dots, n \}$ we have that
\begin{equation}\label{eq:ind_proof}
\mathsf{E} \left[ \prod_{t \in T }     \mathds{ 1}_{t,j_t}  \right]  \; =   \;  \prod_{t \in T }      \mathsf{E} [   \mathds{ 1}_{t,j_t}   ]
\end{equation}
with $j_t \in \{ 1,2, \dots, m+1\}$ for $t \in T$.

This can be shown by induction. Suppose that \eqref{eq:ind_proof} holds for any $T$ such that $s \in T$ implies $s< t$. We set $j:= j_t$. Then
\begin{align*}
\mathsf{E}   \left[   \mathds{ 1}_{t,j}   \prod_{s \in T }     \mathds{ 1}_{s,j_s}  \right] & \; = \;  \mathsf{E}   \left[ \mathds{ \tilde1}_{t,j}   \prod_{s \in T }     \mathds{ \tilde 1}_{s,j_s}  \right] \; = \; 
 \mathsf{E}   \left[  \mathsf{E}   \left[ \mathds{ \tilde1}_{t,j}   \prod_{s \in T }     \mathds{ \tilde 1}_{s,j_s} 
 |    M_{t-1}, \dots, M_1, \gcal_t   \right] \right] \\ & \; = \;  
 \mathsf{E}   \left[   \prod_{s \in T }     \mathds{ \tilde 1}_{s,j_s}  \mathsf{E}   \left[ \mathds{ \tilde1}_{t,j}  
 |    M_{t-1}, \dots, M_1, G_{t,j}   \right] \right] \quad  \hfill   \mbox{\footnotesize(due to measurability and independence)} \\
 & \; \overset{\eqref{eq:compuxy}}{=} \;   \mathsf{E}   \left[      \prod_{s \in T }     \mathds{ \tilde 1}_{s,j_s}    G_{t,j}   \right] \\
 & \; = \;   \mathsf{E}   \left[     \prod_{s \in T }      \mathds{ \tilde 1}_{s,j_s}  \right] \cdot     \mathsf{E}   \left[   G_{t,j}   \right]    \quad \quad \quad \quad  \quad \quad \quad \quad  \mbox{\footnotesize(due to independence)} \\
 &\; = \;    \mathsf{E}   \left[    \mathds{ 1}_{t,j}     \right] \cdot   \mathsf{E}   \left[      \prod_{s \in T }      \mathds{  1}_{s,j_s}  \right] 
\end{align*}
          
    \end{proof}
    \hspace{1em}
    
    \begin{proof}[Proof of Lemma \ref{lem:RightContBreachedLevels}]
			The random variable $X_t$ is a sum of the exception indicators $\mathds{1}_{t,j}$, $j=1, \dots, m+1$, for fixed $t=1, \dots, n$. Thus, the independence of $(X_t)_{t=1, \dots,n}$ follows from the assumption of the independence of the vectors $( \mathds{1}_{t,j} )_j$, $t=1, \dots,n$.
			
			For $1 \le k \le m$ we compute
			\begin{align*}
				\mathsf{P}(X_t = k) &= \mathsf{P} \left( L_t > q^{t-1}_{M_t} (1- G_{m+2-k}),
				L_t \leq q^{t-1}_{M_t} (1- G_{m+1-k}) \right) \\
				&= \mathsf{P} \left(q^{t-1}_{M_t} (1- G_{m+2-k}) < L_t \leq  
				q^{t-1}_{M_t} (1- G_{m+1-k}) \right) \\
				&= \mathsf{P} \left( L_t \leq q^{t-1}_{M_t} (1- G_{m+1-k})\right)
				- \mathsf{P}\left(L_t \leq q^{t-1}_{M_t} (1- G_{m+2-k}) \right) \\
				&= \frac{ \mathsf{E}\left[ G \mathds{1}_{\{G \in [\alpha_{m+1-k},
				\alpha_{m+2-k} )\}} \right]}{g(\alpha_{m+2-k}) - g(\alpha_{m+1-k})}
				- \frac{\mathsf{E}\left[ G \mathds{1}_{\{G \in [\alpha_{m-k},
				\alpha_{m+1-k} )\}}\right]}{g(\alpha_{m+1-k}) - g(\alpha_{m-k})}.
			\end{align*}
			For $k = 0$ we have that
			\[ \mathsf{P}(X_t = 0) = P ( L_t \leq q^{t-1}_{M_t} (1- G_{m+1}) )
			= 1 - \frac{\mathsf{E} \left[ G \mathds{1}_{\{ G \in [\alpha_m, 1) \}} \right]}{
			g(1) - g(\alpha_m)}. \]
			Therefore, $
				\mathsf{P} (X_t \leq k) = \sum_{i=0}^k  \mathsf{P}(X_t = i) 
				= 1 - \frac{\mathsf{E} \left[ G \mathds{1}_{\{ G \in [
				\alpha_{m-k}, \alpha_{m-k+1}) \}} \right] }{g(
				\alpha_{m-k+1}) - g(\alpha_{m-k})}.
			$
    \end{proof}
    \hspace{1em}
    
    \begin{proof}[Proof of Theorem \ref{theo:RightContCellCount}]
    Let $n_0, n_1, \dots, n_{m+1} \in \mathbb{N}$ such that $\sum n_i = n$.
	Then we have that
	    \begin{align*}
			 	&\mathsf{P} ( O_0 =  n_0, O_1 = n_1, \dots, O_{m+1} = n_{m+1} )  \\
			 	&= \mathsf{P} \left( \sum_{t=1}^n   \mathds{1}_{\{X_t = 0\}} = n_0, \sum_{t=1}^n \mathds{1}_{\{X_t = 1\}} = n_1, \dots,\sum_{t=1}^n \mathds{1}_{\{X_t = m+1\}} = n_{m+1} \right) \\
			 	&=\sum_{\pi \in \Pi} \mathsf{P} \big( X_{\pi(0)} = 0, \dots ,X_{\pi(n_0)} = 0, X_{\pi(n_0+1)} = 1, \dots, X_{\pi(n_0+n_1)} =1, \\
			 	& \dots, X_{\pi( n_0+ \dots + n_m +1 )} = m+1, \dots X_{\pi( n_0+ \dots + n_{m+1} )} = m+1 \big) \\
			 	&= \sum_{\pi \in \Pi} \prod_{k=0}^{m+1} p_k^{n_k}.
		\end{align*}
		Here, $\Pi$ is the set of permutations of $\{1, \dots, n \}$ such that $n_0$ of the $X_t$ are equal to $0$, $n_1$ of the $X_t$ are equal to $1$ and so on.
        We have $n!$ possible permutations of the set $\{1, \dots, n\}$, where the $n_0!$ permutations of the set $\{t \vert X_t = 0\}$ are indistinguishable. 
        The same holds for $n_1$, $n_2$, etc. 
        We conclude that
			 \[ P( O_0 =  n_0, O_1 = n_1, \dots, O_{m+1} = n_{m+1} ) 
			 = \frac{n!}{n_0!n_1!\dots n_{m+1}!} \prod_{k=0}^{m+1} p_k^{n_k}, \]
		which is the probability mass function of $MN(n, (p_0, p_1, p_2 , \dots, p_{m+1} ))$ for the corresponding probabilities $p_k$.
    \end{proof}
    
    \subsection{Statistical Tests} \label{sec:tests}
    
		This section describes multinomial tests for the null hypotheses \eqref{eq:h0_l} and \eqref{eq:nullgen} which we review for the convenience of the reader. \cite{CK}  provide a more detailed discussions, and the tests are also reviewed and used in  \cite{KLM}.  
		
We adopt three well-known approximate tests: Pearson's $\chi^2$-test, Nass' $\chi^2$-test, and the likelihood ratio test. We briefly review\footnote{\cite{CK} provide numerical comparisons of different methodologies for testing the parameters of multinomial distributions:  Pearson's $\chi^2$-test, Nass' $\chi^2$-test, the likelihood ratio test (LRT), Hoel's test and the exact test. The exact method is often impractical in applications.} the design of these tests.
		
		\subsubsection{Pearson's $\chi^2$-Test}
		
A standard test for the hypothesis that $O = (O_0, O_1, \dots, O_{m+1}) \sim \mathrm{MN}(n, (p_0, p_1, \dots, p_{m+1}))$ relies on results obtained by \cite{P}. Since $\mathsf{E} [O_k] = np_k$, $k =0, \dots, m+1$, the observed frequencies of the cell counts should be close to $np_k$ for a sufficiently large $n$. The cumulated relative squared deviations of the observations from the mean	\[ S_{m+1} := \sum_{k=0}^{m+1} \frac{(O_k - np_k)^2}{np_k}, \]
		are for large $n$ approximately $\chi^2_{m+1}$-distributed. Pearson's $\chi^2$-test at level $\kappa \in (0,1)$ rejects the hypothesis if $$S_{m+1}\;  >\;  F^{-1}_{\chi^2_{m+1}}(1-\kappa),$$ where $F^{-1}_{\chi^2_{m+1}}(1- \kappa)$ is the $\kappa$-quantile of the $\chi^2_{m+1}$ distribution.
		This test is the probably most widely used multinomial test that typically performs  well if the cell probabilities are not too small. 
		
		\subsubsection{Nass' $\chi^2$-Test}
		
\cite{N} suggests a finite sample correction of Pearson's $\chi^2$-test. Instead of approximating the distribution of $S_{m+1}$ with a $\chi^2_{m+1}$-distribution, \cite{N} proposes to use a distribution that depends on $n$, namely the distribution of $\frac 1 c Z$ with $Z \sim \chi^2_{\nu}$ 
where 
\begin{align*}
c \; = & \; \frac{2 \mathsf{E}[S_{m+1}]}
		{\mathrm{Var}(S_{m+1})}, \quad \nu \;= \;c E[S_{m+1}] ,  \quad E[S_{m+1}]\; =\; m+1,  \\ \mathrm{Var}(S_{m+1})  \;= \;  &  \; 2(m+1) \; -\; \frac{m^2 + 6m + 6}{n} \; \;+ \;  \; \sum_{k=0}^{m+1} \frac{1}{ n \cdot p_k}.
\end{align*}
Since $c \rightarrow 1$ and $\nu \rightarrow m +1$ as $n \to \infty$, the distributions used in the finite sample approximation are asymptotically equal to the asymptotic distribution known from Pearson's $\chi^2$-test. But in Nass' $\chi^2$-test the distribution of the approximating $\frac 1 c Z$ matches for each $n$ the first two moments of the distribution of $S_{m+1}$, while $\chi^2_{m+1}$ matches only the first moment. One can conjecture that this might typically lead to a better approximation than Pearson's test.
		
On the basis of this finite sample correction that matches the first two moments, the null hypothesis is rejected in Nass' $\chi^2$-test at level $\kappa \in (0,1)$, if
		$cS_{m+1}  >  F^{-1}_{\chi^2_{\nu}} (1-\kappa).$ 
		
		\subsubsection{Likelihood Ratio Test}
		A likelihood ratio test (LRT) is a standard procedure in hypothesis testing that compares the ratio of the likelihood of the sample under the null hypothesis and without any restriction. 
		More precisely,  suppose that $\Theta$ is the parameter set and that $\Theta_0 \subseteq \Theta$ denotes the null hypothesis. By $X$ we denote the observations and by $p_\vartheta$ the likelihood function for $\vartheta \in \Theta$. The corresponding likelihood ratio is given by 
		\[ \lambda(X) = \frac{\sup_{\vartheta \in \Theta_0} p_\vartheta(X)}{
		\sup_{\vartheta \in \Theta} p_\vartheta(X)}. \]
The asymptotic distribution of $- 2 \log \lambda (X)$ for the number of samples going to $\infty$ can conveniently be characterized in models that satisfy suitable regularity conditions, see e.g.~Section 10.3 of \cite{CasellaBergerRoger2002}. 

A result of this type holds in particular, if $\Theta$ includes all multinomial distributions for $n$ trials and $m+2$ possible outcomes and the null hypothesis contains only the single distribution $\mathrm{MN}(n, (p_0, p_1, \dots, p_{m+1}))$. In this case, the LRT statistic is
	    \[ R \; =  \;  2 \sum_{k=0}^{m+1} O_k \log \left( \frac{O_k }{n p_k} \right) \]	    with an asymptotic $\chi_{m+1}^2$-distribution for $n \to \infty$.
	    The corresponding LRT with level $\kappa \in (0,1)$ rejects
	    the null hypothesis if $R  >  
	    F^{-1}_{\chi_{m+1}^2} (1- \kappa).$

    \subsection{The Skewed t-Distribution of \cite{FS1}}\label{app:ST}
    Consider the probability density function (pdf) $ g_\nu $ of the t-distribution with $\nu$ degrees of freedom as
    \[ g_\nu(x) = \frac{\Gamma \left( \frac{\nu+1}{2} \right)}{\sqrt{ \nu \pi} \Gamma \left( \frac{\nu}{2} \right)} \left( 1 + \frac{x^2}{\nu} \right)^{- \frac{\nu +1}{2} }. \]
    \cite{FS1} proposes a class of skewed distributions with the pdf
    \[ f_{\nu, \gamma} (x) = \frac{2}{\gamma + \frac{1}{\gamma}}\left( g_\nu \left( \frac{x}{\gamma}\right) \mathds{1}_{\{x \in [0. \infty) \}} + g_\nu (\gamma x ) \mathds{1}_{\{x \in (-\infty. 0] \}} \right) \]
    for $\gamma \in (0.\infty)$. \\
    For the simulation we need to determine the expectation and variance of the skewed t-distribution defined above. 
    From \cite{FS1} we know that if $X \sim f_{\nu, \gamma}(x)$ then
    \[ E[X^r] = M_r \frac{ \gamma^{r+1} + \frac{(-1)^r}{\gamma^{r+1}}}{
    \gamma + \frac{1}{\gamma}} \]
    where
    $M_r = \int_0^\infty s^r 2 g_\nu(s) ds. $
    With $Y \sim g_\nu(s)$, we have that (see \cite{R} for the expectation of $\vert Y \vert$)
    $$
        M_1 \; = \; \int_0^\infty 2s \frac{\Gamma
            \left( \frac{\nu+1}{2} \right)}{ \sqrt{\nu \pi } \Gamma\left(
            \frac{\nu}{2} \right) } \left( 1 + \frac{x^2}{\nu} 
            \right)^{- \frac{\nu+1}{2}} ds \\
            \; = \; 2 E [ \vert Y \vert ] \\
            \; = \;  \frac{2 \nu}{(\nu -1) \sqrt{\nu \pi}} \frac{\Gamma \left( 
            \frac{\nu+1}{2}\right)     }{ \Gamma \left( \frac{\nu}{2}\right)}.
$$
    As $g_\nu(s)$ is symmetric we have
    \begin{align*}
        M_2 &= \int_0^\infty 2 s^2 g_\nu(s) ds 
        = \int_0^\infty s^2 g_\nu(s) ds + \int_{- \infty}^0 s^2 g_\nu(-s) 
        ds \\
        &= \int_{- \infty}^\infty s^2 g_\nu(s)) ds 
        = E[Y^2] = \frac{\nu}{\nu- 2}.
    \end{align*}
    Now we can calculate the first two non centered moments of
    $X \sim f_\nu(x)$ as
    \begin{align*}
        E [X] &= \left( \gamma - \frac{1}{\gamma} \right) \frac{2\nu}{
        (\nu -1) \sqrt{ \pi \nu}} \frac{\Gamma \left( \frac{\nu+1}{2}
        \right) }{ \Gamma \left( \frac{\nu}{2} \right) },  \\
        E[X^2] &= \frac{\nu}{\nu-2} \frac{\gamma^3 + \frac{1}{\gamma^3} }{
        \gamma + \frac{1}{\gamma}},
    \end{align*}
    where the second moment exists if $\nu > 2$. 
    So the variance of $X$ is given as
    \begin{align*}
        \mathrm{Var}(X) &= E[X^2] - E[X]^2 \\
        &= \frac{\nu}{\nu-2} \frac{\gamma^3 + \frac{1}{\gamma^3} }{
        \gamma + \frac{1}{\gamma}} - \left( 
        \left( \gamma - \frac{1}{\gamma} \right) \frac{2\nu}{
        (\nu -1) \sqrt{ \pi \nu}} \frac{\Gamma \left( \frac{\nu+1}{2}
        \right) }{ \Gamma \left( \frac{\nu}{2} \right) } 
        \right)^2.
    \end{align*}
    
    To sample from $f_{\nu, \gamma} (x)$, we propose an acceptance-rejection algorithm with sampling distribution $g_\nu(x)$.
    For this to work we need to calculate $k > 0$ such that 
    \begin{align*}
        k &\geq \frac{f_{\nu, \gamma}(x) }{g_\nu(x)} 
        = \frac{\gamma + \frac{1}{\gamma}}{2} 
        \frac{ \left( 1 + \frac{x^2}{\nu} \right)^{- \frac{\nu+1}{2}}}{
        \left( 1 + \frac{x^2}{\gamma^2 \nu} \right)^{- \frac{\nu+1}{2}}
        \mathds{1}_{\{x \in [0. \infty) \}} + 
        \left( 1 + \frac{x^2 \gamma^2}{\nu} \right)^{- \frac{\nu+1}{2}}
        \mathds{1}_{\{x \in (- \infty. 0 )\} }}.
    \end{align*}
    We distinguish the following two cases.
    \begin{itemize}
        \item[(i)] $x \geq 0$: we require that
        \begin{align*}
            k &\geq 
            \frac{ \gamma + \frac{1}{\gamma}}{2} 
            \left( \frac{1 + \frac{x^2}{\nu} }{1 + \frac{x^2}{
            \gamma^2 \nu }} \right)^{- \frac{\nu+1}{2}}. 
        \end{align*}
        The function $x^{- \frac{\nu +1}{2} }$ is strictly falling on $x \in (0. \infty)$ and the argument of the function above is positive. 
        Thus, we find $k$ by minimizing the argument 
        \[ i(x) := \frac{1 + \frac{x^2}{\nu}}{ 1 + \frac{x^2}{\gamma \nu}}. \]
        If $\gamma > 1$, $i(x)$ is strictly decreasing and the maximum is found at $\lim_{x \rightarrow 0+} i(x)$.
        On the other hand, if $\gamma \in (0.1)$ the function is strictly monotone increasing and the maximum is $\lim_{x \rightarrow \infty}
        i(x)$. 
        We can calculate this limit as
        \begin{align*}
            \lim_{x \rightarrow \infty} \frac{1 + \frac{x^2}{\nu}}{
            1 + \frac{x^2}{\gamma^2 \nu}} 
            = \lim_{x \rightarrow \infty} \underbrace{\frac{1}{1 + \frac{x^2}{\gamma^2\nu}} }_{ \rightarrow 1} + \underbrace{\frac{1}{ \nu \left(\frac{1}{x^2} + \frac{1}{ \gamma^2 \nu}\right)} }_{\rightarrow \gamma^2}
            = 1 + \gamma^2.
        \end{align*}
        Consequently, for $x \geq 0$ we have that
        \[
        \frac{\gamma + \frac{1}{\gamma}}{2}\left( \frac{1+  \frac{x^2}{\nu}}{1 + \frac{x^2}{\gamma^2 \nu}} \right)^{- \frac{\nu+1}{2}} \leq 
        \begin{cases}
        \frac{\gamma + \frac{1}{\gamma}}{2} (1 + \gamma^2)^{- \frac{\nu+1}{2}}
        \hspace{5mm} &\gamma < 1 \\
        1 & \gamma = 1 \\
        \frac{\gamma+ \frac{1}{\gamma}}{2} & \gamma > 1.
        \end{cases}\]
    \item[(ii)] $x < 0$: we want to determine $k$ such that 
        \[ k \geq \frac{\gamma + \frac{1}{\gamma}}{2} \left( 
        \frac{1 + \frac{x^2}{\nu}}{ 1 + \frac{x^2 \gamma^2}{\nu}} \right)^{-
        \frac{\nu+1}{2}}. \]
        Using the same reasoning as above we minimize the argument
        \[ i(x) := \frac{1 + \frac{x^2}{\nu}}{ 1 + \frac{x^2 \gamma^2}{\nu}} \]
        to find the maximum of the density quotient. 
        If $\gamma < 1$ we find the minimum of $i(x)$ at $\lim_{x \rightarrow 0-} i(x)$. 
        We have that
        \begin{align*}
            \lim_{x \rightarrow 0-} \frac{1+ \frac{x^2}{ \nu}}{ 1 + 
            \frac{x^2 \gamma^2 }{\nu}} 
            = \lim_{x \rightarrow 0-} \frac{1 }{1 + \frac{x^2 \gamma^2}{\nu}}
            + \frac{1}{\nu \left( \frac{1}{x^2} + \frac{\gamma^2 }{\nu} \right)} 
            = 1 + \frac{1}{\gamma^2}.
        \end{align*}
        Therefore,
        \[ \frac{\gamma + \frac{1}{\gamma}}{2} 
        \left( \frac{1 + \frac{x^2}{\nu}}{ 1 + \frac{x^2 \gamma^2}{\nu}} \right)
        ^{- \frac{\nu+1}{2}} \leq \begin{cases}
            \frac{\gamma + \frac{1}{\gamma}}{2 } \left( \frac{1}{\gamma^2}\right)
            ^{- \frac{\nu+1}{2}}  \hspace{10mm} & \gamma > 1 \\
            1  & \gamma =1 \\
            \frac{\gamma + \frac{1}{\gamma}}{2} & \gamma < 1
        \end{cases}. \]
    \end{itemize}
    
    From this computation, we determine $k$ as follows: 
    if $\gamma \geq 1$ we set 
    \begin{align*}
        k:= \max \left\{ \frac{\gamma + \frac{1}{\gamma}}{ 
        \frac{\gamma + \frac{1}{\gamma}}{2}} \left( \frac{1}{\gamma^2}
        \right)^{- \frac{\nu +1 }{2}} \right\} 
        = \frac{\gamma + \frac{1}{\gamma}}{2} \gamma^{\nu+1};
    \end{align*}
    if $\gamma < 1$ we set
    \begin{align*}
        k:= \max \left\{ \frac{\gamma + \frac{1}{\gamma}}{2} 
        \frac{\gamma + \frac{1}{\gamma}}{2} \left( 1 + \gamma^2 \right)^
        {- \frac{\nu+1}{2}} \right\} 
        = \frac{\gamma + \frac{1}{\gamma}}{2}.
    \end{align*}
                                        
    \subsection{Computations for Section \ref{sec:avarstudy}} \label{app:avarstudy}
    Consider the distortion function $g(u) =  \frac{u}{\alpha} \mathds{1}_{\{0 \leq u < \alpha\}} + \mathds{1}_{\{u \geq \alpha \}}$ of the AV@R at level $\alpha$.
    For $G \sim g$ and $0 \leq a < b \leq 1$ we can calculate $\mathsf{E}[ G \mathds{1}_{\{ G \in [a, b) \}}]$ by considering the following cases:
    \begin{itemize}
        \item[(i)]if $a, b < \alpha$, 
        \begin{align*}
            \mathsf{E} \left[ G \mathds{1}_{\{G \in [a,b)\}} \right] = 
            \int_a^b u \frac{1}{\alpha} du  =
            \frac{1}{2 \alpha} (b^2 - a^2 ); 
        \end{align*} 
        \item[(ii)] if $a < \alpha, b \geq \alpha$,
        \begin{align*}
            \mathsf{E} \left[ G \mathds{1}_{\{G \in [a,b)\}} \right] = 
            \int_a^\alpha u \frac{1}{\alpha} du  =
            \frac{1}{2\alpha} (\alpha^2 - a^2 );
        \end{align*} 
        \item[(iii)] if $a, b \geq \alpha$,
        \begin{align*}
            \mathsf{E} \left[ G \mathds{1}_{\{G \in [a,b)\}} \right] = 0.
        \end{align*} 
    \end{itemize}
    Let $\alpha_0, \alpha_1, \dots, \alpha_{m}$ be a partition of 
    $[0, \alpha]$, where $\alpha_m \leq \alpha$ and set $\alpha_{m+1} = 1$. 
    Plugging this in the definition of the probabilities $p_i$ in Theorem \ref{theo:RightContCellCount}, we obtain
     \begin{align*}
         p_0 &= 1 - \frac{\mathsf{E} [ G\mathds{1}_{\{ G \in [\alpha_m,1)\}} ]}{g(1) - g(\alpha_m)} \\
         &= 1 - \frac{ \frac{1}{2 \alpha} (\alpha^2 - \alpha_m^2) }{1 - \frac{\alpha_m}{\alpha}}; \\
         p_k &= \frac{E \left[ G \mathds{1}_{\{ G \in [\alpha_{m+1-k}, \alpha_{m+2-k})\}} \right] }{g(\alpha_{m+2-k}) - g(\alpha_{m+1-k})} - \frac{E \left[ G \mathds{1}_{\{G \in [\alpha_{m-k}, \alpha_{m+1-k}) \}} \right] }{g(\alpha_{m+1-k})-g(\alpha_{m-k})} \\
         &= \frac{\frac{1}{2\alpha} \left( \alpha_{m+2-k}^2 - \alpha_{m+1-k}^2 \right)}{ \frac{1}{2\alpha} ( \alpha_{m+2-k} - \alpha_{m+1-k})} - \frac{\frac{1}{2\alpha} \left( \alpha_{m+1-k}^2 - \alpha_{m-k}^2 \right)}{ \frac{1}{2\alpha} ( \alpha_{m+1-k} - \alpha_{m-k})} \\ 
         &= \frac{\left( \alpha_{m+2-k}^2 - \alpha_{m+1-k}^2 \right)}{  ( \alpha_{m+2-k} - \alpha_{m+1-k})} - \frac{ \left( \alpha_{m+1-k}^2 - \alpha_{m-k}^2 \right)}{  ( \alpha_{m+1-k} - \alpha_{m-k})}; \mbox{ and } \\
         p_{m+1} &= \frac{\mathsf{E} [ G \mathds{1}_{\{G \in [0, \alpha_1) \}} ]}{g(\alpha_1)} \\
         &= \frac{\frac{1}{2\alpha} \alpha_1^2}{ \frac{\alpha_1}{\alpha}} = \frac{\alpha_1}{2},
     \end{align*}
     where $k \in \{1, \dots, m\}$.

    \subsection{Computations for Section \ref{sec:gvsim}} \label{app:gvsim}

	    Recall the GlueV@R risk measure introduced in Example \ref{ex:drm} corresponds to the distortion function
	        \[  g(u) = \begin{cases} \frac{h_1}{ \beta} u \hspace{30mm} & 
			\text{ if } 0 \leq u  \leq \beta \\
			h_1 + \frac{h_2 - h_1}{ \alpha - \beta} ( u-  \beta) 
			& \text{ if }  \beta <  u \leq \alpha \\
			1 & \text{ if }  \alpha < u \leq 1 
			\end{cases}. \]
		where $0  \leq \beta \leq \alpha\leq 1$ and $0 \leq  h_1 \leq h_2 \leq 1$.
	    The partition is again set such that $(\alpha_1, \dots, \alpha_m)$ are equidistant in $[0, \alpha]$ and $\alpha_0 = 0$, $\alpha_{m+1}= 1$.
	    
	    To sample from $G \vert G \in [\alpha_j, \alpha_{j+1})$ for $G \sim g $, we use the inverse transform method. 
	    If $U \sim \mathrm{unif}(0,1)$, then we have that $q_{g_c}(U) \sim g$. 
	    If we let $V$ be the restriction of $U$ to the interval $[\alpha_j, \alpha_{j+1})$, as 
            \[ V := g(\alpha_{j}) + (g(\alpha_{j+1})- g( \alpha_j)) U, \]
        we obtain that $q_{g_c}(V)$ is distributed as $G \vert G \in [\alpha_j, \alpha_{j+1})$. The left quantile function $q_{g_c}$ of the distrotion function $g$ can be calculated as
            \[ q_{g_c}(v) = \begin{cases}
                \frac{\beta}{h_1} v  \hspace{10mm} & ,v \in [0, h_1) \\
                \beta + \frac{\alpha - \beta}{h_2 - h_1} (v - h_1) &, v \in [h_1, h_2] \\
                \alpha &, v \in (h_2, 1]. 
            \end{cases} \]
	    
	    Now we can compute the probabilities $p_k$ from Theorem \ref{theo:RightContCellCount}.
	    The distortion function of the Glue-V@R defines the probability $\mathsf{P}$ with an atom at $\alpha$, where $\mathsf{P}(G = \alpha) = 1- h_2$.
        For the absolutely continuous part of $P$ we have
        $$  dg(u) = \begin{cases}
            \frac{h_1}{\beta} du &\, u \in [0, \beta) \\
            \frac{h_2 - h_1}{\alpha - \beta}  du&\, u \in [\beta, \alpha)\\
            0 &\, u \in [\alpha, 1].
        \end{cases}$$
    The computation of the $p_k$ requires calculating 
        $E \left[ G \mathds{1}_{\{ G \in [\alpha_{j}, \alpha_{j+1})\}} \right]$
    in the following five cases.
            \begin{itemize} 
                \item[(i)] \textit{$\alpha_j \in [0,\beta)$, $ \alpha_{j+1} \in [0, \beta)$}:
                \begin{align*}
                    \frac{\mathsf{E} \left[ G \mathds{1}_{\{G \in [ \alpha_j, \alpha_{j+1})\}} \right]}{g(\alpha_{j+1}) - g(\alpha_j)} &=  \frac{\int_{\alpha_j}^{\alpha_{j+1}} u \frac{h_1}{\beta}
                    du}{ \frac{h_1}{\beta} 
                    (\alpha_{j+1} - \alpha_j)} 
                    = \frac{1}{2 } ( \alpha_{j+1}  - \alpha_j).
                \end{align*}
                \item[(ii)] \textit{$\alpha_j \in [0, \beta)$, $\alpha_{j+1} \in [\beta, \alpha)$}: 
                \begin{align*}
                    \frac{\mathsf{E} \left[ G  \mathds{1}_{\{ G \in [ \alpha_j, \alpha_{j+1}) \}} \right]}{g(\alpha_{j+1}) - g(\alpha_j) } 
                    &= \frac{\int_{\alpha_j}^\beta u dg(u) + \int_\beta^{\alpha_{j+1}} u dg(u)}{h_1 + \frac{h_2 - h_1 }{\alpha - \beta} (\alpha_{j+1}  - \beta) - \frac{h_1}{\beta} \alpha_j} 
                    = \frac{\frac{h_1}{\beta} ( \beta - \alpha_j)^2 + \frac{h_2 - h_1 }{2 ( \alpha - \beta) } ( \alpha_{j+1} - \beta)^2 }{h_1 + \frac{h_2 - h_1}{\alpha - \beta} 
                    ( \alpha_{j+1} - \beta) - \frac{h_1}{\beta} \alpha_j}.
                \end{align*}
                \item[(iii)] \textit{$\alpha_j \in [0, \beta)$, $\alpha_{j+1} \in [\alpha, 1)$}:
                \begin{align*}
                    \frac{\mathsf{E} \left[ G \mathds{1}_{\{ G \in [ \alpha_j, \alpha_{j+1})\}} \right]}{ g(\alpha_{j+1}) - g( \alpha_j)} &= \frac{\frac{h_1}{\beta} \int_{\alpha_j}^\beta u du + \frac{h_2 - h_1 }{\alpha- \beta } \int_{\beta}^\alpha u du + \alpha \mathsf{P}( G = \alpha)}{ 1 - \frac{h_1 }{\beta} \alpha_j} \\
                    &= \frac{\frac{h_1}{2 \beta} ( \beta - \alpha_j)^2 + \frac{h_2 - h_1 }{2} (\alpha - \beta) + (1- h_2)\alpha }{1 - \frac{h_1}{\beta} \alpha_j}.
                \end{align*}
                \item[(iv)] \textit{$\alpha_j \in [\beta, \alpha)$, $\alpha_{j+1} \in [\beta, \alpha)$}: 
                \begin{align*}
                    \frac{\mathsf{E} \left[ G \mathds{1}_{\{ G \in [ \alpha_j, \alpha_{j+1})\}} \right]}{ g(\alpha_{j+1}) - g( \alpha_j)} &= \frac{\frac{h_2- h_1}{\alpha- \beta} \int_{\alpha_j}^{\alpha_{j+1}} u du}{ h_1 + \frac{h_2 - h_1 }{ \alpha - \beta} ( \alpha_{j+1} - \beta) - h_1
                   - \frac{h_2 - h_1 }{\alpha- \beta} ( \alpha_j - \beta)} 
                = \frac{1}{2} ( \alpha_{j+1} - \alpha_j).
                \end{align*}
                
                \item[(v)] \textit{$ \alpha_j \in [\beta, \alpha)$, $\alpha_{j+1} \in [\alpha,1 )$}:
                \begin{align*}
                        \frac{\mathsf{E} \left[ G \mathds{1}_{\{ G \in [ \alpha_j, \alpha_{j+1})\}} \right]}{ g(\alpha_{j+1}) - g( \alpha_j)} &= \frac{\frac{h_2 - h_1}{\alpha- \beta} \int_\beta^\alpha u du + \alpha \mathsf{P}(G = \alpha)}{1 - h_1 + \frac{h_2 - h_1 }{\alpha - \beta}(\alpha_j - \beta)} 
                        = \frac{\frac{h_2-h_1 }{2} (\alpha - \beta) + \alpha(1- h_2) }{1 - h_1 + \frac{h_2 - h_1}{\alpha - \beta}(\alpha_j - \beta)}.
                \end{align*}
            \end{itemize}
        We can plug these calculations in Theorem \ref{theo:RightContCellCount} to obtain $p_k$ for the multinomial distribution of the cell counts $O$. \\
    
    \subsection{Computations for Section \ref{sec:genstudy}}\label{app:4.3}
    We consider the distortion function 
    \[ g(u) := \begin{cases}  \frac{h_1}{\beta} u 
    \hspace{30mm}& u \in [0 , \beta]\\
    h_2 + \frac{h_3 - h_2}{\alpha-\beta} (u - \beta) & u \in (\beta, 
    \alpha) \\
    1 & u \in [\alpha, 1]
    \end{cases}, \]
    where $0 \leq h_1 < h_2 < h_3 < 1$ and $\beta \leq \alpha$.
    
    \subsubsection{The Unique Decomposition from Theorem \ref{theo:uniqueconvexdecomposition}}
   By Theorem \ref{theo:uniqueconvexdecomposition} the unique decompoisition of the distortion function $g(u)$ is given as
    \[ g(u) = c_r g_{r} (u) + c_l g_{l}(u) + c_c g_c(u), \]
    where $g_{r}, g_{l}$ are right- resp. left-continuous step distortion functions and $g_c$ is a continuous distortion function. \\
    As in the proof of Theorem \ref{theo:uniqueconvexdecomposition} we have then
    \begin{align*}
        g_{r} (u) &= \mathds{1}_{\{u \geq \alpha\}}, \hspace{3mm} 
        &c_r = 1- h_3, \\
        g_{l} (u) &= \mathds{1}_{\{u > \beta\}}, &c_l = h_2 - h_1. 
    \end{align*}
    Therefore
    \begin{align*}
        g_c &= c ( g(u) - c_r g_{r}(u) - c_l g_{l}(u))  \\
        &= c \cdot \begin{cases}
            \frac{h_1}{\beta} u \hspace{40mm} &u \in [0, \beta] \\
            h_2 + \frac{h_3 - h_2 }{\alpha- \beta}(u-\beta) - h_2 + h_1
            & u \in (\beta, \alpha) \\
            1 - (1- h_3) - (h_2 - h_1) & u \in [\alpha,1] 
        \end{cases} \\
        &= c \cdot \begin{cases}
            \frac{h_1}{\beta} u \hspace{40mm} & u \in [0, \beta] \\
            h_1 + \frac{h_3 - h_2}{\alpha - \beta} (u - \beta) 
            & u \in ( \beta, \alpha) \\
            h_3 - h_2 + h_1  & u \in [\alpha,1]
        \end{cases}.
    \end{align*}
    To normalize $g_c$, we set
    \[ c = \frac{1}{h_3 - h_2 + h_1} \]
    and thus 
    \[ c_c = h_3 - h_2 + h_1. \]

	\subsubsection{Sampling from $g_l$, $g_r$, $g_c$}
		The distortion functions $g_r$ and $g_l$ describe the trivial distributions only taking $\alpha$ resp. $\beta$.	
        To sample from $g_c$ we calculate the quantile function $q_{g_c}$ as
        \begin{align*}
           	q_{g_c} ( p) = \begin{cases}
           		\frac{(h_3 - h_2 + h_1) \beta}{h_1} p \hspace*{40mm}& 0 \leq p \leq \frac{h_1}{h_3 - h_2 + h_1} \\
           		\frac{(\alpha- \beta) ((h_3 - h_2 + h_1) u - h_1)}{h_3 - h_2} + \beta & \frac{h_1}{h_3 -h_2 + h_1} < p \leq 1 
           	\end{cases}
        \end{align*}
       We can then sample $G \vert G \in [\alpha_j, \alpha_{j+1})$ be setting
       \[ V := g(\alpha_j) + ( g( \alpha_{j+1} ) - g( \alpha_j)) U \]
       and the inverse transform method, by considering 
       \begin{align*}
       		G \vert G \in [ \alpha_j, \alpha_{j+1} ) \overset{d}{=}  \begin{cases}
       			 \frac{\beta}{h_1} V \hspace{20mm} &  V \in [0, h_1) \\
       			 \beta  & V \in [h_1, h_2) \\
       			 \frac{(\alpha- \beta) ( V - h_2)}{h_3 - h_2} + \beta & V \in [h_1, h_3) \\
       			 \alpha & V \in  [ h_3, 1]
       		\end{cases}
		\end{align*}        
        
     \subsubsection{Calculation of the probabilities $p_k$}
   
   To calculate 
   $ \mathsf{E}[ G\mathds{1}_{\{ G \in [\alpha_j , \alpha_{j+1}) \}} ] $
   for $j \in \{0, \dots, m \}$, we first have that
   \begin{align*}
   		\mathsf{E} \left[ G \mathds{1}_{\{ G \in [\alpha_j, \alpha_{j+1}) \}} \right] &= \mathsf{E} \left[ \mathsf{E}\left[ 
   		G \mathds{1}_{\{ G \in [\alpha_j, \alpha_{j+1}) \}} \vert C \right] \right]  \\
   		&= c_l \mathsf{E}[ G \mathds{1}_{\{ G \in [\alpha_j, \alpha_{j+1}) \}} \vert C = l ] 
   		+ c_r \mathsf{E}[ G \mathds{1}_{\{ G \in [\alpha_j, \alpha_{j+1}) \}} \vert C = r ] \\
   		&+ c_c \mathsf{E}[ G \mathds{1}_{\{ G \in [\alpha_j, \alpha_{j+1}) \}}\vert C = c ] \\
   		&= c_l  \beta \mathds{1}_{\{ \beta \in [\alpha_j, \alpha_{j+1}) \}} + c_r \alpha \mathds{1}_{\{ \alpha \in [\alpha_j, \alpha_{j+1}) \}} + c_c \int_{[\alpha_j, \alpha_{j+1})} u dg_c(u).   		
   \end{align*}
   The differential of $g_c(u)$ is given as
   \begin{align*}
   		dg_c(u) = \begin{cases}
   				\frac{h_1}{\beta(h_3 - h_2 + h_1)} du\hspace{20mm} & u \in [0, \beta] \\
   				\frac{h_3 - h_2}{(\alpha - \beta)(h_3 - h_2 + h_1)} du& u \in (\beta, \alpha) \\
   				0 & u \in [ \alpha , 1]
   		\end{cases}.   
   \end{align*}
   Then we consider the following cases.
   \begin{itemize}
   		\item[(i)] If $\alpha_j \in [0, \beta]$, $\alpha_{j+1} \in [0, \beta]$, then
   		\begin{align*}
   			\int_{[\alpha_j, \alpha_{j+1})} u dg_c(u) &= \int_{[\alpha_j, \alpha_{j+1})} u \frac{h_1}{\beta(h_3- h_2 + h_1)} du \\
   			&= \frac{h_1}{\beta( h_3- h_2 + h_1)} \left[ \frac{1}{2} u^2 \right]^{\alpha_{j+1}}_{\alpha_j} \\
   			&= \frac{h_1}{2\beta( h_3 - h_2 + h_1 )} \left( \alpha_{j+1}^2 - \alpha_j^2 \right). 
   		\end{align*}
   		
   		\item[(ii)] If $\alpha_j \in [0, \beta]$, $\alpha_{j+1} \in ( \beta, \alpha)$, then
   		\begin{align*}
   			\int_{[\alpha_j, \alpha_{j+1})} u dg_c(u) &= \int_{[ \alpha_j, \beta] } u dg_c(u) + \int_{(\beta, \alpha_{j+1})} u dg_c(u) \\
   			&= \frac{h_1}{2 \beta( h_3 - h_2 + h_1)} \left( \beta^2 - \alpha_j^2 \right) + \frac{h_3 - h_2 }{(\alpha - \beta)(h_3 - h_2 + h_1 )} 
   			\left[ \frac{1}{2} u^2 \right]^{\alpha_{j+1}}_{\beta}  \\
   			&=\frac{h_1}{2 \beta( h_3 - h_2 + h_1)} \left( \beta^2 - \alpha_j^2 \right) + \frac{h_3 - h_2 }{2 (\alpha - \beta)(h_3 - h_2 + h_1 )} 
   			\left(\alpha_{j+1}^2 - \beta^2 \right). 
   		\end{align*}
   		
   		\item[(iii)] If $\alpha_j \in [0,\beta]$, $\alpha_{j+1} \in [\alpha, 1]$, then
   		\begin{align*}
   			\int_{[\alpha_j , \alpha_{j+1})} u dg_c(u) &= \int_{ [ \alpha_j, \alpha]} u dg_c(u) \\
   			&=\frac{h_1}{2 \beta( h_3 - h_2 + h_1)} \left( \beta^2 - \alpha_j^2 \right) + \frac{h_3 - h_2 }{2 (\alpha - \beta)(h_3 - h_2 + h_1 )} 
   			\left(\alpha^2 - \beta^2 \right).
   		\end{align*}
   		
   		\item[(iv)] If $\alpha_j \in (\beta, \alpha)$, $\alpha_{j+1} \in (\beta, \alpha)$, then
   		\begin{align*}
   			\int_{[\alpha_j , \alpha_{j+1})} u dg_c(u) &= \int_{[\alpha_j, \alpha_{j+1})} u \frac{h_3- h_2 }{(\alpha - \beta)(h_3- h_2 + h_1)} du \\
   			&= \frac{h_3 - h_2 }{2(\alpha - \beta)(h_3 - h_2 + h_1)}  \left( \alpha_{j+1}^2 - \alpha_j^2 \right) .
   		\end{align*}
   		
   		\item[(v)] If $\alpha_j \in (\beta, \alpha)$, $\alpha_{j+1} \in [\alpha, 1]$, then
   		\begin{align*}
   			\int_{[\alpha_j, \alpha_{j+1})} u dg_c(u) &= \int_{[\alpha_j, \alpha)} u dg_c(u) \\
   			&=\frac{h_3 - h_2 }{2(\alpha - \beta)(h_3 - h_2 + h_1)}  \left( \alpha^2 - \alpha_j ^2\right).
   		\end{align*}
   		
   		\item[(vi)] If $\alpha_j \in [\alpha,1]$, $\alpha_{j+1} \in [\alpha, 1]$, then
   		$
   			\int_{[\alpha_j, \alpha_{j+1})} u dg_c(u) = 0. 
   		$
   \end{itemize}
   Finally, we plug in the above results in the formulas of Lemma \ref{lem:GenExProb} to obtain the $p_k$.

\subsection{Range Value at Risk}\label{sec:rv@r}   

RV@R is a distortion risk measure with distortion function
$g(u) = \frac{u- \beta}{\alpha - \beta} \mathds{1}_{\{ \beta < u \leq \alpha\}}
		+ \mathds{1}_{\{ u > \alpha \}}$, $0<\beta <\alpha$. This risk measure ignores the tail beyond the value at risk at level $\alpha$ and is therefore -- just as the value at risk -- not sensitive to extreme events. In the limiting case $\beta \to 0$, it coincides with tail-sensitive $AV@R_\alpha$. We repeat the case studies of Section~\ref{sec:DistStudies} for $RV@R$. The results of the backtests are shown in Tables \ref{fig:RV@R_1}, \ref{fig:RV@R_2} \& \ref{fig:RV@R_3} for 	$\alpha = 0.025$,  $\beta \in \{0.015, 0.005, 0.001 \}$ and sample size $N = 20000$. We use the partition $0 = \alpha_0$, $\alpha_i = \beta + (\alpha - \beta) i / (m+1)$ for $i \in \{1, \dots, m \}$, and $\alpha_{m+1} =1$, and use otherwise the same backtesting strategy as for $AV@R$. The size of the asymptotic tests is in all cases reasonable. The power is generally poor for $\beta = 0.015$ and improves, when smaller values of $\beta$ are chosen. This is due to the fact that for decreasing $\beta$ the risk measures $RV@R$ depends on more extreme parts of the tail. This is also apparent from Figure~\ref{fig:RV@R} where we plot the size of the test and the power for the three considered alternatives a function of $\beta$ in the case $m=16$, $n= 1000$. The size is quite reasonable for all values of $\beta$, but the power is low unless $\beta$ is very close to zero. It increases to acceptable levels as $\beta \to 0$ from above, i.e., when approaching the limiting case $AV@R_\alpha$. A similar behavior holds also for other levels of $\alpha$; this is displayed in Figures~\ref{fig:RV@R3}~\&~\ref{fig:RV@R4} for $\alpha = 0.5\%$ and $\alpha = 10\%$, respectively. The key finding is that risk measures that neglect -- like V@R -- the tail beyond some threshold level or quantile are, of course, less sensitive to tail properties of the distribution, and this diminishes the power.
  
 The numerical experiments in Tables \ref{fig:RV@R_1}, \ref{fig:RV@R_2} \& \ref{fig:RV@R_3} for 	$\alpha = 0.025$,  $\beta \in \{0.015, 0.005, 0.001 \}$ were also repeated by us for the non-randomized method suggested in \cite{KLM}.  The comparison showed again that our randomized method generally improves the power, while retaining a similar size. For $\beta = 0.015$, the power is low for both methods and quite similar in most cases. For $\beta \in \{0.005, 0.001 \}$, the power improves. Im particular, for large $n$ and $m=2, 4$ the randomized method performes best as well as better than \cite{KLM}. As an example, we include the detailed comparison for  $\beta = 0.005$ in Table \ref{fig:RV@R_2_KLMComp}.
  
\begin{table}
\begin{tabularx}{1.0\textwidth}{p{5mm} | X | X X X X X X X}
    	        \multicolumn{9}{c}{\textbf{Nass}}  \\
    	        \hline \\
    	        $L_t$ & $n | m$ & 1 & 2 & 4 & 8 & 16 & 32 & 64 \\
    	        \hline \hline 
    	        $\mathcal{N}$ 
    	        & 250 & \cellcolor{vg} $ 0.89 $&$ 0.75 $&\cellcolor{vg} $ 1.20 $&\cellcolor{g} $ 0.98 $&\cellcolor{g} $ 0.99 
    	        $&\cellcolor{vg}$ 0.87 $&$ 1.26 $ \\
    	        & 500 & $ 0.84 $&\cellcolor{g} $ 0.97 $&\cellcolor{vg}$ 0.87 $&\cellcolor{g} $ 0.97 $&\cellcolor{g} $ 1.06 
    	        $&\cellcolor{g} $ 1.04 $&\cellcolor{vg}$ 0.88 $ \\
    	         & 1000 & \cellcolor{vg}$0.88 $&\cellcolor{g} $ 0.92 $&\cellcolor{g} $ 0.94 $&\cellcolor{g} $ 1.06 $&
    	         \cellcolor{g} $ 1.03 $&\cellcolor{g} $ 1.10 $&\cellcolor{g} $ 1.08 $  \\   	        
    	        & 2000 &\cellcolor{g}  $ 0.96 $&\cellcolor{g} $ 0.97 $&\cellcolor{g} $ 0.97 $&\cellcolor{g} $ 0.98 $&\cellcolor{g} $ 
    	        1.04 $&\cellcolor{vg} $ 1.11 $&\cellcolor{g} $ 1.03 $ \\
    	        \hline 
    	         T3 
    	        & 250 & \cellcolor{vp}$ 1.68 $& \cellcolor{vp}$ 1.65 $& \cellcolor{vp}$ 1.36 $& \cellcolor{vp}$ 0.57 $
    	        & \cellcolor{vp}$ 0.40 $& \cellcolor{vp}$ 0.32 $& \cellcolor{vp}$ 0.46 $ \\
    	        & 500 & \cellcolor{vp} $3.17 $& \cellcolor{vp}$ 1.97 $& \cellcolor{vp}$ 0.78 $& \cellcolor{vp}$ 0.40 $
    	        & \cellcolor{vp}$ 0.24 $& \cellcolor{vp}$ 0.26 $& \cellcolor{vp}$ 0.14 $ \\
    	         & 1000 & \cellcolor{vp} $ 6.48 $& \cellcolor{vp}$ 4.32 $& \cellcolor{vp}$ 1.49 $
    	         & \cellcolor{vp}$ 0.38 $& \cellcolor{vp}$ 0.12 $& \cellcolor{vp}$ 0.07 $& \cellcolor{vp}$ 0.06 $ \\ 	        
    	        & 2000 &  \cellcolor{p}$ 21.70 $& \cellcolor{p}$ 20.51 $& \cellcolor{vp}$ 9.90 $
    	        & \cellcolor{vp}$ 1.91 $& \cellcolor{vp}$ 0.17 $& \cellcolor{vp}$ 0.04 $& \cellcolor{vp}$ 0.03 $ \\
    	         \hline 
    	         T5
    	        & 250 &  \cellcolor{vp}$ 4.96 $& \cellcolor{vp}$ 4.21 $& \cellcolor{vp}$ 3.81 $& \cellcolor{vp}$ 2.29 $
    	        & \cellcolor{vp}$ 2.03 $& \cellcolor{vp}$ 1.80 $& \cellcolor{vp}$ 2.08 $ \\
    	        & 500 & \cellcolor{vp}$ 6.46 $& \cellcolor{vp}$ 6.17 $& \cellcolor{vp}$ 3.49 $& \cellcolor{vp}$ 2.38 $
    	        & \cellcolor{vp}$ 1.81 $& \cellcolor{vp}$ 1.54 $& \cellcolor{vp} $ 1.04 $ \\
    	         & 1000 &  \cellcolor{p}$ 10.62 $& \cellcolor{vp}$ 9.39 $& \cellcolor{vp}$ 6.16 $& \cellcolor{vp}$ 3.38 $
    	         & \cellcolor{vp}$ 2 $& \cellcolor{vp}$ 1.18 $& \cellcolor{vp}$ 0.95 $ \\  	        
    	        & 2000 & \cellcolor{p}$ 19.98 $& \cellcolor{p}$ 18.61 $& \cellcolor{p}$ 13.33 $& \cellcolor{vp}$ 7.34 $
    	        & \cellcolor{vp}$ 2.94 $& \cellcolor{vp}$ 1.17 $& \cellcolor{vp}$ 0.57 $ \\
    	        \hline 
    	         ST
    	        & 250 &\cellcolor{vp} $ 7.46 $&\cellcolor{vp}$ 6.68 $&\cellcolor{vp}$ 4.29 $&\cellcolor{vp}$ 1.47 $
    	        &\cellcolor{vp}$ 0.95 $&\cellcolor{vp}$ 0.75 $&\cellcolor{vp}$ 0.78 $ \\
    	        & 500 & \cellcolor{p}$ 13.13 $&\cellcolor{p}$ 11.37 $&\cellcolor{vp}$ 6.12 $&\cellcolor{vp}$ 2.83 $
    	        &\cellcolor{vp}$ 1.09 $&\cellcolor{vp}$ 0.71 $&\cellcolor{vp}$ 0.36 $ \\
    	         & 1000 &\cellcolor{p}$ 26.51 $&\cellcolor{p}$ 24.41 $&\cellcolor{p}$ 16.48 $&\cellcolor{vp}$ 7.76 $&
    	         \cellcolor{vp}$ 2.19 $&\cellcolor{vp}$ 0.64 $&\cellcolor{vp}$ 0.17 $ \\  
    	        & 2000 &$ 53.41 $&$ 53.58 $&$ 44.59 $&\cellcolor{p}$ 27.66 $&\cellcolor{vp}$ 9.49 $&\cellcolor{vp}$ 1.36 $&
    	        \cellcolor{vp}$ 0.17 $ \\ 
    	        \hline 	        
    	        \end{tabularx}
\caption{Backtesting $RV@R$ with $\beta = 0.015$ and $\alpha = 0.025$: Estimated size (for hypothesis $H_0$ with distribution
$\mathcal{N}$) and power in $\%$ (for the alternatives $H_1$ with distributions $T3$, $T5$, $ST$, respectively) for the Pearson test, Nass test and LRT. The size is represented as the fraction of the true size according to our simulations divided by the desired level $\kappa = 5\%$. The colouring scheme for the size is as follows: Values between $0.8 - 1.2$ are green, values between
$0.9 - 1.1$ are dark green; values above $1.5$ are red, above $2$ dark red. The colouring scheme for 
the power is adopted from \cite{KLM}: Green refers to a power $\geq 70\%$; light red indicates a power $\leq 30\%$; dark 
red indicates poor results with a power $\leq 10\%$. }\label{fig:RV@R_1}
\end{table}

\begin{table}
\begin{tabularx}{1.0\textwidth}{p{5mm} | X | X X X X X X X}
    	        \multicolumn{9}{c}{\textbf{Nass}}  \\
    	        \hline \\
    	        $L_t$ & $n | m$ & 1 & 2 & 4 & 8 & 16 & 32 & 64 \\
    	        \hline \hline 
    	        $\mathcal{N}$ 
    	        & 250 & $ 0.71 $&\cellcolor{g} $ 0.90 $&\cellcolor{vg} $ 0.86 $&\cellcolor{g}$ 0.98 $&
    	        \cellcolor{vg} $ 1.14 $&\cellcolor{g}$ 1.06 $&\cellcolor{vg} $ 0.88 $ \\
    	        & 500 &\cellcolor{vg}  $ 0.85 $&\cellcolor{g}$ 0.96 $&\cellcolor{vg} $ 0.89 $&\cellcolor{g}$ 1.10 $
    	        &\cellcolor{g}$ 1.08 $&\cellcolor{vg}$ 1.11 $&\cellcolor{g}$ 1.08 $ \\
    	         & 1000  &\cellcolor{g} $ 0.98 $&\cellcolor{g}$ 0.95 $&\cellcolor{g}$ 0.90 $&\cellcolor{g}$ 1.04 $&\cellcolor{g}$ 1.08 
    	         $&\cellcolor{g}$ 1.10 $&\cellcolor{vg}$ 1.12 $ \\
    	        & 2000 &\cellcolor{g} $ 0.98 $&\cellcolor{g}$ 0.95 $&\cellcolor{g}$ 0.98 $&\cellcolor{g}$ 1 $&\cellcolor{g}$ 1.06 $&\cellcolor{g}$ 1.07 $&\cellcolor{vg}$ 1.12 $ \\
    	        \hline 
    	         T3 
    	        & 250 & \cellcolor{vp}$ 7.49 $&\cellcolor{vp}$ 9.01 $&\cellcolor{vp}$ 6.99 $&\cellcolor{vp}$ 4.21 $
    	        &\cellcolor{vp}$ 1.57 $&\cellcolor{vp}$ 0.66 $&\cellcolor{vp}$ 0.26 $ \\
    	        & 500 &\cellcolor{p}$ 15.55 $&\cellcolor{p}$ 16.54 $&\cellcolor{p}$ 13.44 $&\cellcolor{vp}$ 8.24 $
    	        &\cellcolor{vp}$ 3.33 $&\cellcolor{vp}$ 0.75 $&\cellcolor{vp}$ 0.18 $ \\
    	         & 1000 &$ 35.82 $&$ 38.16 $&$ 35.91 $&\cellcolor{p}$ 24.43 $&\cellcolor{p}$ 10.32 $
    	         &\cellcolor{vp}$ 2.33 $&\cellcolor{vp}$ 0.27 $ \\   	        
    	        & 2000 &$ 68.38 $&\cellcolor{g}$ 76.86 $&\cellcolor{g}$ 78.44 $&$ 67.86 $&$ 43.69 $&\cellcolor{p}$ 14.09 $
    	        &\cellcolor{vp}$ 1.44 $ \\
    	         \hline 
    	         T5
    	        & 250 & \cellcolor{p}$ 10.27 $&\cellcolor{p}$ 11.76 $&\cellcolor{vp}$ 8.94 $&\cellcolor{vp}$ 6.97 $&
    	        \cellcolor{vp}$ 4.32 $&\cellcolor{vp}$ 2.73 $&\cellcolor{vp}$ 2.06 $ \\
    	        & 500 &\cellcolor{p}$ 17.80 $&\cellcolor{p}$ 17.92 $&\cellcolor{p}$ 15.22 $&\cellcolor{p}$ 11.84 $&
    	        \cellcolor{vp}$ 7.03 $&\cellcolor{vp}$ 3.43 $&\cellcolor{vp}$ 2.07 $ \\
    	         & 1000 &$ 32.11 $&$ 32.20 $&$ 30.80 $&\cellcolor{p}$ 23.69 $&\cellcolor{p}$ 14.71 $&\cellcolor{vp}$ 6.34 $
    	         &\cellcolor{vp}$ 2.58 $ \\  	        
    	        & 2000 & $ 54.40 $&$ 59.11 $&$ 58.86 $&$ 50.96 $&$ 37.52 $&\cellcolor{p}$ 19.02 $&\cellcolor{vp}$ 6.39 $ \\
    	        \hline 
    	         ST
    	        & 250 &\cellcolor{p}$ 21.31 $&\cellcolor{p}$ 23.41 $&\cellcolor{p}$ 19.40 $&\cellcolor{p}$ 13.94 $
    	        &\cellcolor{vp}$ 6.83 $&\cellcolor{vp}$ 2.64 $&\cellcolor{vp}$ 1.17 $ \\
    	        & 500 & $ 39.52 $&$ 42.03 $&$ 38.82 $&$ 30.55 $&\cellcolor{p}$ 17.89 $
    	        &\cellcolor{vp}$ 6.57 $&\cellcolor{vp}$ 1.63 $ \\
    	         & 1000 &$ 69.27 $&\cellcolor{g}$ 72.11 $&\cellcolor{g}$ 71.50 $&$ 64.32 $&$ 46.04 $&\cellcolor{p}$ 23.13 $&
    	         \cellcolor{vp}$ 6 $ \\  	        
    	        & 2000 &\cellcolor{g}$ 93.38 $&\cellcolor{g}$ 96.14 $&\cellcolor{g}$ 96.59 $&\cellcolor{g}$ 94.72 $&
    	        \cellcolor{g}$ 88.51 $&\cellcolor{g}$ 69.92 $&$ 32.97 $ \\
    	        \hline 	        
    	        \end{tabularx}
\caption{Backtesting $RV@R$ with $\beta= 0.005$ and $\alpha = 0.025$: Estimated size (for hypothesis $H_0$ with distribution
$\mathcal{N}$) and power in $\%$ (for the alternatives $H_1$ with distributions $T3$, $T5$, $ST$, respectively) for the Pearson test, Nass test and LRT. The size is represented as the fraction of the true size according to our simulations divided by the desired level $\kappa = 5\%$. The colouring scheme for the size is as follows: Values between $0.8 - 1.2$ are green, values between
$0.9 - 1.1$ are dark green; values above $1.5$ are red, above $2$ dark red. The colouring scheme for 
the power is adopted from \cite{KLM}: Green refers to a power $\geq 70\%$; light red indicates a power $\leq 30\%$; dark 
red indicates poor results with a power $\leq 10\%$. }\label{fig:RV@R_2}
\end{table}

\begin{table}
\begin{tabularx}{1.0\textwidth}{p{5mm} | X | X X X X X X X}
    	        \multicolumn{9}{c}{\textbf{Nass}}  \\
    	        \hline \\
    	        $L_t$ & $n | m$ & 1 & 2 & 4 & 8 & 16 & 32 & 64 \\
    	        \hline \hline 
    	        $\mathcal{N}$ 
    	        & 250 &   $ 0.74 $&\cellcolor{g}$ 0.99 $&\cellcolor{g}$ 1.04 $&\cellcolor{g}$ 1.09 $&
    	        \cellcolor{g}$ 1.09 $&\cellcolor{g}$ 1.04 $&\cellcolor{g}$ 0.91 $ \\
    	        & 500 & $ 0.79 $&\cellcolor{g}$ 0.90 $&\cellcolor{g}$ 1.04 $&\cellcolor{g}$ 1.03 $&
    	        \cellcolor{g}$ 1.05 $&\cellcolor{g}$ 1.04 $&\cellcolor{g}$ 1.01 $ \\
    	         & 1000 & \cellcolor{vg}$ 0.86 $&\cellcolor{g}$ 0.95 $&\cellcolor{g}$ 0.98 $&
    	         \cellcolor{g}$ 1.06 $&\cellcolor{g}$ 1.04 $&\cellcolor{g}$ 1.09 $&\cellcolor{g}$ 1.08 $ \\   	        
    	        & 2000 &\cellcolor{g}$ 0.99 $&\cellcolor{g}$ 0.97 $&\cellcolor{g}$ 0.96 $&\cellcolor{g}$ 1.04 $&
    	        \cellcolor{g}$ 1.05 $&\cellcolor{g}$ 1.09 $&\cellcolor{vg}$ 1.11 $ \\
    	        \hline 
    	         T3 
    	        & 250 & \cellcolor{p} $ 12.14 $&\cellcolor{p} $ 18.22 $&\cellcolor{p} $ 19.82 $&\cellcolor{p} $ 16 $
    	        &\cellcolor{p} $ 13.33 $&\cellcolor{vp} $ 6.71 $&\cellcolor{vp} $ 2.60 $ \\
    	        & 500 &\cellcolor{p} $ 24.55 $&$ 31.22 $&$ 34.09 $&$ 32.98 $&\cellcolor{p} $ 24.34 $
    	        &\cellcolor{vp} $ 16.74 $&\cellcolor{vp} $ 6.12 $ \\
    	         & 1000 &$ 48.77 $&$ 62.13 $&$ 65.91 $&$ 64.52 $&$ 55.48 $&$ 39.21 $&\cellcolor{p} $ 19.71 $ \\  
    	        & 2000 &\cellcolor{g} $ 82.67 $&\cellcolor{g} $ 91.71 $&\cellcolor{g} $ 94.80 $&\cellcolor{g} $ 95.10 $&\cellcolor{g} 
    	        $ 91.81 $&\cellcolor{g} $ 81.75 $&$ 58.79 $ \\
    	         \hline 
    	         T5
    	        & 250 &\cellcolor{p} $ 14.62 $&\cellcolor{p} $ 17.46 $&\cellcolor{p} $ 19.05 $&\cellcolor{p} $ 16.11 $&
    	        \cellcolor{p} $ 12.88 $&\cellcolor{vp} $ 9.62 $&\cellcolor{vp} $ 5.31 $ \\
    	        & 500 &\cellcolor{p} $ 23.34 $&\cellcolor{p} $ 28.62 $&$ 30.15 $&\cellcolor{p} $ 28.05 $&\cellcolor{p} $ 23.26 $&
    	        \cellcolor{p} $ 16.82 $&\cellcolor{vp} $ 9.05 $ \\
    	         & 1000 &  $ 40.33 $&$ 47.04 $&$ 51.19 $&$ 50.85 $&$ 45.08 $&$ 34.89 $&\cellcolor{p} $ 20.55 $ \\    	        
    	        & 2000 & $ 67.08 $&\cellcolor{g}  $ 76.53 $&\cellcolor{g}$ 81.26 $&\cellcolor{g}$ 81.16 $&\cellcolor{g}$ 77.37 $&
    	        $ 67.37 $&$ 50.02 $ \\
    	        \hline 
    	         ST
    	        & 250 &\cellcolor{p}$ 29.59 $&$ 35.95 $&$ 39.63 $&$ 36.11 $&$ 31.47 $&\cellcolor{p}$ 21.02 $
    	        &\cellcolor{p} $10.08 $ \\
    	        & 500 &$ 51.02 $&$ 59.44 $&$ 63.81 $&$ 62.87 $&$ 55.77 $&$ 45.35 $&\cellcolor{p}$ 27.75 $ \\
    	         & 1000 &\cellcolor{g}$ 80.22 $&\cellcolor{g}$ 86.84 $&\cellcolor{g}$ 90.03 $
    	         &\cellcolor{g}$ 90.38 $&\cellcolor{g}$ 87.37 $&\cellcolor{g}$ 79.75 $&$ 63.28 $ \\ 	        
    	        & 2000 & \cellcolor{g}$ 97.76 $&\cellcolor{g}$ 99.28 $&\cellcolor{g}$ 99.60 $&\cellcolor{g}$ 99.76 $
    	        &\cellcolor{g}$ 99.56 $&\cellcolor{g}$ 98.81 $&\cellcolor{g}$ 95.99 $ \\
    	        \hline 	        
    	        \end{tabularx}
\caption{Backtesting $RV@R$ with $\beta = 0.001$ and $\alpha = 0.025$: Estimated size (for hypothesis $H_0$ with distribution
$\mathcal{N}$) and power in $\%$ (for the alternatives $H_1$ with distributions $T3$, $T5$, $ST$, respectively) for the Pearson test, Nass test and LRT. The size is represented as the fraction of the true size according to our simulations divided by the desired level $\kappa = 5\%$. The colouring scheme for the size is as follows: Values between $0.8 - 1.2$ are green, values between
$0.9 - 1.1$ are dark green; values above $1.5$ are red, above $2$ dark red. The colouring scheme for 
the power is adopted from \cite{KLM}: Green refers to a power $\geq 70\%$; light red indicates a power $\leq 30\%$; dark 
red indicates poor results with a power $\leq 10\%$. }\label{fig:RV@R_3}
\end{table}

\begin{table}
\begin{tabularx}{1.0\textwidth}{p{5mm} | X | X X X X X X X}
    	        \multicolumn{9}{c}{\textbf{Nass}}  \\
    	        \hline \\
    	        $L_t$ & $n | m$ & 1 & 2 & 4 & 8 & 16 & 32 & 64 \\
    	        \hline \hline 
    	        $\mathcal{N}$ 
    	        & 250 & $ 0.74 $&\cellcolor{g}$ 0.92 $&\cellcolor{g}$ 1.05 $&\cellcolor{vg}$ 0.89 $&\cellcolor{g}$ 1.08 $&\cellcolor{g}$ 1.08 $&\cellcolor{g}$ 0.90 $\\
    	        & 500 &\cellcolor{vg} $ 1.18 $&\cellcolor{g}$ 0.90 $&\cellcolor{g}$ 0.98 $&\cellcolor{g}$ 1.07 $&\cellcolor{g}$ 1.10 $&\cellcolor{g}$ 1.10 $&\cellcolor{g}$ 1.06 $ \\
    	         & 1000 &\cellcolor{g} $ 0.94 $&\cellcolor{g}$ 0.98 $&\cellcolor{g}$ 0.92 $&\cellcolor{g}$ 0.97 $&\cellcolor{g}$ 1.07 $&\cellcolor{g}$ 1.09 $&\cellcolor{g}$ 1.02 $ \\
    	        & 2000 &\cellcolor{g} $ 1.05 $&\cellcolor{g}$ 0.90 $&\cellcolor{g}$ 0.94 $&\cellcolor{g}$ 0.99 $&\cellcolor{g}$ 1.03 $&\cellcolor{g}$ 1.07 $&\cellcolor{g}$ 1.09 $\\
    	        \hline 
    	        T3 
    	        & 250 & $ 1.91 $&$ 2.95 $&$ 0.94 $&$ 0.72 $&$ 0.03 $&$ 0.01 $&$ 0.02 $  \\
    	        & 500 &$ 7.02 $&$ 5.10 $&$ 4.04 $&$ 2.05 $&$ 0.66 $&$ 0.11 $&$ 0.03 $   \\
    	         & 1000 & \cellcolor{g} $ 28.51 $& \cellcolor{vg}$ 19.27 $& \cellcolor{vg}$ 11.13 $&$ 5.38 $&$ 1.59 $&$ 0.46 $&$ 0.04 $ \\
    	        & 2000 & \cellcolor{g} $ 57.49 $& \cellcolor{g}$ 30.93 $& \cellcolor{vg}$ 15.93 $& \cellcolor{vg} $ 10.05 $&$ 6.55 $&$ 2.05 $&$ -0.05 $\\
    	        \hline 
    	        T5 
    	        & 250 & $ -0.54 $&$ 2.37 $&$ 0.37 $&$ 0.45 $&$ -0.03 $&$ 0.27 $&$ 0.05 $  \\
    	        & 500 & $ 0.93 $&$ 2 $&$ 2.92 $&$ 1.67 $&$ 0.32 $&$ 0.27 $&$ 0.03 $  \\
    	         & 1000 & \cellcolor{vg} $ 10.97 $&$ 7.56 $&$ 5.56 $&$ 2.52 $&$ 1.24 $&$ 0.16 $&$ 0.07 $ \\
    	        & 2000 &  \cellcolor{g}$ 21.63 $& \cellcolor{vg}$ 14.92 $& \cellcolor{vg}$ 10.96 $&$ 6 $&$ 2.70 $&$ 1.11 $&$ 0.43 $ \\
    	        \hline 
    	       ST
    	        & 250 &$ 3.18 $&$ 6.53 $&$ 1.45 $&$ 1.31 $&$ 0.19 $&$ 0.16 $&$ 0.12 $ \\
    	        & 500 & \cellcolor{vg}$ 10.53 $&$ 8.16 $&$ 8.03 $&$ 4.55 $&$ 2.06 $&$ -0.05 $&$ -0.27 $ \\
    	         & 1000 &  \cellcolor{g} $ 26.54 $&\cellcolor{vg}$ 15.32 $&$ 9.70 $&$ 6.60 $&$ 3.69 $&$ 1.21 $&$ 0.91 $  \\
    	        & 2000 &  \cellcolor{g}$ 25.50 $&\cellcolor{vg}$ 9.30 $&$ 4.29 $&$ 2.86 $&$ 3.18 $&$ 2.96 $&$ 0.99 $ \\
    	        \hline 
    	        \end{tabularx}
\caption{ Backtesting $RV@R$: Comparison of the results in Table~\ref{fig:RV@R_2} to the method of 
\cite{KLM}. The size is represented as the fraction of the true size according to our simulations divided by the desired level $\kappa = 5\%$.  The colouring scheme for the size is as follows: Values between $0.8 - 1.2$ are dark green, values between $0.9 - 1.1$ are dark green; values above $1.5$ are red, above $2$ dark red. 
    For the alternative $T3$, $T5$ and $ST$ , the table shows the difference of the power of 
    our method and the method of KLM. The colouring scheme for the power is as follows: 
    Dark green are notable improvements of the power $\geq 20\%$; light green are improvements 
    $\geq 10\%$.}
\label{fig:RV@R_2_KLMComp}
\end{table}

\begin{figure}
    \begin{minipage}{0.49\textwidth}
        \includegraphics[scale=0.5]{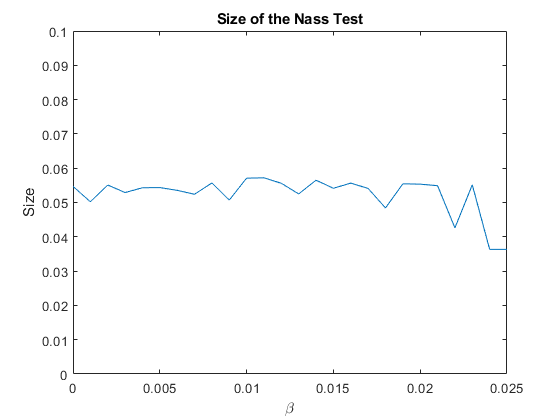}
    \end{minipage}
       \begin{minipage}{0.49\textwidth}
        \includegraphics[scale=0.5]{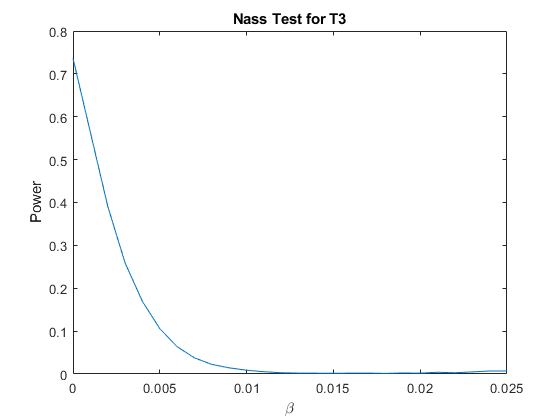}
    \end{minipage}   
     \begin{minipage}{0.49\textwidth}
        \includegraphics[scale=0.5]{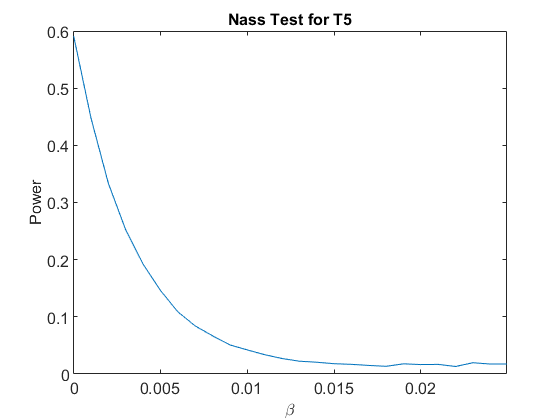}
    \end{minipage}
       \begin{minipage}{0.49\textwidth}
        \includegraphics[scale=0.5]{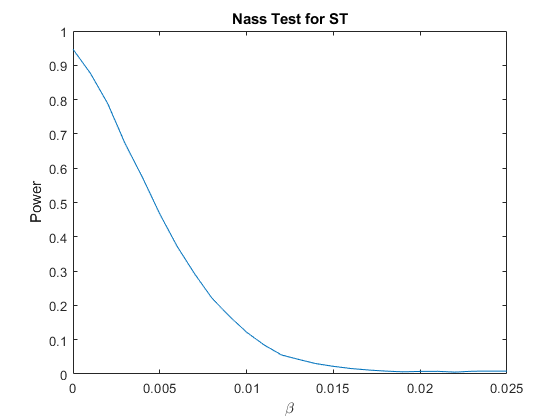}
    \end{minipage}
    \caption{Size and Power for the RV@R approaching the AV@R at level $0.025$}\label{fig:RV@R}
\end{figure}

\begin{figure}
    \begin{minipage}{0.49\textwidth}
        \includegraphics[scale=0.5]{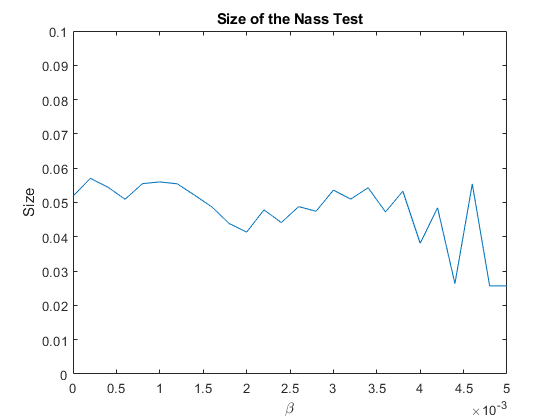}
    \end{minipage}
       \begin{minipage}{0.49\textwidth}
        \includegraphics[scale=0.5]{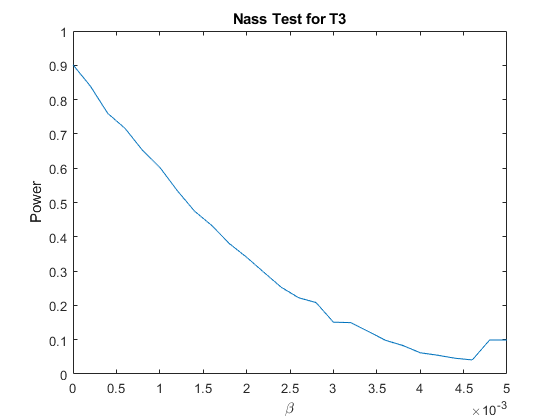}
    \end{minipage}   
     \begin{minipage}{0.49\textwidth}
        \includegraphics[scale=0.5]{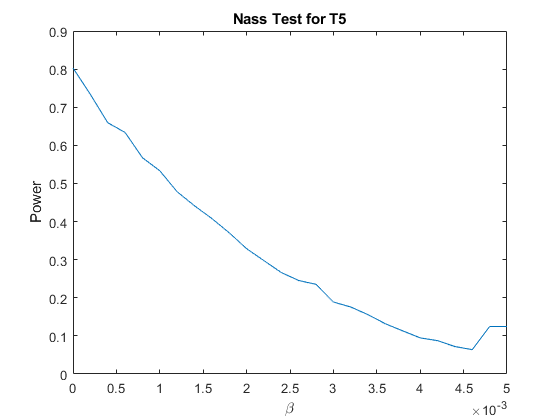}
    \end{minipage}
       \begin{minipage}{0.49\textwidth}
        \includegraphics[scale=0.5]{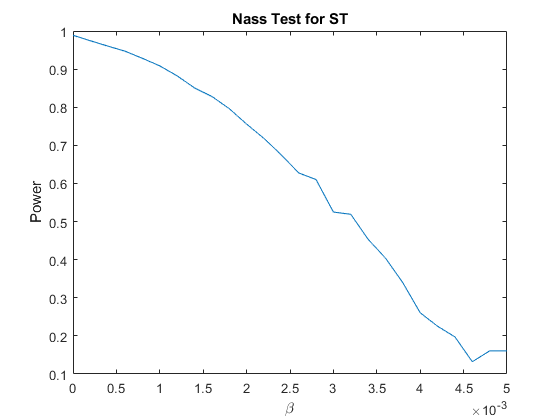}
    \end{minipage}
    \caption{Size and Power for the RV@R approaching the AV@R at level $0.005$}\label{fig:RV@R3}
\end{figure}

\begin{figure}
    \begin{minipage}{0.49\textwidth}
        \includegraphics[scale=0.5]{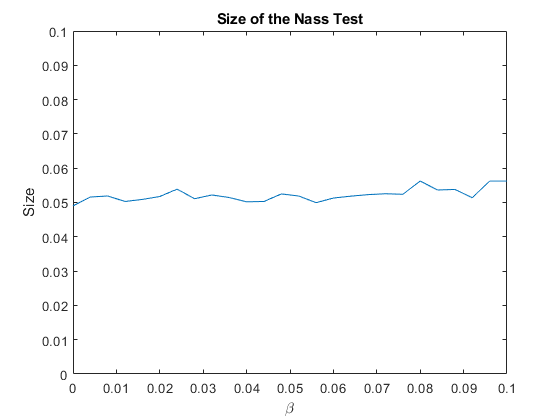}
    \end{minipage}
       \begin{minipage}{0.49\textwidth}
        \includegraphics[scale=0.5]{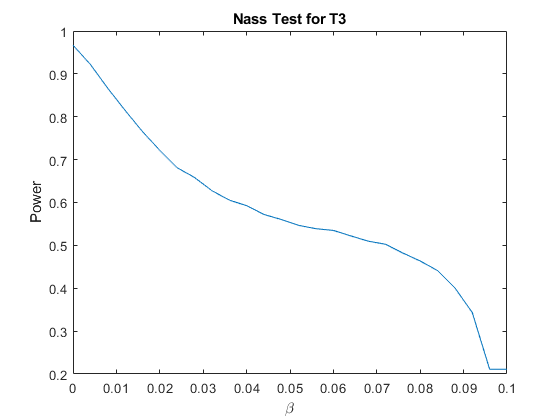}
    \end{minipage}   
     \begin{minipage}{0.49\textwidth}
        \includegraphics[scale=0.5]{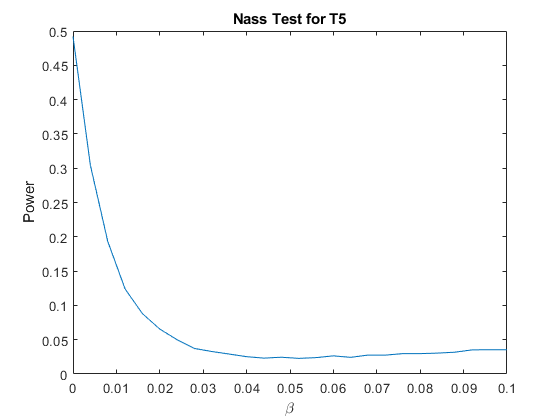}
    \end{minipage}
       \begin{minipage}{0.49\textwidth}
        \includegraphics[scale=0.5]{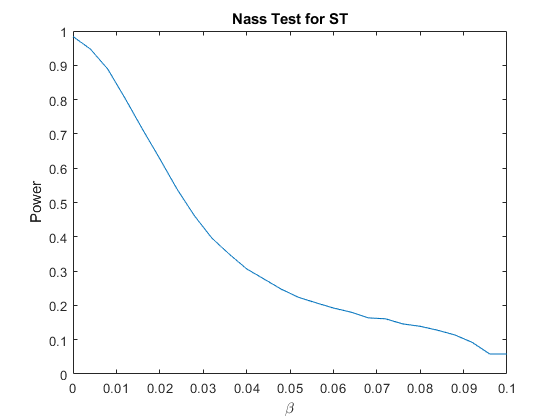}
    \end{minipage}
    \caption{Size and Power for the RV@R approaching the AV@R at level $0.1$}\label{fig:RV@R4}
\end{figure}

\subsection{S\&P 500}\label{sec:s&p500}

\begin{figure}[h!]
    \begin{minipage}{0.32\textwidth}
        \includegraphics[scale=0.2]{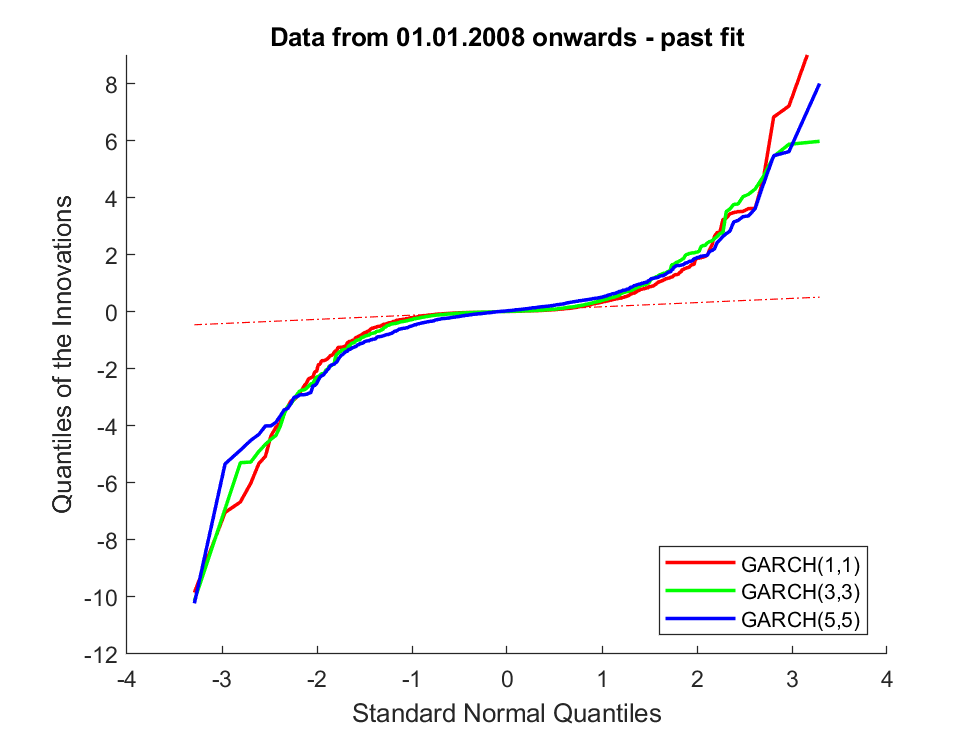}
    \end{minipage}
        \begin{minipage}{0.32\textwidth}
        \includegraphics[scale=0.2]{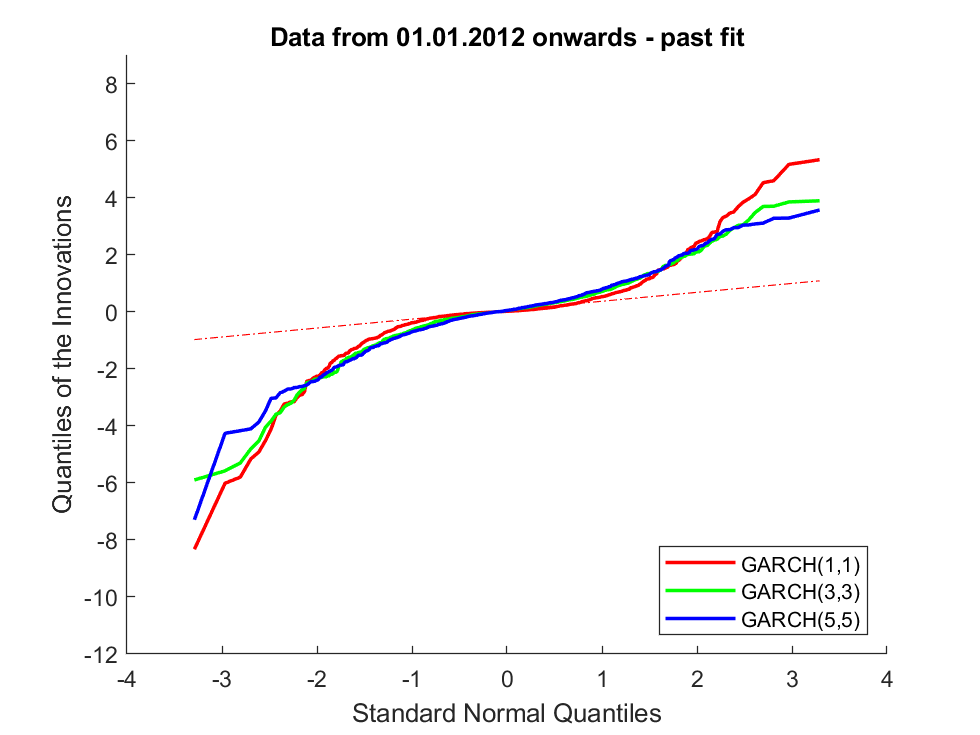}
    \end{minipage}
        \begin{minipage}{0.32\textwidth}
        \includegraphics[scale=0.2]{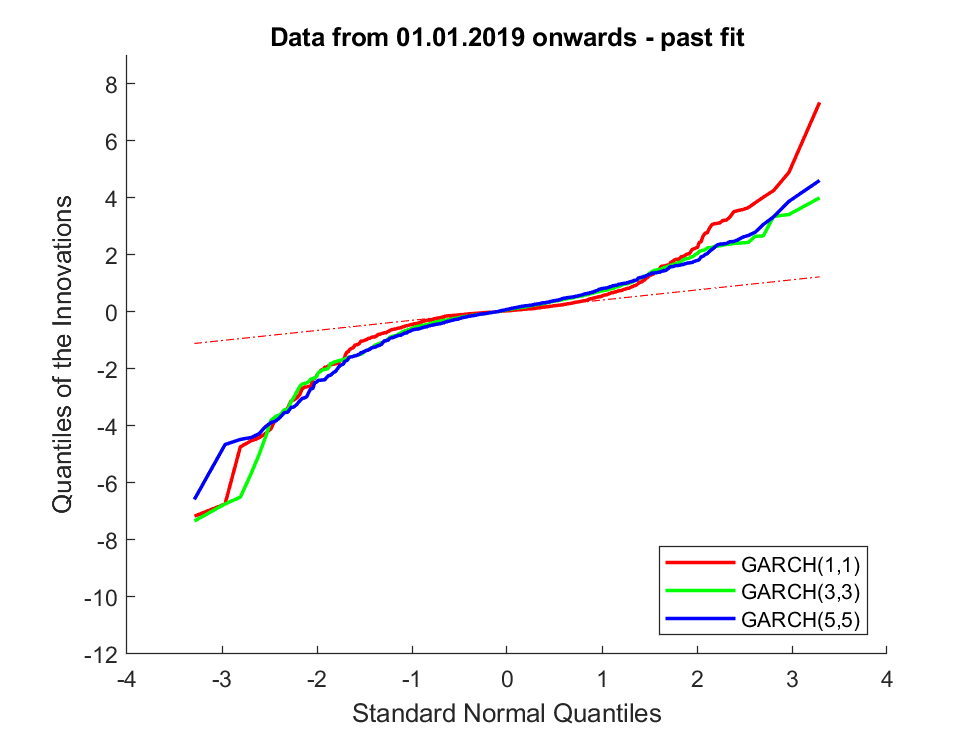}
    \end{minipage}
        \begin{minipage}{0.32\textwidth}
        \includegraphics[scale=0.2]{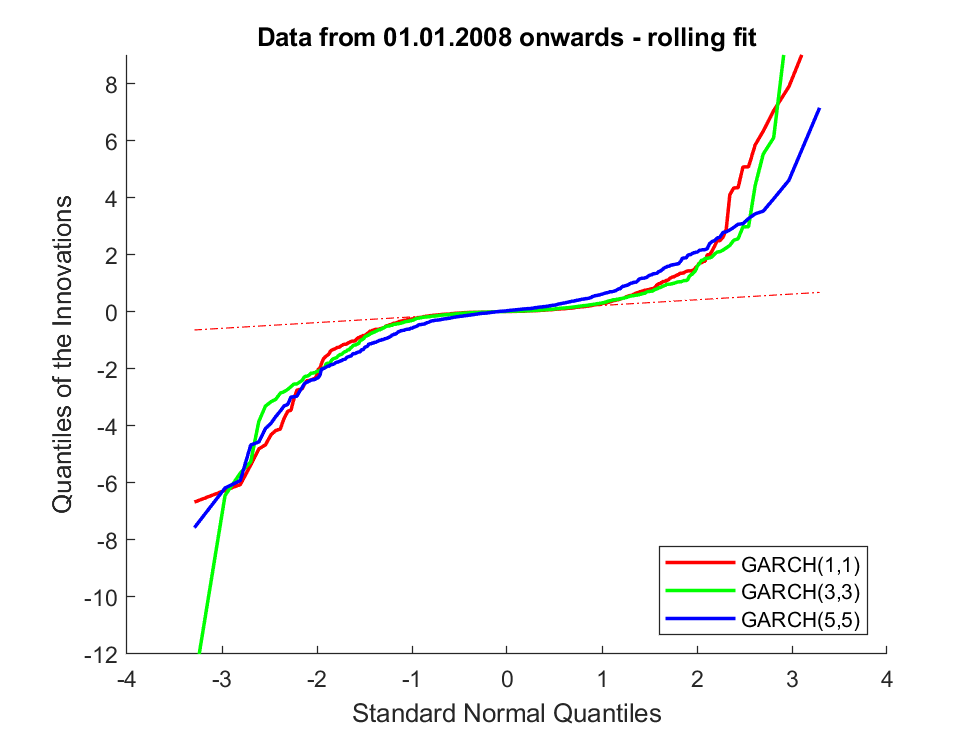}
    \end{minipage}
        \begin{minipage}{0.32\textwidth}
        \includegraphics[scale=0.2]{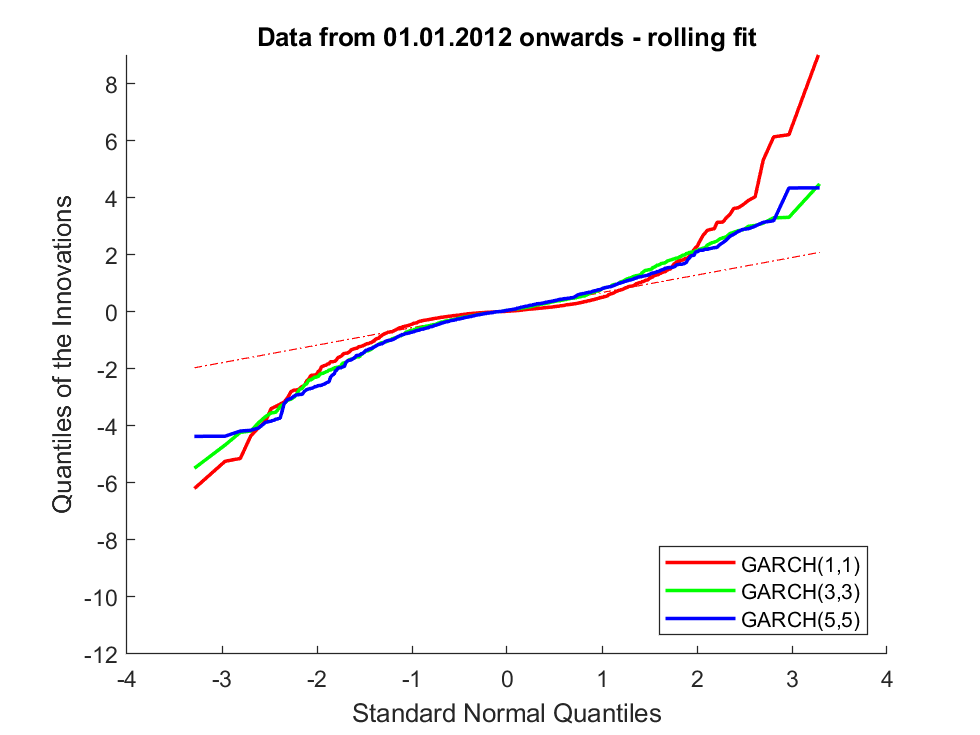}
    \end{minipage}
        \begin{minipage}{0.32\textwidth}
        \includegraphics[scale=0.2]{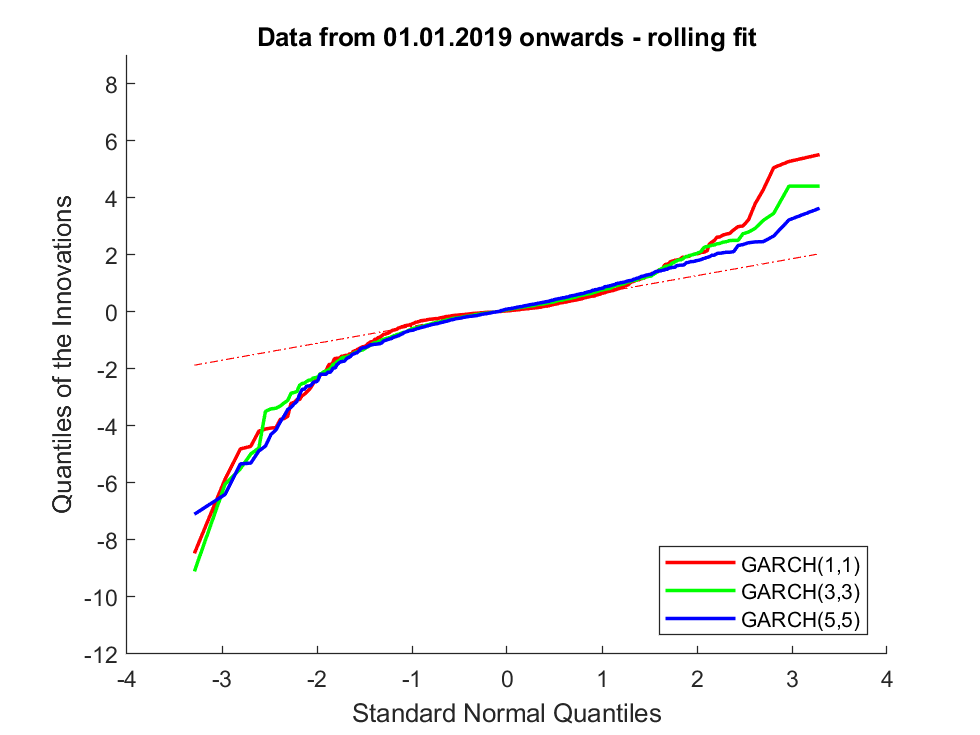}
    \end{minipage}
\caption{QQ-plots of the standard normal distributions and the normalized innovations of the fitted GARCH models
to the S\&P 500. The GARCH models are fitted on $500$ data points and the innovations are observed for the following $1000$ data points.  }\label{QQ-plots}
\end{figure}

We investigate the power of the backtests for log returns of the S\&P 500 during three different time periods, starting at the beginning of the years 2008, 2012 and 2019, respectively. Since we are focussing on losses, we consider the negative of the log returns, i.e., downside risk measures are statistical functionals of the upper tail of the corresponding distributions.

The exemplary time series models used are GARCH processes with standard normal innovations $GARCH(p,q)$ with $p=q$ and $p=1,3,5$. The model classes are nested; larger $p$ corresponds to a larger class and higher complexity.  We consider two different methodologies to fit the models and to apply our backtesting procedure. The first approach fits the GARCH-processes to the first 250 resp.~500 data points. The backtest is then applied to the 1000 following observations of the processes. The second approach does not assume that the true data-generating mechanism is a GARCH-process with constant parameters, but refits the processes at each point of time based on the latest 250 resp.~500 data points to produce a model for the next innovation. Again 1000 observations are used for the backtest.

QQ-plots demonstrate that innovations are not Gaussian and that the models are misspecified. We illustrate this for fits based on 500 data points in Figure~\ref{QQ-plots}. The deviations from normality in the upper tail are of different size when varying the time period, model, and fitting methodology. Risk measures backtests capture the importance of these discrepancies in terms of specific functionals of the downside risk. All fitted GARCH processes satisfy Bollerslev's condition for weak stationarity, cf. \cite{BOLLERSLEV1986307}, or also \cite{Lindner2009} for a concise survey of stationarity conditions.

We consider $AV@R$ and $GlueV@R$ with the same parameters as before. The probabilities of rejection (corresponding to the power) is estimated in $2000$ runs of the backtest for $m=8$ and a partition $\alpha_0 < \alpha_1 < \dots < \alpha_m < \alpha_{m+1} = 1$ such that
        $\alpha_i = i / (m+1) \cdot \alpha$, $i \in \{1, \dots, m \}$.

\begin{table}[h!]
    \centering
    \begin{tabularx}{1.0\textwidth}{p{25mm} | X X  | X X  | X X  }
        \multicolumn{7}{c}{\textbf{Nass}}  \\
         &  \multicolumn{2}{c}{01.01.2008} & \multicolumn{2}{c}{01.01.2012} & \multicolumn{2}{c}{01.01.2019} \\
          Model / Fit &  $250$ & $500 $ & $250$ & $500 $& $250$ & $500 $  \\
         \hline  
         $GARCH(1,1)$ & $1$ & $1$  & $1$ & $1$ & $1$& $1$ \\
         $GARCH(3,3)$ &  $1$ & $1$ & $1$ & $0.13$  & $0.28$ & $1$  \\
         $GARCH(5,5)$ & $1$ & $1$   & $0$ & $0.07$  & $0.03$ & $0$   \\
         \hline
         \hline      
    	\end{tabularx}
     \caption{Rejection probabilities of the backtest on log returns of the S\&P500 versus GARCH models fitted {\bf on past 
     observations from a fixed time} point with respect to the $AV@R_{0.975}$. }\label{TabSP1}
    \centering\vspace{10pt}
    \begin{tabularx}{1.0\textwidth}{p{25mm} | X X  | X X  | X X  }
        \multicolumn{7}{c}{\textbf{Nass}}  \\
         &  \multicolumn{2}{c}{01.01.2008} & \multicolumn{2}{c}{01.01.2012} & \multicolumn{2}{c}{01.01.2019} \\
        Model / Fit &  $250$ & $500 $ & $250$ & $500 $ & $250$ & $500 $  \\
         \hline  
         $GARCH(1,1)$ & $1$ & $1$   & $1$ & $1$  & $1$ & $1$    \\
         $GARCH(3,3)$ & $1$ & $1$ & $1$ & $1$  & $0.09$ & $0.48$   \\
         $GARCH(5,5)$ &$1$ &$1$   & $1$ & $1$  & $0.67$ & $0$    \\
         \hline
         \hline      
    	\end{tabularx}
     \caption{Rejection probabilities of the backtest on log returns of the S\&P500 versus GARCH models fitted {\bf with 
     a rolling collection of observations} with respect to the $AV@R_{0.975}$.}\label{TabSP2}
    \centering\vspace{10pt}
    \begin{tabularx}{1.0\textwidth}{p{25mm} | X X  | X X  | X X  }
          \multicolumn{7}{c}{\textbf{Nass}}  \\
         &  \multicolumn{2}{c}{01.01.2008} & \multicolumn{2}{c}{01.01.2012} & \multicolumn{2}{c}{01.01.2019} \\
          Model / Fit &  $250$ & $500 $ & $250$ & $500 $ & $250$ & $500 $  \\
         \hline  
         $GARCH(1,1)$ & $1$ & $1$  &   $1$ & $1$   & $1$  & $1$   \\
         $GARCH(3,3)$ & $1$ &  $1$ &   $0.87$ & $0.07$ & $0.01$ & $1$ \\
         $GARCH(5,5)$& $1$ & $1$  &  $0$ & $0$   & $0$ & $0$  \\
         \hline
         \hline      
    	\end{tabularx}
     \caption{Rejection probabilities of the backtest on log returns of the S\&P500 versus GARCH models fitted {\bf on past 
     observations from a fixed time point} with respect to the $GlueV@R^{h_1, h_2}_{\beta, \alpha}$, where $h_1 = 2/5$, $h_2 = 
     2/3$, $\alpha = 0.05$, $\beta = 0.01$.}\label{TabSP3}
    \centering\vspace{10pt}
    \begin{tabularx}{1.0\textwidth}{p{25mm} | X X  | X X | X  X }
        \multicolumn{7}{c}{\textbf{Nass}}  \\
         &  \multicolumn{2}{c}{01.01.2008} & \multicolumn{2}{c}{01.01.2012} & \multicolumn{2}{c}{01.01.2019} \\
         Model / Fit &  $250$ & $500 $ & $250$ & $500 $ & $250$ & $500 $  \\
         \hline  
         $GARCH(1,1)$& $1$ & $1$ & $1$ & $1$  & $1$ & $1$   \\
         $GARCH(3,3)$ &$1$ & $1$ & $1$ &$1$    &$1$ &$1$    \\
         $GARCH(5,5)$ & $1$ & $1$  & $1$ &$1$    & $1$ & $1$   \\
         \hline
         \hline      
    	\end{tabularx}
     \caption{Rejection probabilities of the backtest on log returns of the S\&P500 versus GARCH models fitted {\bf with a rolling collection of observations} with respect to the 
     $GlueV@R^{h_1, h_2}_{\beta, \alpha}$, where $h_1 = 2/5$, $h_2 =  2/3$, $\alpha = 0.05$, $\beta = 0.01$.}\label{TabSP4}
\end{table}

Tables \ref{TabSP1}, \ref{TabSP2}, \ref{TabSP3} \& \ref{TabSP4} show in most cases the null hypothesis is rejected, especially always during the financial crisis corresponding to the time period starting at 2008. Only during the other periods for $p\geq 3$ rejection probabilities are low in some cases corresponding to smaller deviations from normality in the upper tail. The differences between the two fitting methods make misspecification even more apparent.

\begin{table}
\begin{tabularx}{1.0\textwidth}{p{10mm} | X | X X X X X X X}
    	        \multicolumn{9}{c}{\textbf{Nass}}  \\
    	        \hline \\
    	        $L_t$ & $n | m$ & 1 & 2 & 4 & 8 & 16 & 32 & 64 \\
    	        \hline \hline 
    	        Size
    	        & 250 &  \cellcolor{vg}$ 1.11 $&$ 1.47 $&  \cellcolor{vg}$ 1.20 $& \cellcolor{g}$ 1.07 $&  \cellcolor{g}$ 1.07 $& \cellcolor{vg} $ 1.18 $&  \cellcolor{vg} $ 1.11 $ \\
    	        & 500 &\cellcolor{p}$ 1.68 $&$ 1.36 $&$ 1.25 $& \cellcolor{vg} $ 1.19 $&$ 1.26 $&$ 1.21 $&$ 1.21 $ \\
    	         & 1000 & \cellcolor{p}$ 1.73 $&\cellcolor{p}$ 1.54 $&$ 1.35 $&$ 1.21 $&  \cellcolor{vg}$ 1.20 $& \cellcolor{vg} $ 1.17 $& \cellcolor{g} $ 1.06 $ \\    	        
    	        & 2000 &\cellcolor{p} $ 1.90 $&$ 1.35 $&$ 1.28 $& \cellcolor{vg} $ 1.18 $& \cellcolor{vg}$ 1.16 $&$ 1.21 $& \cellcolor{vg} $ 1.18 $ \\
    	        \hline 
    	         NB 
    	        & 250 &$ 66.60 $&$ 65.23 $&$ 63.64 $&$ 62.22 $&$ 58.74 $&$ 58.74 $&$ 59.65 $ \\
    	        & 500 &\cellcolor{g} $ 87.85 $&\cellcolor{g}$ 88.07 $&\cellcolor{g}$ 84.25 $&\cellcolor{g}$ 83.95 $&
    	        \cellcolor{g}$ 81.84 $&\cellcolor{g}$ 79.24 $&\cellcolor{g}$ 81.95 $ \\
    	         & 1000 &\cellcolor{g}$ 98.82 $&\cellcolor{g}$ 98.87 $&\cellcolor{g}$ 98.30 $&\cellcolor{g}$ 98 $&
    	         \cellcolor{g}$ 96.69 $&\cellcolor{g}$ 96.24 $&\cellcolor{g}$ 95.89 $ \\    	        
    	        & 2000 &\cellcolor{g}$ 99.98 $&\cellcolor{g}$ 99.99 $&\cellcolor{g}$ 99.99 $&\cellcolor{g}$ 99.98 $&\cellcolor{g}$ 99.96 $&
    	        \cellcolor{g}$ 99.93 $&\cellcolor{g}$ 99.81 $ \\
    	         \hline 
    	         PAR
    	        & 250 & \cellcolor{g}$ 95.40 $&\cellcolor{g}$ 97.19 $&\cellcolor{g}$ 98.29 $&\cellcolor{g}$ 98.04 $&\cellcolor{g}$ 97.64 $&
    	       \cellcolor{g} $ 97.59 $&\cellcolor{g}$ 96.92 $ \\
    	        & 500 &\cellcolor{g} $ 99.86 $&\cellcolor{g}$ 100 $&\cellcolor{g}$ 99.94 $&\cellcolor{g}$ 99.94 $&\cellcolor{g}$ 99.98 $&
    	        \cellcolor{g}$ 99.96 $&\cellcolor{g}$ 99.96 $ \\
    	         & 1000 & \cellcolor{g}$ 100 $&\cellcolor{g}$ 100 $&\cellcolor{g}$ 100 $&\cellcolor{g}$ 100 $&\cellcolor{g}$ 100 $&
    	         \cellcolor{g}$ 100 $&\cellcolor{g}$ 100 $ \\	        
    	        & 2000 &\cellcolor{g} $ 100 $&\cellcolor{g}$ 100 $&\cellcolor{g}$ 100 $&\cellcolor{g}$ 100 $&\cellcolor{g}$ 100 $&
    	        \cellcolor{g}$ 100 $&\cellcolor{g}$ 100 $ \\
    	        \hline 
    	        LOGN
    	        & 250 & $ 46.91 $&$ 49.27 $&$ 49.89 $&$ 47.54 $&$ 46.90 $&$ 46.55 $&$ 44.87 $ \\
    	        & 500 & \cellcolor{g}$ 72.78 $&$ 69.83 $&\cellcolor{g}$ 70.99 $&$ 69.46 $&$ 69.24 $&\cellcolor{g}$ 70.11 $&$ 67.91 $ \\
    	         & 1000 & \cellcolor{g}$ 91.83 $&\cellcolor{g}$ 92.73 $&\cellcolor{g}$ 92.77 $&\cellcolor{g}$ 92.04 $&\cellcolor{g}$ 91.53 $
    	         &\cellcolor{g}$ 90.53 $&\cellcolor{g}$ 89.30 $ \\      
    	        & 2000 &\cellcolor{g} $ 99.50 $&\cellcolor{g}$ 99.68 $&\cellcolor{g}$ 99.62 $&\cellcolor{g}$ 99.68 $&\cellcolor{g}$ 99.50 $&
    	        \cellcolor{g}$ 99.50 $&\cellcolor{g}$ 99.05 $ \\
    	        \hline 	        
    	        \end{tabularx}
\caption{ALM Backtest for $AV@R_{\alpha}$ with $\alpha = 0.05$. The size is represented as fraction of estimated size divided
by the desired level $\kappa = 0.05$. Values of the size between $0.9 - 1.1$ are dark green, between $0.8 - 1.2$ are light green, above $1.5$ are red and above $2$ dark red. For the power green refers to a power $\geq 70\%$; light red to a power $\leq 30\%$
and dark red indicates a power $\leq 10\%$.}\label{ALM-AV@R}
\end{table}

\subsection{Some Further Results in the Context of ALM}\label{app:ALM}.

We repeat the case studies for $AV@R$. The parameters are chosen as in the $GlueV@R$-ALM simulation, and the partition for the risk measure test statistics is set as in the distribution simulations.  Qualitatively the results are similar as for $GlueV@R$ and shown in Table~\ref{ALM-AV@R}.

Since the true size of the test deviates from the a priori targeted level, since its construction is based on asymptotic results, we construct alternative tests which are not based on the target size $0.05$ but on a smaller size $\kappa=0.025$ at the expense of deterioration of the power. This is displayed in Tables~\ref{ALMapp_size_1} \& \ref{ALMapp_size_2}. The realized size is in most case smaller than the targeted size, and also the power of the tests not reduced by a large amount. The strategy of using a smaller than target size for the asymptotic construction appears to be a viable method.

\begin{table}
\begin{tabularx}{1.0\textwidth}{p{10mm} | X | X X X X X X X}
    	        \multicolumn{9}{c}{\textbf{Nass}}  \\
    	        \hline \\
    	        $L_t$ & $n | m$ & 1 & 2 & 4 & 8 & 16 & 32 & 64 \\
    	        \hline \hline 
    	        Size
    	        & 250 & $ 0.57 $& \cellcolor{vb} $ 0.91 $& \cellcolor{vb}$ 0.93 $&$ 0.62 $&$ 0.60 $&$ 0.72 $&$ 0.57 $ \\
    	        & 500 & \cellcolor{b}$ 1.12 $&$ 0.85 $&$ 0.68 $&$ 0.66 $&$ 0.68 $&$ 0.74 $&$ 0.67 $ \\
    	         & 1000 &  $ 1.22 $& \cellcolor{b}$ 0.89 $& \cellcolor{b}$ 0.82 $&$ 0.70 $&$ 0.62 $&$ 0.65 $&$ 0.72 $ \\   	        
    	        & 2000 &$ 1.24 $& \cellcolor{vb}$ 0.95 $&$ 0.75 $&$ 0.65 $&$ 0.69 $&$ 0.59 $&$ 0.67 $ \\
    	        \hline 
    	         NB 
    	        & 250 & $ 60.81 $&$ 56.89 $&$ 55.63 $&$ 54 $&$ 50.71 $&$ 49.48 $&$ 49.31 $ \\
    	        & 500 &\cellcolor{g}$ 86.17 $&\cellcolor{g}$ 85.37 $&\cellcolor{g}$ 79.84 $&\cellcolor{g}$ 78.95 $&\cellcolor{g}$ 75.99 $&
    	        \cellcolor{g}$ 72.64 $&\cellcolor{g}$ 75.97 $ \\
    	         & 1000 &\cellcolor{g}$ 98.29 $&\cellcolor{g}$ 98.09 $&\cellcolor{g}$ 97.40 $&\cellcolor{g}$ 96.61 $
    	         &\cellcolor{g}$ 94.81 $&\cellcolor{g}$ 94.31 $&\cellcolor{g}$ 93.30 $ \\
    	        & 2000 &\cellcolor{g}$ 99.99 $&\cellcolor{g}$ 99.98 $&\cellcolor{g}$ 99.98 $&\cellcolor{g}$ 99.95 $&\cellcolor{g}$ 99.92 $&
    	        \cellcolor{g}$ 99.83 $&\cellcolor{g}$ 99.64 $ \\
    	         \hline 
    	         PAR
    	        & 250 &\cellcolor{g}$ 93.59 $&\cellcolor{g}$ 95.62 $&\cellcolor{g}$ 97.34 $&\cellcolor{g}$ 97.08 $&\cellcolor{g}$ 97.04 $&
    	        \cellcolor{g}$ 96.99 $&\cellcolor{g}$ 95.84 $ \\
    	        & 500 &\cellcolor{g} $ 99.83 $&\cellcolor{g}$ 99.91 $&\cellcolor{g}$ 99.88 $&\cellcolor{g}$ 99.93 $&\cellcolor{g}$ 99.94 $&
    	        \cellcolor{g}$ 99.93 $&\cellcolor{g}$ 99.92 $ \\
    	         & 1000 &\cellcolor{g}$ 100 $&\cellcolor{g}$ 100 $&\cellcolor{g}$ 100 $&\cellcolor{g}$ 100 $&\cellcolor{g}$ 100 $
    	         &\cellcolor{g}$ 100 $&\cellcolor{g}$ 100 $ \\
    	        & 2000 &\cellcolor{g}$ 100 $&\cellcolor{g}$ 100 $&\cellcolor{g}$ 100 $&\cellcolor{g}$ 100 $&\cellcolor{g}$ 100 $
    	        &\cellcolor{g}$ 100 $&\cellcolor{g}$ 100 $ \\
    	        \hline 
    	        LOGN
    	        & 250 &  $ 40.21 $&$ 39.43 $&$ 43.04 $&$ 40.07 $&$ 38.90 $&$ 38.12 $&$ 36.79 $ \\
    	        & 500 &$ 66.71 $&$ 65.37 $&$ 64.89 $&$ 63.40 $&$ 62.49 $&$ 61.93 $&$ 60.08 $ \\
    	         & 1000 &\cellcolor{g}$ 89.83 $&\cellcolor{g}$ 89.96 $&\cellcolor{g}$ 89.24 $&\cellcolor{g}$ 88.46 $&
    	         \cellcolor{g}$ 88.86 $&\cellcolor{g}$ 87.85 $&\cellcolor{g}$ 86 $ \\	        
    	        & 2000 &\cellcolor{g} $ 99.15 $&\cellcolor{g}$ 99.24 $&\cellcolor{g}$ 99.24 $&\cellcolor{g}$ 99.43 $&\cellcolor{g}$ 99.03 $&
    	        \cellcolor{g}$ 99.22 $&\cellcolor{g}$ 98.66 $ \\ 
    	        \hline 	        
    	        \end{tabularx}
\caption{ALM Backtest for $AV@R_{\alpha}$ with $\alpha = 0.05$. The level of the tests is set to $\kappa = 0.025$. The estimated size is represented as fraction to $0.05$. Values of the size between $0.9 - 1.1$ are dark blue, between $0.8 - 1.2$ are light blue. For the power green refers to a power $\geq 70\%$; light red to a power $\leq 30\%$
and dark red indicates a power $\leq 10\%$.}\label{ALMapp_size_1}
\end{table}

\begin{table}
\begin{tabularx}{1.0\textwidth}{p{10mm} | X | X X X X X X X}
    	        \multicolumn{9}{c}{\textbf{Nass}}  \\
    	        \hline \\
    	        $L_t$ & $n | m$ & 1 & 2 & 4 & 8 & 16 & 32 & 64 \\
    	        \hline \hline 
    	        Size
    	        & 250 & \cellcolor{vb}$ 1.08 $& \cellcolor{vb}$ 0.99 $& \cellcolor{vb}$ 0.95 $&$ 0.69 $&$ 0.66 $&$ 0.74 $&$ 0.53 $ \\
    	        & 500 & \cellcolor{vb}$ 1.03 $&$ 0.79 $&$ 0.69 $&$ 0.67 $&$ 0.69 $&$ 0.62 $&$ 0.63 $ \\
    	         & 1000 & \cellcolor{b}$ 1.17 $& \cellcolor{vb}$ 0.93 $&$ 0.71 $&$ 0.66 $&$ 0.66 $&$ 0.63 $&$ 0.75 $ \\
    	        & 2000 & $ 1.33 $& \cellcolor{vb}$ 0.92 $&$ 0.74 $&$ 0.68 $&$ 0.68 $&$ 0.57 $&$ 0.71 $ \\
    	        \hline 
    	         NB 
    	        & 250 &$ 58.97 $&$ 60.05 $&$ 56.49 $&$ 54.13 $&$ 49.21 $&$ 49.72 $&$ 49.94 $ \\
    	        & 500 &\cellcolor{g} $ 86.05 $&\cellcolor{g}$ 86.23 $&\cellcolor{g}$ 79.43 $&\cellcolor{g}$ 78.97 $&\cellcolor{g}$ 76.74 $&
    	        \cellcolor{g}$ 72.93 $&\cellcolor{g}$ 75.03 $ \\
    	         & 1000 &\cellcolor{g} $ 98.79 $&\cellcolor{g}$ 98.37 $&\cellcolor{g}$ 97.97 $&\cellcolor{g}$ 96.91 $&\cellcolor{g}$ 94.86 $&
    	         \cellcolor{g}$ 94.55 $&\cellcolor{g}$ 93.93 $ \\  	        
    	        & 2000 &\cellcolor{g}$ 99.98 $&\cellcolor{g}$ 99.97 $&\cellcolor{g}$ 99.97 $&\cellcolor{g}$ 99.93 $&\cellcolor{g}$ 99.90 $&
    	        \cellcolor{g}$ 99.86 $&\cellcolor{g} $ 99.72 $ \\
    	         \hline 
    	         PAR
    	        & 250 &\cellcolor{g}$ 94.68 $&\cellcolor{g}$ 98.03 $&\cellcolor{g}$ 97.40 $&\cellcolor{g}$ 96.81 $&\cellcolor{g}$ 96.69 $&
    	        \cellcolor{g}$ 96.80 $&\cellcolor{g}$ 96.26 $ \\
    	        & 500 &\cellcolor{g}$ 99.95 $&\cellcolor{g}$ 99.96 $&\cellcolor{g}$ 99.89 $&\cellcolor{g}$ 99.94 $&\cellcolor{g}$ 99.94 $
    	        &\cellcolor{g}$ 99.94 $&\cellcolor{g}$ 99.90 $ \\
    	         & 1000 &\cellcolor{g}$ 100 $&\cellcolor{g}$ 100 $&\cellcolor{g}$ 100 $&\cellcolor{g}$ 100 $&\cellcolor{g}$ 100 $&
    	         \cellcolor{g}$ 100 $&\cellcolor{g}$ 100 $ \\     
    	        & 2000 &\cellcolor{g}$ 100 $&\cellcolor{g}$ 100 $&\cellcolor{g}$ 100 $&\cellcolor{g}$ 100 $&\cellcolor{g}$ 100 $
    	        &\cellcolor{g}$ 100 $&\cellcolor{g}$ 100 $ \\
    	        \hline 
    	        LOGN
    	        & 250 & $ 40.75 $&$ 45.98 $&$ 43.49 $&$ 40.20 $&$ 38.59 $&$ 38.37 $&$ 36.38 $ \\
    	        & 500 & $ 69.05 $&$ 67.28 $&$ 64.63 $&$ 63.81 $&$ 62.54 $&$ 62.07 $&$ 59.76 $ \\
    	         & 1000 &\cellcolor{g}$ 91.99 $&\cellcolor{g}$ 91.47 $&\cellcolor{g}$ 89.80 $&\cellcolor{g}$ 89.11 $
    	         &\cellcolor{g}$ 87.65 $&\cellcolor{g}$ 86.83 $&\cellcolor{g}$ 85.32 $ \\   	        
    	        & 2000 &\cellcolor{g} $ 99.51 $&\cellcolor{g}$ 99.60 $&\cellcolor{g}$ 99.19 $&\cellcolor{g}$ 99.41 $
    	        &\cellcolor{g}$ 99.25 $&\cellcolor{g}$ 98.96 $&\cellcolor{g}$ 98.84 $ \\
    	        \hline 	        
    	        \end{tabularx}
\caption{ALM Backtest for the $GlueV@R$ with $h_1 = 2/5$, $h_2 = 2/3$, $\alpha = 0.05$, $\beta = 0.01$. he level of the tests is set to $\kappa = 0.025$. The estimated size is represented as fraction to $0.05$. Values of the size between $0.9 - 1.1$ are dark blue, between $0.8 - 1.2$ are light blue. For the power green refers to a power $\geq 70\%$; light red to a power $\leq 30\%$
and dark red indicates a power $\leq 10\%$.}\label{ALMapp_size_2}
\end{table}

\end{document}